\newtheorem{theorem}{Theorem}[section]
\newtheorem{lemma}[theorem]{Lemma}
\newtheorem{corollary}[theorem]{Corollary}
\newtheorem{proposition}[theorem]{Proposition}
\theoremstyle{definition}
\newtheorem{definition}[theorem]{Definition}
\date{\today}
\title{Algebraic Proof Theory for Infinitary Action Logic}
\newcommand\under{\backslash}
\newcommand\ovr{/}
\newcommand\lresidual{\backslash}
\newcommand\rresidual{/}
\newcommand\Lresidual{\backslash\kern-0.5ex\backslash}
\newcommand\Rresidual{/ \kern-0.5ex /}
\newcommand\NN{\mathbb{N}}
\newcommand\set[1]{\left\{#1\right\}}
\newcommand\AL{\mathsf{AL}}
\newcommand\ALstar{\mathsf{AL}^*}
\newcommand\gentzenContinuousActionLattice{\mathrm{G}^\infty\mathsf{ACT}^*}
\newcommand\gentzenContinuousActionLatticeOmega{\mathrm{G}^\omega\mathsf{ACT}^*}
\DeclareMathOperator\cut{\mathrm{Cut}}
\newcommand\dualAlgebra{\mathbf{W}^+}
\DeclareMathOperator\term{Fm}
\DeclareMathOperator\termAlg{\mathbf{Fm}}
\DeclareMathOperator\Seq{\mathbf{Seq}}
\DeclareMathOperator\tail{\mathsf{tl}}
\DeclareMathOperator\height{hg}
\DeclareMathOperator\length{lg}
\DeclareMathOperator\branches{Br}
\DeclareMathOperator\Projection{Pr}
\DeclareMathOperator\Om{Om} 
\newcommand\nwProof{\mathbb{P}^\infty}
\newcommand\wProof{\mathbb{P}^\omega}
\newcommand\restricts{\mathord{\upharpoonright}}
\DeclareMathOperator\path{P}
\DeclareMathOperator\criticalHeight{chg}
\DeclareMathOperator\progressPoints{\mathrm{PP}}
\DeclareMathOperator\nodes{Node}
\DeclareMathOperator\domain{Dom}
\DeclareMathOperator\image{Im}
\newcommand{\Var}{\mathrm{Var}}
\newcommand{\FVar}{\mathrm{FVar}}
\newcommand{\SVar}{\mathrm{SVar}}
\newcommand{\m}{\mathbf}
\newcommand{\meet}{\wedge}
\newcommand{\join}{\vee}
\DeclareMathOperator\mwffAlg{\mathbf{MFm}}
\author[1]{Wesley Fussner}
\author[2]{Simon Santschi}
\author[3]{Borja Sierra Miranda}
\affil[1]{Institute of Computer Science of the Czech Academy of Sciences, Czechia}
\affil[2]{Institute of Mathematics, University of Bern, Switzerland}
\affil[3]{Institute of Informatics, University of Bern, Switzerland}
\begin{document}

\maketitle

\begin{abstract}
We exhibit a uniform method for obtaining (wellfounded and non-wellfounded) cut-free sequent-style proof systems that are sound and complete for various classes of action algebras, i.e., Kleene algebras enriched with meets and residuals. Our method applies to any class of $*$-continuous action algebras that is defined, relative to the class of all $*$-continuous action algebras, by analytic quasiequations. The latter make up an expansive class of conditions encompassing the algebraic analogues of most well-known structural rules. These results are achieved by wedding existing work on non-wellfounded proof theory for action algebras with tools from algebraic proof theory.
\end{abstract}

\section{Introduction}

The equational logic of regular expressions has been of widespread interest since Kleene's pioneering work on the subject. Kleene algebras were first axiomatized by Kozen in order to provide a finitary presentation of this theory, following decades of work by J.H. Conway, V.N. Redko, and others; see \cite{Redko1964,C1971,Koz1991,Koz1994a}. \emph{Action lattices} \cite{P1990,Koz1994b,J2004} extend Kleene algebras by a pair of division-like operations $\under$ and $\ovr$, often called \emph{residuals}, as well as an operation $\wedge$ capturing binary infima. In addition to addressing several theoretical concerns, the inclusion of $\under,\ovr,\wedge$ in the language places action lattices under the same umbrella as substructural logics and linear logic (see \cite{MPT23}), where residuals are used to model implication and $\wedge$ models weak conjunction. 

Against this backdrop, there has recently been substantial work on action lattices with \emph{hypotheses}---i.e., classes of action lattices defined relative to all action lattices by quasiequations; see \cite{DKPP2019,KPS2024}. The present study provides a uniform methodology for obtaining cut-free proof systems for such classes of action lattices with hypotheses, subject to the conditions that (1) the action lattices are $*$-continuous and (2) the defining quasiequations are \emph{analytic} (see Section~\ref{sec:cut-free completeness}). In particular, building on the work of Das and Pous \cite{DP2018}, for each class $\mathsf{K}$ of $*$-continuous action lattices defined by analytic quasiequations, we exhibit two equivalent cut-free proof systems that are each sound and complete for $\mathsf{K}$; see Theorem~\ref{thm:main}. As a consequence, we prove that each class of $*$-continuous action lattices defined by analytic quasiequations is closed under MacNeille completions; see Corollary~\ref{cor:MacNeille}.

The aforementioned results are obtained by a combination of proof theoretic and algebraic methods. In Section~\ref{sec:proof systems}, we introduce the two previously mentioned classes of proof systems---one comprising a class of wellfounded systems and the other a class of non-wellfounded proof systems---and further show that these systems are mutually interpretable. The translation between the two systems is an essential ingredient to showing soundness for the non-wellfounded system. The completeness of the systems is subsequently established in Section~\ref{sec:cut-free completeness}, and draws on recent developments in algebraic proof theory, first developed in the context of substructural logics; see \cite{CGT2012,CGT2017}. We begin by rehashing some needed preliminaries in Section~\ref{sec:preliminaries}.

\section{Preliminaries}\label{sec:preliminaries}

\subsection{Action lattices}

An algebra \((A, \vee,  \cdot, {^*}, 0,1)\) is called a  \emph{Kleene algebra} if  \((A, \vee, \cdot, 0, 1)\)  is an (additively) idempotent semiring and for all  $x,y\in A$,
	\begin{align*}
			&1 \vee xx^* \leq x^*, \\
			&xy \leq y \Longrightarrow x^*y \leq y,\\
			&yx \leq y \Longrightarrow yx^* \leq y,
		\end{align*}

An algebra $(A,\wedge, \vee, \cdot, \lresidual, \rresidual, 1 )$ is called a \emph{residuated lattice} if  $(A,\wedge, \vee)$ is a lattice, $(A,\cdot, 1)$ is a monoid,  and  for all $x,y,z\in A$,
	\[
	x\cdot y\leq z \iff y\leq x\lresidual z\iff x\leq z\rresidual y,
	\]
where $\leq$ is the lattice order.

	This paper's main objects of study are action lattices, which can be seen both as extensions of residuated lattices with a Kleene star and a least element, and as extensions of Kleene algebras with a binary meet and residual for the multiplication. More formally, an \emph{action lattice} is an algebraic structures of the form 
	 \((A, \wedge, \vee, \cdot, \lresidual,\rresidual, {^*}, 0,1)\), where:
	 \begin{enumerate}
	 \item $(A,\wedge, \vee, \cdot, \lresidual, \rresidual, 1 )$ is a residuated lattice,
	 \item for each $x\in A$, $0\leq x$,
	 \item and for all $x,y\in A$,
	 	\begin{align*}
			&1 \vee xx^* \leq x^*, \\
			&xy \leq y \Longrightarrow x^*y \leq y,\\
			&yx \leq y \Longrightarrow yx^* \leq y,
		\end{align*}
	 \end{enumerate}
	Kleene algebras have been studied extensively as algebras of regular expressions, but are shown in \cite{Redko1964} to comprise a proper quasivariety (i.e., the class of all Kleene algebras cannot be defined by purely equational conditions). In contrast, the class $\AL$ of all action lattices is a finitely based equational class \cite{P1990}. Importantly and like the class of all Kleene algebras, $\AL$ is also closed under the formation of matrices; the same is not true if $\wedge$ is excluded from the language \cite{Koz1994b}.
	
	Most of the important examples of Kleene algebras (and action lattices) are \emph{\({*}\)-continuous} in that for any
	\(a \in A\) we have
	\[ 
		a^* = \bigvee \set{a^n \mid n \in \NN}, 
	\]
	where \(a^0 := 1\) and \(a^{n+1} := a^n \cdot a\) and $\bigvee$ denotes (possibly infinitary) supremum.  
Pratt's normality theorem \cite{P1990} guarantees that any action lattice that is \emph{complete} (in the sense that arbitrary suprema exist) is ${*}$-continuous. We denote the class of ${*}$-continuous action lattices by $\ALstar$.

\subsection{Non-wellfounded trees and corecursion}

Later, we will require some notioins for working with non-wellfounded proof systems. We fix some terminology. A \emph{sequence}  \emph{\(s\)  on \(A\)} is a function whose domain, called the \emph{length} of \(s\) and denoted \(\length(s)\), is either a natural number or \(\omega\).
The \emph{tail} of a sequence \(s\), denoted \(\tail(s)\), is the sequence obtained after removing its first element.
For \(a \in A\), \(a : s\) denotes the sequence obtained from \(s\) by adding \(a\) to the start.

A \emph{tree (with labels in \(A\))} is a function \(T\) such that
\begin{enumerate}
  \item \(\domain(T) \subseteq \mathbb{N}^{<\omega}\) and \(\image(T) \subseteq A\).
  \item \(\domain(T)\) is closed under prefixes.
\end{enumerate}
The elements of \(\domain(T)\) are called \emph{nodes of \(T\)}.

Given a tree \(T\) an \emph{(infinite) branch} is an infinite sequence \((w_i)_{i \in \mathbb{N}}\) of nodes of \(T\) such that \(w_0 = \epsilon\) and \(w_{i+1} = w_i n\) for some \(n \in \mathbb{N}\).
The set of branches of \(T\) is denoted as \(\branches(T)\).

A tree is said to be
\begin{enumerate}
  \item \emph{Finite} if \(\domain(T)\) is finite.
  \item \emph{Finitely branching} if for any \(w \in \domain(T)\) the set \(\set{n \in \mathbb{N} \mid wn \in \domain(T)}\) is bounded.
  \item \emph{Wellfounded} if \(\branches(T) = \varnothing\).
\end{enumerate}

We say that \(T'\) is a \emph{subtree} of a tree \(T\) if there exists a \(w \in \nodes(T)\) such that
\begin{enumerate}
  \item \(w' \in \domain T' \iff ww' \in \domain T \), and
  \item \(T'(w') = T(ww')\).
\end{enumerate}
In this case, \(T'\) is also called the subtree of \(T\) \emph{generated at node} \(w\).
A subtree \(T'\) of a tree \(T\) is called \emph{proper} iff it is not equal to \(T\).

Any wellfounded tree \(T\) can be assigned an ordinal, the \emph{height of \(T\)} denoted \(\height{T}\), in such a way that any proper subtree \(T'\) of \(T\) fulfills \(\height(T') < \height(T)\).

\section{Two proof systems and their equivalence}\label{sec:proof systems}

In this section, we give a pair of proof systems that are sound and complete for $*$-continuous action lattices---a wellfounded systems \(\gentzenContinuousActionLatticeOmega\), and a non-wellfounded system \(\gentzenContinuousActionLattice\) due to Das and Pous \cite{DP2018}--- and show that the two systems are mutually interpretable. Extensions of each of these systems by analytic structural rules (see Definition~\ref{def:struct-rules}) may readily be seen---using the algebraic proof theory discussed in Section~\ref{sec:cut-free completeness}---to be complete with respect to corresponding subclasses of $\AL^*$. The soundness of the non-wellfounded system with respect to the aforementioned classes is a more difficult problem, however, and we prove it by appealing to the back-and-forth translation described in the present section. Whether this soundness result may be proven directly for the non-wellfounded system, without appealing to the translation, remains open.

We first set some notation. Fix a countably infinite set $\Var$ of propositional variables and write $\termAlg$ for the absolutely free algebra generated by $\Var$ over the language
$\{\wedge, \vee, \cdot, \lresidual, \rresidual,  {^*}, 0,1 \}$. We call an element of the algebra $\termAlg$ a \emph{formula} and sometimes refer to $\termAlg$ as the \emph{formula algebra} defined over $\Var$. A \emph{sequent} is a pair $(\Gamma, \beta)$ where $\Gamma$ is a finite sequence
of formulas and $\beta$ is a formula. We write $\Gamma \Rightarrow \beta$
for the sequent $(\Gamma, \beta)$, and we denote by $\Seq$ the collection of all sequents. We will write $\Gamma,\Sigma$ for the concatenation of two sequences $\Gamma$ and $\Sigma$ and for a formula $\alpha$ we define recursively the sequence $\alpha^{(n)}$ by $\alpha^{(0)} = \epsilon$, $\alpha^{(k+1)} =
\alpha^{(k)},\alpha$.

\begin{definition}
	Let \(S = \Pi \Rightarrow \beta\) be a sequent.
	We define \(|S| = \length(\Pi)\) and for \(i \in |S| \cup \set{-1}\) we will
	write \(S(i)\) to mean:
	\begin{align*}
		&S(-1) := \beta, \\
		&S(i) := \Pi(i), \text{ if } i \geq 0.
	\end{align*}
	Here \(\Pi(i)\) is the \(i\)-th element in the sequence \(\Pi\),
	starting to count at \(0\).
	The elements of \(|S| \cup \set{-1}\) are called the 
	\emph{formula occurrences of \(S\)} and we say that \(i\) is an
  \emph{occurrence of a formula \(\alpha\)} if \(S(i) = \alpha\).
\end{definition}

\subsection{Rules}

Fix two disjoint countably infinite sets \(\FVar\), \(\SVar\), whose elements are respectively 
called \emph{formula metavariables} and \emph{sequence metavariables} (we also assume
that these sets are disjoint from the set of variables).
We write \(\mwffAlg\) for the absolutely free algebra generated by \(\FVar\) over the language \(\set{\meet,\join, \cdot,\lresidual, \rresidual,{^*},0,1}\).
A \emph{metaformula} is an element of \(\mwffAlg\).

\begin{definition}
  A subset $R$ of $\Seq^n$ is called an \emph{\(n\)-ary rule} and we say that $R$ is a \emph{rule} if 
  it is an \(n\)-ary rule for some natural number $n$. If $R$ is any rule, each element of $R$ is called
  an \emph{instance} of the rule $R$.

	An \emph{$n$-ary schematic rule} is a tuple
	$(\mathcal{S}_0, \ldots, \mathcal{S}_n)$ where each \(\mathcal{S}_i\) is an
	ordered pair \((\Upsilon_i, \phi_i)\) such that
	\begin{enumerate}
		\item \(\Upsilon_i\) is a sequence of sequence metavariables and
		metaformulas.
		\item \(\phi_i\) is a metaformula.
	\end{enumerate}
	A \emph{metavariable instantiation} is a function that maps sequence
	metavariables to sequences of formulas and formula metavariables to formulas.
	Any instantiation can be extended, thanks to the free property of absolutely free algebras, in a unique way to
  a function that maps sequence metavariable to sequences of formulas and metaformulas to formulas.\footnote{We will identify an instantiation with its extension.} Note that any \(n\)-ary schematic rule defines an \(n\)-ary rule by taking the sets of all its instantiations.\footnote{We will identify an schematic rule with the rule it defines.}
\end{definition}

\begin{figure}
 
	\[
	\AxiomC{}
	\RightLabel{\(0\)L}
  \UnaryInfC{\(\Gamma, \fbox{0}, \Delta \Rightarrow \beta\)}
	\DisplayProof
	\hspace{0.5cm}
	\AxiomC{\(\Gamma, \Delta \Rightarrow \beta\)}
	\RightLabel{\(1\)L}
  \UnaryInfC{\(\Gamma, \fbox{1}, \Delta \Rightarrow \beta\)}
	\DisplayProof
  \]

  \[
	\AxiomC{}
	\RightLabel{\(1\)R}
  \UnaryInfC{\(\Rightarrow \fbox{1}\)}
	\DisplayProof
	\]

	\[ 
    \AxiomC{\(\Gamma, \fbox{\(\alpha_i\)}, \Delta \Rightarrow \beta\)}
		\RightLabel{\(\wedge\)L\(_i\)}
    \UnaryInfC{\(\Gamma, \fbox{\(\alpha_0 \wedge \alpha_1\)}, \Delta \Rightarrow \beta\)}
		\DisplayProof
	\]

  \[
    \AxiomC{\(\Gamma \Rightarrow \fbox{\(\beta_0\)}\)}
    \AxiomC{\(\Gamma \Rightarrow \fbox{\(\beta_1\)}\)}
		\RightLabel{\(\wedge\)R}
    \BinaryInfC{\(\Gamma \Rightarrow \fbox{\(\beta_0 \wedge \beta_1\)}\)}
		\DisplayProof
  \]

	\[ 
    \AxiomC{\(\Gamma, \fbox{\(\alpha_0\)}, \Delta \Rightarrow \beta\)}
    \AxiomC{\(\Gamma, \fbox{\(\alpha_1\)}, \Delta \Rightarrow \beta\)}
		\RightLabel{\(\vee\)L}
    \BinaryInfC{\(\Gamma, \fbox{\(\alpha_0 \vee \alpha_1\)}, \Delta \Rightarrow \beta\)}
		\DisplayProof
	\]

  \[
    \AxiomC{\(\Gamma \Rightarrow \fbox{\(\beta_i\)}\)}
		\RightLabel{\(\vee\)R\(_i\)}
    \UnaryInfC{\(\Gamma \Rightarrow \fbox{\(\beta_0 \vee \beta_1\)}\)}
		\DisplayProof
  \]

	\[ 
    \AxiomC{\(\Gamma, \fbox{\(\alpha_0\)}, \fbox{\(\alpha_1\)}, \Delta \Rightarrow \beta\)}
		\RightLabel{\(\cdot\)L}
    \UnaryInfC{\(\Gamma, \fbox{\(\alpha_0 \cdot \alpha_1\)}, \Delta \Rightarrow \beta\)}
		\DisplayProof
	\]

  \[
    \AxiomC{\(\Gamma \Rightarrow \fbox{\(\beta_0\)}\)}
    \AxiomC{\(\Delta \Rightarrow \fbox{\(\beta_1\)}\)}
		\RightLabel{\(\cdot\)R}
    \BinaryInfC{\(\Gamma, \Delta \Rightarrow \fbox{\(\beta_0 \cdot \beta_1\)}\)}
		\DisplayProof
  \]

	\[ 
    \AxiomC{\(\Delta \Rightarrow \fbox{\(\alpha_0\)}\)}
    \AxiomC{\(\Gamma, \fbox{\(\alpha_1\)}, \Sigma \Rightarrow \beta \)}
		\RightLabel{\(\lresidual\)L}
    \BinaryInfC{\(\Gamma, \Delta, \fbox{\(\alpha_0 \lresidual \alpha_1\)}, \Sigma \Rightarrow \beta\)}
		\DisplayProof
	\]

	\[ 
    \AxiomC{\(\Delta \Rightarrow \fbox{\(\alpha_0\)}\)}
    \AxiomC{\(\Gamma, \fbox{\(\alpha_1\)}, \Sigma \Rightarrow \beta\)}
		\RightLabel{\(\rresidual\)L}
    \BinaryInfC{\(\Gamma, \fbox{\(\alpha_1 \rresidual \alpha_0\)}, \Delta, \Sigma \Rightarrow \beta\)}
		\DisplayProof
	\]

  \[
    \AxiomC{\(\fbox{\(\beta_0\)}, \Gamma \Rightarrow \fbox{\(\beta_1\)}\)}
		\RightLabel{\(\lresidual\)R}
    \UnaryInfC{\(\Gamma \Rightarrow \fbox{\(\beta_0 \lresidual \beta_1\)}\)}
		\DisplayProof
		\hspace{0.5cm}
    \AxiomC{\(\Gamma, \fbox{\(\beta_0\)} \Rightarrow \fbox{\(\beta_1\)}\)}
		\RightLabel{\(\rresidual\)R}
    \UnaryInfC{\(\Gamma \Rightarrow \fbox{\(\beta_1 \rresidual \beta_0\)}\)}
		\DisplayProof
  \]

	\[
		\AxiomC{}
		\RightLabel{\(*\)R\(_0\)}
    \UnaryInfC{\(\Rightarrow \fbox{\(\beta^*\)}\)}
		\DisplayProof
		\hspace{0.5cm}
    \AxiomC{\(\Gamma \Rightarrow \fbox{\(\beta\)}\)}
    \AxiomC{\(\Delta \Rightarrow \fbox{\(\beta^*\)}\)}
		\RightLabel{\(*\)R\(_1\)}
    \BinaryInfC{\(\Gamma,\Delta \Rightarrow \fbox{\(\beta^*\)}\)}
		\DisplayProof
	\]

  \[
    \AxiomC{\((\Gamma,\fbox{\(\alpha^{(n)}\)},\Delta \Rightarrow \beta)_{n\in \NN}\)}
    \RightLabel{\(*\)L\(_\omega\)}
    \UnaryInfC{\(\Gamma, \fbox{\(\alpha^*\)}, \Delta \Rightarrow \beta\)}
    \DisplayProof
  \]

  \[
    \AxiomC{\(\Gamma, \Delta \Rightarrow \beta\)}
    \AxiomC{\(\Gamma, \fbox{\(\alpha\)}, \fbox{\(\alpha^*\)}, \Delta \Rightarrow \beta\)}
    \RightLabel{\(*\)L}
    \BinaryInfC{\(\Gamma, \fbox{\(\alpha^*\)}, \Delta \Rightarrow \beta\)}
    \DisplayProof
  \]
	\caption{Principal rules}
	\label{fig:common-rules}\end{figure}

We define an important rule called the \emph{identity axiom} as the set of all instances
\[
  \AxiomC{}
  \RightLabel{\text{id}}
  \UnaryInfC{\(a \Rightarrow a\)}
  \DisplayProof
\]
where \(a\) is a propositional variable.
We note that this rule is not schematic.

In Figure~\ref{fig:common-rules}, we define some schematic rules, which we will call \emph{principal rules}.
In these rules, \(\Gamma, \Delta, \Sigma\) are sequence metavariables
and \(\alpha, \beta\) (possibly with subscripts) are formula metavariables.
In each instance of a principal rule there is one formula occurrence at the conclusion which we call \emph{principal} (this is the boxed formula in the conclusion),
and some formula ocurrences in the premises which we call \emph{auxiliary} (these are the boxed formulas in the premises). The rest of formulas ocurrences will be said to belong to the \emph{context}.

We now introduce some additional sets of rules.
\begin{definition}\label{def:struct-rules}
  A structural rule is a schematic rule \(R\) of shape
  \[ 
    \AxiomC{\(\Upsilon_0 \Rightarrow \alpha_0\)}
    \AxiomC{\(\ldots\)}
    \AxiomC{\(\Upsilon_{k-1} \Rightarrow \alpha_{k-1}\)}
    \RightLabel{\(R\)}
    \TrinaryInfC{\(\Upsilon_{k} \Rightarrow \alpha_{k}\)}
    \DisplayProof
  \]
  where
  \begin{enumerate}
    \item Each \(\alpha_i\) is a metavariable for formulas (some may coincide) and
    \item Each \(\Upsilon_i\) is a (possibly empty) sequence of metavariables for formulas and sequences of formulas.
  \end{enumerate}
  We will say that the rule is \emph{linear} if \(\Upsilon_k\) consists of distinct sequence metavariables.
  Finally, we will say that a linear rule is \emph{analytic} if it has shape
  \[
    \AxiomC{\(\Gamma, \Upsilon_0, \Delta \Rightarrow \beta\)}
    \AxiomC{\(\cdots\)}
    \AxiomC{\(\Gamma, \Upsilon_{n-1}, \Delta \Rightarrow \beta\)}
    \RightLabel{\(R\)}
    \TrinaryInfC{\(\Gamma, \Upsilon, \Delta \Rightarrow \beta\)}
    \DisplayProof
  \]
  where \(\Upsilon\) is a sequence of sequence metavariables, \(\Gamma, \Delta\) are sequence metavariables, \(\beta\) is a formula metavariable and each \(\Upsilon_i\) is a sequence of metavariables appearing in \(\Upsilon\) (allowing repetition).
  Note that, since analytic rules are linear, the metavariables in \(\set{\Gamma, \Delta} \cup\Upsilon\) are distinct.
\end{definition}

We stipulate that structural rules have no principal nor auxiliary formulas.
Figure~\ref{fig:structural-rules} shows some examples of structural rules.
Note that \(\cut\), \(\text{C}\) and \(\text{Wk}\) are linear structural rules and \(\text{c}\) is not linear since it has formula metavariables in the left-hand side of the conclusion.
The linear rules \(\text{C}\) and \(\text{Wk}\) from Figure~\ref{fig:structural-rules} are analytic, but \(\text{Cut}\) is not.

\begin{figure}
  \[ 
    \AxiomC{\(\Delta \Rightarrow \alpha\)}
    \AxiomC{\(\Gamma, \alpha, \Pi \Rightarrow \beta\)}
    \RightLabel{Cut}
    \BinaryInfC{\(\Gamma, \Delta, \Pi \Rightarrow \beta\)}
    \DisplayProof
  \]
  \[
    \AxiomC{\(\Gamma, \alpha, \alpha, \Delta \Rightarrow \beta\)}
    \RightLabel{c}
    \UnaryInfC{\(\Gamma, \alpha, \Delta \Rightarrow \beta\)}
    \DisplayProof
    \qquad
    \AxiomC{\(\Gamma, \Pi, \Pi, \Delta \Rightarrow \beta\)}
    \RightLabel{C}
    \UnaryInfC{\(\Gamma, \Pi, \Delta \Rightarrow \beta\)}
    \DisplayProof
  \]
  \[
    \AxiomC{\(\Gamma, \alpha, \beta, \Delta \Rightarrow \gamma\)}
    \RightLabel{e}
    \UnaryInfC{\(\Gamma, \beta, \alpha, \Delta \Rightarrow \gamma\)}
    \DisplayProof
    \qquad
    \AxiomC{\(\Gamma, \Delta \Rightarrow \beta\)}
    \RightLabel{Wk}
    \UnaryInfC{\(\Gamma, \Pi, \Delta \Rightarrow \beta\)}
    \DisplayProof
  \]
  \caption{Structural rules}\label{fig:structural-rules}
\end{figure}

Finally, we introduce a relation which will play a fundamental role in the non-wellfounded proof systems.

\begin{definition}
  Given an instantiation of a (principal or structural) non-axiomatic rule, a formula occurrence \(\phi\)
  at the conclusion, and a formula occurrence \(\phi'\) at some premise we say that
  \(\phi'\) is an \emph{immediate ancestor of \(\phi\)} if either
  \begin{enumerate}
		\item \(\phi'\) is an auxiliary formula and \(\phi\)
		is the principal formula.
    \item Both, \(\phi\) and \(\phi'\), are obtained from instantiating the same
      metavariable for formulas.
    \item Both, \(\phi\) and \(\phi'\), are inside the instantiation of the same
      metavariable for sequences at the same position (in the 
      sequence of formulas being used for the instantiation).
      \qedhere
  \end{enumerate}
\end{definition}

\subsection{Preproofs, proofs, and proof systems}

A preproof of a set of rules \(\mathcal{R}\) can be intuitively understood as a (possibly)
non-wellfounded tree generated by the rules in \(\mathcal{R}\).
More formally, given a set of rules \(\mathcal{R}\) a \emph{preproof}
in \(\mathcal{R}\) is a non-wellfounded tree \(\pi\) such that:
\begin{enumerate}
	\item Every node \(w\) of \(\pi\) is labelled by a sequent \(S^\pi_w\) and a
	rule \(R^\pi_w \in \mathcal{R}\).
	\item If \(w\) has successors \(wi_0, \ldots, wi_{n-1}\)
	(where \(i_0 < \cdots < i_{n-1}\)) then
	\((S^\pi_{wi_0}, \ldots, S^\pi_{wi_{n-1}}, S^\pi_w)\) is an instance of the
	rule \(R^\pi_w\).
  \item If \(w\) has a successor \(wi\), then \(wj\) is also a successor for any \(j < i\).
\end{enumerate}
In Subsection~\ref{subsec:translation} we will temporarily lift the third condition, allowing a node \(w\) to have a successor \(w1\) but no successor \(w0\) in one particular rule.
If \(\pi\) is a preproof and \(w\) is a node of \(\pi\), we will write
\(S^\pi_w\) to denote the sequent at \(w\) in \(\pi\) and \(R^\pi_w\) is the
rule at \(w\) in \(\pi\).
If we omit \(w\) we assume \(w = \epsilon\), i.e., \(w\) is the root of the tree.

A \emph{proof system} \(\mathcal{G}\) is given by:
\begin{enumerate}
	\item A set of rules \(\mathcal{R}\), called the
		\emph{rules of \(\mathcal{G}\)}.
	\item A condition on the branches of the preproofs of \(\mathcal{R}\),
		called the branch condition.
\end{enumerate}
Then, a proof in \(\mathcal{G}\) will be a preproof in the rules of
\(\mathcal{G}\) where every branch satisfies the branch condition.
When we talk about a preproof in \(\mathcal{G}\) we mean a preproof
in the rules of \(\mathcal{G}\).
A wellfounded proof system is a proof system where the branch condition is empty
(so no branches are allowed and proofs are just wellfounded preproofs).

\begin{definition}
Let \(\mathcal{R}\) be a set of linear rules.
We define \(\gentzenContinuousActionLatticeOmega + \mathcal{R}\) as the
wellfounded proof system whose rules are \(\text{id}\), the rules in \(\mathcal{R}\) and all the rules in Figure~\ref{fig:common-rules} except for \(*L\).

Since \(\gentzenContinuousActionLatticeOmega + \mathcal{R}\) is a
wellfounded proof system, its proofs are just wellfounded preproofs, i.e., no
infinite paths are allowed.
\end{definition}

The following lemma will be needed in Section~\ref{sec:cut-free completeness} to show the (cut-free) completeness of the system.

\begin{lemma}\label{l:adm-of-frame-rules}
Let $\mathcal{R}$ be a set of analytic rules.
The following rules are admissible in $\gentzenContinuousActionLatticeOmega +\mathcal{R}$:
	\[
		\AxiomC{}
		\UnaryInfC{\(\alpha \Rightarrow \alpha\)}
		\DisplayProof
		\qquad
		\AxiomC{\(\Gamma \Rightarrow 0\)}	
    \RightLabel{\(0\mathrm{R}\)}
		\UnaryInfC{\(\Sigma_l, \Gamma, \Sigma_r \Rightarrow \beta\)}
		\DisplayProof
	\]
\end{lemma}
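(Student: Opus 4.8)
The plan is to prove both admissible rules by induction, handling the identity axiom first since the second rule will rely on having full identity available.

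\textbf{Generalized identity.} I would prove that $\alpha \Rightarrow \alpha$ is derivable in $\gentzenContinuousActionLatticeOmega + \mathcal{R}$ for every formula $\alpha$ by induction on the complexity of $\alpha$. The base case $\alpha$ a propositional variable is exactly the $\mathrm{id}$ axiom; the cases $\alpha \in \{0, 1\}$ are handled by $0$L together with weakening-style reasoning and by $1$R / $1$L respectively. For the inductive step, each principal connective is treated by applying the corresponding right rule below the left rule (or vice versa) in the standard way: for instance, for $\alpha = \beta_0 \cdot \beta_1$, apply $\cdot$R to the two instances $\beta_i \Rightarrow \beta_i$ obtained from the induction hypothesis to get $\beta_0, \beta_1 \Rightarrow \beta_0 \cdot \beta_1$, then apply $\cdot$L. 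The cases for $\wedge$, $\vee$, $\lresidual$, $\rresidual$ are entirely analogous. The interesting case is $\alpha = \beta^*$: here I would apply $*$L$_\omega$, so it suffices to derive $\beta^{(n)} \Rightarrow \beta^*$ for every $n \in \NN$; this follows by a side induction on $n$ using $*$R$_0$ for $n = 0$ and $*$R$_1$ together with the induction hypothesis $\beta \Rightarrow \beta$ (from the main induction) for the successor step. Note that since $\gentzenContinuousActionLatticeOmega + \mathcal{R}$ contains the infinitary rule $*$L$_\omega$, the resulting proof is wellfounded but infinite, which is permitted.

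\textbf{The $0$R rule.} With generalized identity in hand, I would prove admissibility of $0$R by induction on the structure of a given proof $\pi$ of $\Gamma \Rightarrow 0$, transforming it into a proof of $\Sigma_l, \Gamma, \Sigma_r \Rightarrow \beta$. The key observation is that $0$ can only appear on the right-hand side of a sequent in a proof in very limited ways: the last rule applied in $\pi$ is never a right rule for a connective other than possibly nothing introducing $0$ on the right (there is no such rule in Figure~\ref{fig:common-rules}), so it must be a left rule acting on $\Gamma$, an instance of $\mathrm{id}$ (impossible, since the right-hand side is $0$, not a variable), a structural rule from $\mathcal{R}$, or $0$L. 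If it is $0$L, then $\Gamma$ already contains a boxed $0$ and so does $\Sigma_l, \Gamma, \Sigma_r$, hence the target sequent is itself an instance of $0$L. If it is a left rule or an analytic structural rule acting within $\Gamma$, I would permute the added contexts $\Sigma_l$ and $\Sigma_r$ through it — this is exactly where analyticity/linearity of the rules in $\mathcal{R}$ is used, since an analytic rule has the schematic form $\Gamma', \Upsilon, \Delta' \Rightarrow \beta$ and freely accommodates enlarging the side contexts — and apply the induction hypothesis to the premises.

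\textbf{Main obstacle.} The subtle point, and the step I expect to require the most care, is the $*$L$_\omega$ case in the induction for $0$R: if the last rule of $\pi$ is $*$L$_\omega$ with premises $(\Gamma', \alpha^{(n)}, \Delta' \Rightarrow 0)_{n \in \NN}$, then the induction hypothesis gives proofs of $\Sigma_l, \Gamma', \alpha^{(n)}, \Delta', \Sigma_r \Rightarrow \beta$ for all $n$, and reapplying $*$L$_\omega$ yields the goal — this works, but one must be careful that the induction is on a well-founded ordering of the (infinite, wellfounded) proof, i.e., on $\height(\pi)$, rather than on a naive natural-number induction. A second delicate point is that the rules in $\mathcal{R}$, being merely analytic rather than principal-formula rules, must be checked not to interact with the right-hand formula $0$ in any way that blocks the argument; but since structural rules have no principal or auxiliary formulas and analytic rules leave $\beta$ (here $0$) untouched in both premises and conclusion, this goes through. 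Everything else is routine rule permutation.
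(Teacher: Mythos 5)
Your proposal is correct and follows essentially the same route as the paper: induction on the complexity of $\alpha$ for generalized identity (with the $\gamma^*$ case handled via $*$R$_0$, iterated $*$R$_1$, and $*$L$_\omega$ exactly as in the paper), and induction on the ordinal height of the proof for $0$R, with the same case analysis on the last rule and the same use of analyticity to absorb $\Sigma_l,\Sigma_r$ into the side contexts. The only cosmetic difference is that $0 \Rightarrow 0$ is directly an instance of $0$L (no weakening-style reasoning is needed), and for two-premise left rules the induction hypothesis is applied only to the premise whose succedent is $0$.
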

\begin{proof}
  To show that \(\gentzenContinuousActionLatticeOmega + \mathcal{R} \vdash \alpha \Rightarrow \alpha\) we need to do an induction on the size of \(\alpha\) and case analysis in the shape of \(\alpha\).
  We provide the proof for the case \(\alpha = \gamma^*\).
  First, for any \(n > 0\) we define the proof \(\tau_n\) as
  \begin{prooftree}
    \AxiomC{}
    \RightLabel{I.H.}
    \UnaryInfC{\(\gamma \Rightarrow \gamma\)}
    \AxiomC{}
    \RightLabel{I.H.}
    \UnaryInfC{\(\gamma \Rightarrow \gamma\)}
    \AxiomC{}
    \RightLabel{I.H.}
    \UnaryInfC{\(\gamma \Rightarrow \gamma\)}
    \AxiomC{}
    \RightLabel{*\(\mathrm{R}_0\)}
    \UnaryInfC{\(\Rightarrow \gamma^*\)}
    \RightLabel{\(*\mathrm{R}_1\)}
    \BinaryInfC{\(\gamma^{(1)} \Rightarrow \gamma^*\)}
    \RightLabel{\(*\mathrm{R}_1\)}
    \BinaryInfC{\(\underset{\raise .5em\vdots}{\gamma^{(2)} \Rightarrow \gamma^*}\)}
    \RightLabel{\(*\mathrm{R}_1\)}
    \UnaryInfC{\(\gamma^{(n-1)} \Rightarrow \gamma^*\)}
    \RightLabel{\(*\mathrm{R}_1\)}
    \BinaryInfC{\(\gamma^{(n)} \Rightarrow \gamma^*\)}
  \end{prooftree}
  where I.H.\ indicates an application of the induction hypothesis to \(\gamma\) and \(*\mathrm{R}_1\) is applied \(n\) times.
  Then, the desired proof is
  \[
    \AxiomC{}
    \RightLabel{\(*\mathrm{R}_0\)}
    \UnaryInfC{\(\Rightarrow \gamma^*\)}
    \AxiomC{\(\cdots\)}
    \AxiomC{\(\tau_n\)}
    \noLine
    \UnaryInfC{\(\gamma^{(n)}\Rightarrow \gamma^*\)}
    \AxiomC{\(\cdots\)}
    \QuaternaryInfC{\(\gamma^* \Rightarrow \gamma^*\)}
    \DisplayProof
  \]

  To show that \(0\mathrm{R}\) is admissible we proceed by an induction on the height of proofs.
  So assume that \(\pi \vdash \Gamma \Rightarrow 0\) in \(\gentzenContinuousActionLatticeOmega + \mathcal{R}\), we do cases in the last step of \(\pi\).
  We notice that it is impossible for the last rule of \(\pi\) to be a principal right rule different from \(0\mathrm{R}\) or to be id, since \(0\) is the formula in the right hand side of the conclusion of \(\pi\).
  We will cover 3 cases, leaving the rest for the reader.

  Case \(0 \mathrm{L}\).
  In this case \(\pi\) is of shape:
  \[
    \AxiomC{}
    \RightLabel{\(0\mathrm{L}\)}
    \UnaryInfC{\(\Gamma, 0, \Delta \Rightarrow 0\)}
    \DisplayProof
  \]
  and then the desired proof is
  \[
    \AxiomC{}
    \RightLabel{\(0\mathrm{L}\)}
    \UnaryInfC{\(\Sigma_l,\Gamma, 0, \Delta,\Sigma_r \Rightarrow \beta\)}
    \DisplayProof
  \]

  Case \(\rresidual \mathrm{L}\).
  In this case \(\pi\) is of shape:
  \[
    \AxiomC{\(\pi_0\)}
    \noLine
    \UnaryInfC{\(\Delta \Rightarrow \alpha_0\)}
    \AxiomC{\(\pi_1\)}
    \noLine
    \UnaryInfC{\(\Gamma, \alpha_1, \Sigma \Rightarrow 0\)}
		\RightLabel{\(\rresidual\)L}
    \BinaryInfC{\(\Gamma, \alpha_1 \rresidual \alpha_0, \Delta, \Sigma \Rightarrow 0\)}
		\DisplayProof
  \]
  Since \(\height(\pi_1) < \height(\pi)\) we can apply the induction hypothesis and obtain a \(\tau \vdash \Sigma_l, \Gamma, \alpha_1, \Sigma, \Sigma_r \Rightarrow \beta\).
  Then, the desired proof is
  \[
    \AxiomC{\(\pi_0\)}
    \noLine
    \UnaryInfC{\(\Delta \Rightarrow \alpha_0\)}
    \AxiomC{\(\tau\)}
    \noLine
    \UnaryInfC{\(\Sigma_l,\Gamma, \alpha_1, \Sigma, \Sigma_r \Rightarrow \beta\)}
		\RightLabel{\(\rresidual\)L}
    \BinaryInfC{\(\Sigma_l,\Gamma, \alpha_1 \rresidual \alpha_0, \Delta, \Sigma, \Sigma_r \Rightarrow \beta\)}
		\DisplayProof
  \]

  Case of an analytical rule \(R\).
  Let \(R\) be
  \[
    \AxiomC{\(\Gamma', \Upsilon_0, \Delta' \Rightarrow \alpha\)}
    \AxiomC{\(\cdots\)}
    \AxiomC{\(\Gamma', \Upsilon_{n-1}, \Delta' \Rightarrow \alpha\)}
    \RightLabel{\(R\)}
    \TrinaryInfC{\(\Gamma', \Upsilon, \Delta' \Rightarrow \alpha\)}
    \DisplayProof
  \]
  where \(\Gamma', \Delta'\) are sequence metavariables, \(\Upsilon_0\), \ldots, \(\Upsilon_{n-1}\), \(\Upsilon\) are sequences of sequence metavariables and \(\alpha\) is a formula metavariable.
  In this case \(\pi\) is of shape:
  \[
    \AxiomC{\(\pi_0\)}
    \noLine
    \UnaryInfC{\(\Gamma, \Sigma_0, \Delta \Rightarrow 0\)}
    \AxiomC{\(\cdots\)}
    \AxiomC{\(\pi_{n-1}\)}
    \noLine
    \UnaryInfC{\(\Gamma, \Sigma_{n-1}, \Delta \Rightarrow 0\)}
    \RightLabel{\(R\)}
    \TrinaryInfC{\(\Gamma, \Sigma, \Delta \Rightarrow 0\)}
    \DisplayProof
  \]
  where \(\Gamma\) is obtained instantiating \(\Gamma'\), \(\Delta\) is obtained instantiating \(\Delta'\), \(\Sigma_i\) is obtained instanting the sequence metavariables of \(\Upsilon_i\), \(\Sigma\) is obtained by instantiating the sequence metavariables of \(\Upsilon\) and \(0\) is obtained instantiating \(\alpha\).
  Since \(\height(\pi_i) < \height(\pi)\) for any \(i < n\) we can apply the induction hypothesis to obtain \(\tau_i \vdash \Sigma_l, \Gamma, \Sigma_i, \Delta, \Sigma_r \Rightarrow \beta\).
  Then the desired proof is
  \[
    \AxiomC{\(\tau_0\)}
    \noLine
    \UnaryInfC{\(\Sigma_l,\Gamma, \Sigma_0, \Delta,\Sigma_r \Rightarrow \beta\)}
    \AxiomC{\(\cdots\)}
    \AxiomC{\(\tau_{n-1}\)}
    \noLine
    \UnaryInfC{\(\Sigma_l, \Gamma, \Sigma_{n-1}, \Delta, \Sigma_r \Rightarrow \beta\)}
    \RightLabel{\(R\)}
    \TrinaryInfC{\(\Sigma_l,\Gamma, \Sigma, \Delta, \Sigma_r\Rightarrow \beta\)}
    \DisplayProof
  \]
  where in the application of \(R\) we are instantiating \(\Gamma'\) with \(\Sigma_l, \Gamma\), \(\Delta'\) with \(\Delta, \Sigma_r\), \(\alpha\) with \(\beta\) and the rest of metavariables are instantiated as before.
\end{proof}

An \emph{analytic} quasiequation is a quasiequation of the form 
\[
\alpha_1 \leq \beta \mathbin{\&} \dots \mathbin{\&} \alpha_{n} \leq \beta \implies \alpha_{0} \leq \beta,
\]
where $\alpha_0 = x_1\cdots x_m$ for distinct variables $x_1,\dots, x_m$, $\beta =y$ for some variable $y$ distinct from $x_1,\dots, x_m$, and $\alpha_i$ is a product of variables from $x_1,\dots, x_m$ for $i>0$, noting that the empty product is stipulated to be equal to $1$. Note that  a residuated lattice satisfies such an analytic quasiequation if and only if it satisfies the equation $\alpha_0 \leq \alpha_1 \vee \dots \vee \alpha_n$. See $\cite{CGT2012}$ for more details.

We fix a bijection between the denumerable set of metavariables (containing both the formula and sequence metavariables) and we define for every sequence of metavariables $\Upsilon$ a term $t(\Upsilon)$ by replacing each metavariable by their corresponding variable, replacing commas with $\cdot$, and stipulating $t(\Upsilon) = 1$ if $\Upsilon$ is empty. Accordingly we define for every structural rule
 \[ 
    \AxiomC{\(\Upsilon_0 \Rightarrow \alpha_0\)}
    \AxiomC{\(\ldots\)}
    \AxiomC{\(\Upsilon_{k-1} \Rightarrow \alpha_{k-1}\)}
    \RightLabel{\(R\)}
    \TrinaryInfC{\(\Upsilon_{k} \Rightarrow \alpha_{k}\)}
    \DisplayProof
 \]
a quasiequation, denoted $q(R)$, by
\[
t(\Upsilon_0) \leq t(\alpha_0) \mathbin{\&} \dots \mathbf{\&} t(\Upsilon_{k-1}) \leq t(\alpha_{k-1}) \implies  t(\Upsilon_k) \leq t(\alpha_k).
\]
For example, 
\[
q(\text{Cut}) = (x \leq y \mathbin{\&} zyw \leq u) \implies zxw\leq u.
\]
For a set of structural rules $\mathcal{R}$ we define $Q(\mathcal{R}) = \{q(R) \mid R\in \mathcal{R} \}$.

In the special case of an analytic structural rule
  \[
    \AxiomC{\(\Gamma, \Upsilon_0, \Delta \Rightarrow \beta\)}
    \AxiomC{\(\cdots\)}
    \AxiomC{\(\Gamma, \Upsilon_{n}, \Delta \Rightarrow \beta\)}
    \RightLabel{\(R\)}
    \TrinaryInfC{\(\Gamma, \Upsilon, \Delta \Rightarrow \beta\)}
    \DisplayProof
  \]
we define
\[ 
q_a(R) := t(\Upsilon_0) \leq t(\beta) \mathbin{\&} \dots \mathbin{\&} t(\Upsilon_{n}) \leq t(\beta) \implies t(\Upsilon) \leq t(\beta),
\]
i.e., we forget about the context. Note that if $R$ is an analytic rule then $q_a(R)$ is an analytic quasiequation. For a set of analytic structural rules $\mathcal{R}$ we define $Q_a(\mathcal{R}) = \{q_A(R) \mid R\in \mathcal{R} \}$.

Given a set \(\mathcal{R}\) of linear rules, one can show soundness of \(\gentzenContinuousActionLatticeOmega + \mathcal{R}\) with respect to the class of \(*\)-continuous action lattices satisfying the quasiequations \(Q(\mathcal{R})\) simply by an induction in the ordinal height of proofs. In particular, if every rule in $\mathcal{R}$ is analytic, then \(\gentzenContinuousActionLatticeOmega + \mathcal{R}\) is sound with respect to the class of \(*\)-continuous action lattices satisfying the quasiequations \(Q_a(\mathcal{R})\).

In order to define the non-wellfounded system we need the concept of a progressing thread.

\begin{definition}
	Let \(\pi\) be a preproof in some proof system and \(b \in \branches(\pi)\).
	A \emph{thread in \((\pi, b)\)} is a function
	\(t \colon I \longrightarrow \NN \cup \set{-1}\) such
	that
	\begin{enumerate}
		\item \(I\) is a non-empty interval in \(\mathbb{N}\).
		\item If \(i \in I\) then \(t_i\) is a formula ocurrence in
			\(S^\pi_{b_i}\), i.e.\ the sequent in \(\pi\) at node
			\(b_i\).
		\item If \(i, i + 1 \in I\) then \(t_{i+1}\) is an immediate ancestor of
			\(t_i\).
	\end{enumerate}

	A thread \(t \colon I \longrightarrow \NN \cup \set{-1}\) is said to be a
	\(*\)-thread if there is a formula \(\alpha\) such that for any \(i \in I\)
	we have that \(S^\pi_{b_i}(t_i) = \alpha^*\)and \(t_i \neq -1\),
	i.e.\ it always denotes the same \(*\)-formula and stays on the left hand side
	of the sequent.
	Given threads \(t, t'\) we say that \(t'\) is a suffix of \(t\) if 
	\begin{enumerate}
		\item \(\domain(t')\) is a suffix of \(\domain(t)\).
		\item For any \(i \in \domain(t) \cap \domain(t')\) we have \(t_i = t'_i\).
	\end{enumerate}
	
	We call a thread \(t\) in \((\pi,b)\) \emph{progressing} if
	\begin{enumerate}
		\item \(t\) has a suffix \(t'\) which is a \(*\)-thread, and
    \item there are infinitely many $i\in \domain(t')$ such that $t'_i$ is principal in \(\pi\) at node \(b_i\).
	\end{enumerate}
	We say that \((\pi, b)\) is \emph{progressing} if it has a progressing
	thread.
	By definition any progressing thread gives a progressing 
	\(*\)-thread.
\end{definition}

\begin{definition}
Let \(\mathcal{R}\) be a set of linear rules.
We define \(\gentzenContinuousActionLattice + \mathcal{R}\) as the proof system
whose rules are \(\text{id}\), those in \(\mathcal{R}\) and the displayed rules in Figure~\ref{fig:common-rules} without \(\text{*}L_\omega\).
The branch condition is that any branch has a progressing thread.
\end{definition}

In \(\gentzenContinuousActionLattice + \mathcal{R}\), the only point where a \(*\)-thread can be principal is at the rule \(*\text{L}\),
since it goes through the left hand side of sequents only.

Note that in non-wellfounded proofs there is, a priori, no notion of ordinal height, so we cannot do induction in the height of proofs.
For this reason, to show soundness of \(\gentzenContinuousActionLattice + \mathcal{R}\) with respect to its algebraic semantics, we will use a translation from \(\gentzenContinuousActionLattice + \mathcal{R}\) to \(\gentzenContinuousActionLatticeOmega + \mathcal{R}\).
In \cite{DP2018}, there is a simpler soundness proof for \(\gentzenContinuousActionLattice\) (without any structural rules).
This proof is tantamount to translating \(\gentzenContinuousActionLattice\) to \(\gentzenContinuousActionLatticeOmega\). The latter translation is recursive, but the translation we will define in Subsection~\ref{subsec:translation} is corecursive.

We note that an analogue of Lemma~\ref{l:adm-of-frame-rules} can be proven for \(\gentzenContinuousActionLatticeOmega + \mathcal{R}\) where \(\mathcal{R}\) is a set of analytical rules.
The proof that \(\alpha \Rightarrow \alpha\) is provable will be again by induction in the size of \(\alpha\), while the proof that \(0\mathrm{R}\) is admissible needs to be proven using corecursion.
This, together with other facts like the derivability of the \(*\mathrm{L}_\omega\) in \(\gentzenContinuousActionLatticeOmega\) proven below, allows to show (cut-free) completeness of \(\gentzenContinuousActionLattice + \mathcal{R}\) directly.
However, since in order to prove soundness for \(\gentzenContinuousActionLattice + \mathcal{R}\) we need to transform its proofs into proofs of \(\gentzenContinuousActionLatticeOmega + \mathcal{R}\), as explained in the previous paragraph, we have opted to show (cut-free) completeness via this transformation.
So a proof of soundness for \(\gentzenContinuousActionLattice + \mathcal{R}\) without stepping into wellfounded proofs will make the procedures of algebraic cut-elimination for \(\gentzenContinuousActionLattice + \mathcal{R}\) and \(\gentzenContinuousActionLatticeOmega + \mathcal{R}\) completely independant.

\begin{lemma}\label{lm:wellfounded-to-nonwellfounded}
  The rule $*\mathrm{L}_\omega$ is derivable in $\gentzenContinuousActionLattice$. Consequently, if \(\mathcal{R}\) is a any set of rules and \(\gentzenContinuousActionLatticeOmega + \mathcal{R} \vdash S\), then \(\gentzenContinuousActionLattice + \mathcal{R} \vdash S\).
\end{lemma}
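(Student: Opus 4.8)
The plan is to construct, for each natural number $n$, a derivation in $\gentzenContinuousActionLattice$ of the sequent $\Gamma, \alpha^{(n)}, \Delta \Rightarrow \beta$ into a single non-wellfounded proof of $\Gamma, \alpha^*, \Delta \Rightarrow \beta$, using the rule $*\mathrm{L}$ to ``unfold'' the starred formula step by step. First I would describe the tree: at the root, apply $*\mathrm{L}$ to $\Gamma, \alpha^*, \Delta \Rightarrow \beta$, which produces the left premise $\Gamma, \Delta \Rightarrow \beta$ (this is the $n=0$ premise of $*\mathrm{L}_\omega$, so we plug in the given proof there) and the right premise $\Gamma, \alpha, \alpha^*, \Delta \Rightarrow \beta$. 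To the latter I would again apply $*\mathrm{L}$ (to the displayed $\alpha^*$), yielding left premise $\Gamma, \alpha, \Delta \Rightarrow \beta$ (the $n=1$ premise, where we plug in the given proof) and right premise $\Gamma, \alpha, \alpha, \alpha^*, \Delta \Rightarrow \beta$. Iterating this corecursively produces an infinite tree whose finite ``side branches'' are exactly the given proofs of $\Gamma, \alpha^{(n)}, \Delta \Rightarrow \beta$, and which has a single infinite branch $b$ passing through the sequents $\Gamma, \alpha^{(n)}, \alpha^*, \Delta \Rightarrow \beta$ for increasing $n$, each obtained from the previous by $*\mathrm{L}$.

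The key remaining point is to check the branch condition: this infinite branch $b$ must carry a progressing thread. The natural candidate is the thread $t$ that at stage $n$ picks out the occurrence of $\alpha^*$ in $\Gamma, \alpha^{(n)}, \alpha^*, \Delta \Rightarrow \beta$ (the one sitting just after the block of $\alpha$'s). At each step along $b$ the rule applied is $*\mathrm{L}$ with this very occurrence of $\alpha^*$ as principal formula, and its immediate ancestor in the right premise is again the $\alpha^*$ occurrence (now with one more $\alpha$ in front of it). Thus $t$ is a $*$-thread — it stays on the same formula $\alpha^*$ and on the left-hand side throughout — and it is principal at \emph{every} node of $b$, hence at infinitely many, so $t$ is progressing. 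Therefore the constructed preproof is a genuine proof in $\gentzenContinuousActionLattice$, establishing derivability of $*\mathrm{L}_\omega$.

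The ``consequently'' clause then follows by a straightforward structural replacement: given a proof in $\gentzenContinuousActionLatticeOmega + \mathcal{R}$ of $S$, replace every instance of $*\mathrm{L}_\omega$ in it by the derivation just constructed (plugging the subproofs of the premises $\Gamma, \alpha^{(n)}, \Delta \Rightarrow \beta$ into the appropriate side branches), and leave all other rules — which are shared between the two systems, together with the rules of $\mathcal{R}$ — untouched. Since $\gentzenContinuousActionLatticeOmega + \mathcal{R}$ is wellfounded, there are only wellfoundedly many such replacements to perform, and each replacement is locally a valid $\gentzenContinuousActionLattice + \mathcal{R}$-derivation tree; one then checks that every infinite branch of the resulting tree eventually stays inside one of the inserted gadgets (since outside them the tree is wellfounded) and hence inherits the progressing thread exhibited above, so the global branch condition holds.

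I expect the main obstacle to be purely bookkeeping rather than conceptual: being careful that the thread is tracking a genuinely \emph{fixed} formula occurrence in the sense of the immediate-ancestor relation (the $\alpha^*$ in the right premise of $*\mathrm{L}$ is indeed the immediate ancestor of the principal $\alpha^*$ in the conclusion, by clause~(1) of the definition of immediate ancestor), and that in the ``consequently'' part the insertion does not create any \emph{new} infinite branch that fails the condition — which it does not, because each gadget has a unique infinite branch and that branch is progressing.
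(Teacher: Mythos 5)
Your proposal is correct and takes essentially the same route as the paper: the paper constructs exactly the same infinite tower of $*\mathrm{L}$ applications (left premises being the premises of $*\mathrm{L}_\omega$, the unique infinite branch going right and carrying the progressing $*$-thread on the displayed $\alpha^*$), and then converts wellfounded proofs by recursion on $\height(\pi)$, replacing each $*\mathrm{L}_\omega$ instance with this derivation. Your additional remark that every infinite branch of the translated proof must eventually remain inside a single inserted gadget is the same wellfoundedness consideration that the paper's recursion on height handles implicitly.
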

\begin{proof}
For any natural number \(i\) let as define the sequent
\[
  S_i := \Gamma, \alpha^{(i)}, \Delta \Rightarrow \beta.
\]
Then, we have the following derivation in \(\gentzenContinuousActionLattice\)
\[
  \AxiomC{\(\Gamma, \Delta \Rightarrow \beta\)}
  \AxiomC{\(\Gamma, \alpha, \Delta \Rightarrow \beta\)}
  \AxiomC{\(\Gamma, \alpha^{(n)}, \Delta \Rightarrow \beta\)}
  \AxiomC{\(\vdots\)}
  \noLine
  \UnaryInfC{\(S_{n+1}\)}
  \RightLabel{\(*\mathrm{L}\)}
  \BinaryInfC{\(S_n\)}
  \noLine
  \UnaryInfC{\(\vdots\)}
  \UnaryInfC{\(S_2\)}
  \RightLabel{\(*\mathrm{L}\)}
  \BinaryInfC{\(S_1\)}
  \BinaryInfC{\(\Gamma, \alpha^*, \Delta \Rightarrow \beta\)}
  \DisplayProof
\]
This derivation clearly fulfills the branch condition in the unique branch it has, since we can follow the unique \(*\)-thread of the displayed \(\alpha^*\) at the conclusion obtaining a progressing thread.

Then, to show that any proof \(\pi\) in \(\gentzenContinuousActionLatticeOmega + \mathcal{R}\) can be converted into a proof in \(\gentzenContinuousActionLattice + \mathbb{R}\) we simply do a recursion on \(\height(\pi)\), converting any instance of \(*\mathrm{L}_\omega\) into the derivation defined above.
\end{proof}

\subsection{From non-wellfounded to wellfounded proofs}
\label{subsec:translation}
In this subsection we fix a set of linear rules \(\mathcal{R}\).

Using methods similar to those of \cite{das_et_al:LIPIcs.FSTTCS.2023.40}, we show how to transform proofs in \(\gentzenContinuousActionLattice + \mathcal{R}\) into proofs in \(\gentzenContinuousActionLatticeOmega + \mathcal{R}\).
This together with Lemma~\ref{lm:wellfounded-to-nonwellfounded} implies that \(\gentzenContinuousActionLatticeOmega + \mathcal{R}\) and \(\gentzenContinuousActionLattice + \mathcal{R}\) prove the same sequents.
This will allow us to show that soundness with respect to algebraic semantics.

In this section, we work with some slightly altered proof systems.
In particular, we extend both proof systems with a rule \(\cdot \text{L} + 1\) defined as
\[
  \AxiomC{\(\Pi_l, \fbox{\(\alpha_0\)}, \fbox{\(\alpha_1\)} \Rightarrow \beta\)} 
  \RightLabel{\(\cdot \text{L} + 1\)}
  \UnaryInfC{\(\Pi_l, \fbox{\(\alpha_0 \cdot \alpha_1\)} \Rightarrow \beta\)}
  \DisplayProof
\]
This rule looks exactly equal to \(\cdot \text{L}\), but we impose that if the node \(w\) is the conclusion of this rule, then its unique child is \(w1\).
In words, even if the rule is unary the unique premise ``goes to the right''.
Also, we change the \(\omega\)-rule in the wellfounded system \(\gentzenContinuousActionLatticeOmega + \mathcal{R}\) for
\[
  \AxiomC{\(\Pi_l, \Pi_r \Rightarrow \beta\)}
  \AxiomC{\((\Pi_l, \fbox{\(\alpha\)}, \fbox{\(\alpha^i\)}, \Pi_r \Rightarrow \beta)_{i \in \NN}\)}
  \RightLabel{\(*\text{L}_\omega\)}
  \BinaryInfC{\(\Pi_l, \fbox{\(\alpha^*\)}, \Pi_r \Rightarrow \beta\)}
  \DisplayProof
\]
We denote this modified system as \({\gentzenContinuousActionLatticeOmega}' + \mathcal{R}\).

Then to translate a proof \(\pi\) in \(\gentzenContinuousActionLattice + \mathcal{R}\) to \(\gentzenContinuousActionLatticeOmega + \mathcal{R}\), we start by noticing that \(\pi\) is already a proof in \(\gentzenContinuousActionLattice + (\cdot \text{L} + 1) + \mathcal{R}\).
So we can apply the method defined in this subsection, obtaining a proof \(\tau\) in \({\gentzenContinuousActionLatticeOmega}' + (\cdot \text{L} + 1) + \mathcal{R}\).
It is clear how to convert this proof \(\tau\) into a proof \(\tau'\) in \({\gentzenContinuousActionLatticeOmega}' + \mathcal{R}\).

Finally, by induction in the ordinal height, we can convert \(\tau'\) into a proof \(\rho\) in \(\gentzenContinuousActionLatticeOmega + \mathcal{R}\), using Lemma~\ref{lm:invertible} below; this lemma is itself proven by induction in the height of \(\pi\).

\begin{lemma}\label{lm:invertible}
  The rule \(\cdot \text{L}\) is height-preserving invertible in \(\gentzenContinuousActionLatticeOmega + \mathcal{R}\), i.e., if \(\pi \vdash \Gamma, \alpha \cdot \beta, \Delta \Rightarrow \gamma\) in \(\gentzenContinuousActionLatticeOmega + \mathcal{R}\), then there is a proof \(\tau \vdash \Gamma, \alpha, \beta,\Delta\Rightarrow \gamma\) such that \(\height(\tau) \leq \height(\pi)\).
\end{lemma}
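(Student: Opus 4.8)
The plan is to prove Lemma~\ref{lm:invertible} by induction on $\height(\pi)$, following the standard strategy for proving height-preserving invertibility of a left rule in a sequent calculus, with attention paid to the fact that $\gentzenContinuousActionLatticeOmega + \mathcal{R}$ contains the infinitary rule $*\mathrm{L}_\omega$. Given $\pi \vdash \Gamma, \alpha \cdot \beta, \Delta \Rightarrow \gamma$, we do a case analysis on the last rule $R$ applied in $\pi$. The base cases are the axioms $\id$, $0\mathrm{L}$, $1\mathrm{R}$, $*\mathrm{R}_0$: in each of these the displayed occurrence $\alpha \cdot \beta$ can only sit in a context position (it is not atomic, not $0$, not $1$, not a $*$-formula, and the right-hand side is not of the displayed shape), so the same axiom instance directly yields $\Gamma, \alpha, \beta, \Delta \Rightarrow \gamma$, now of height $0 \le \height(\pi)$.

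For the inductive step, the key observation is that the occurrence of $\alpha \cdot \beta$ we are decomposing lies in the context of the last rule $R$, \emph{unless} $R = \cdot\mathrm{L}$ and $\alpha \cdot \beta$ is its principal formula. In the latter case, $\pi$ already has a premise proof of $\Gamma, \alpha, \beta, \Delta \Rightarrow \gamma$ of strictly smaller height, and we are done immediately. In all other cases, the occurrence $\alpha\cdot\beta$ appears in the context metavariable(s) of $R$, so each premise $S_j$ of $R$ contains a corresponding occurrence of $\alpha\cdot\beta$ in a context position; I apply the induction hypothesis to the (strictly shorter) subproof $\pi_j$ ending in $S_j$, obtaining proofs $\tau_j$ of the premises with $\alpha\cdot\beta$ replaced by $\alpha,\beta$ and with $\height(\tau_j) \le \height(\pi_j)$, and then reapply $R$ with the context suitably adjusted (splitting whichever context metavariable of $R$ contained the occurrence). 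Since $R$ is one of the principal rules of Figure~\ref{fig:common-rules} (with $*\mathrm{L}$ replaced by $*\mathrm{L}_\omega$) or a rule from $\mathcal{R}$, and since all such rules have contexts built from sequence metavariables on the left, this reassembly is routine; the resulting proof has height $\le \height(\pi)$ because each $\height(\tau_j) \le \height(\pi_j) < \height(\pi)$ and the supremum over the (possibly infinitely many) premises plus one is still $\le \height(\pi)$.

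The one case that needs genuine care is $R = *\mathrm{L}_\omega$, because it has $\omega$-many premises $(\Pi_l, \alpha', (\alpha')^i, \Pi_r \Rightarrow \gamma)_{i\in\NN}$ together with the base premise $\Pi_l, \Pi_r \Rightarrow \gamma$, and the occurrence of $\alpha\cdot\beta$ we are tracking lies inside one of the context metavariables $\Pi_l$, $\Pi_r$ and hence appears in \emph{all} of these premises simultaneously. Here I apply the induction hypothesis to each of the $\omega$-many subproofs (legitimate since each has height strictly below $\height(\pi)$), collect the resulting proofs $\tau_i$ and $\tau_{\mathrm{base}}$, and reapply $*\mathrm{L}_\omega$; the bookkeeping is that replacing $\alpha\cdot\beta$ by $\alpha,\beta$ commutes with splitting $\Pi_l$ (or $\Pi_r$) around the principal $(\alpha')^*$-occurrence, so the family of conclusions still forms a valid instance of $*\mathrm{L}_\omega$. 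The height bound survives because $\height$ of a wellfounded tree dominates the supremum of heights of immediate subtrees, so $\sup_i \height(\tau_i) + 1 \le \sup_i \height(\pi_i) + 1 \le \height(\pi)$.

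The main obstacle, such as it is, is purely notational: one must be careful that the displayed occurrence of $\alpha\cdot\beta$ in the conclusion is correctly traced through the context metavariables of each rule — in particular for two-premise rules like $\cdot\mathrm{R}$, $\lresidual\mathrm{L}$, $\rresidual\mathrm{L}$, $\vee\mathrm{L}$ and $*\mathrm{R}_1$, where the conclusion's left context is split between the two premises and the occurrence lands in exactly one of the two parts — and that after applying the induction hypothesis the split is performed on the lengthened sequence $\Gamma, \alpha, \beta, \Delta$ in the matching place. No new ideas beyond standard height-preserving-invertibility bookkeeping are required; the infinitary rule $*\mathrm{L}_\omega$ is handled uniformly with the finitary cases because the induction is on ordinal height rather than on a natural number.
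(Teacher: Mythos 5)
Your proposal is correct and matches the paper's approach: the paper gives no detailed proof of Lemma~\ref{lm:invertible}, saying only that it "is itself proven by induction in the height of $\pi$," which is exactly the transfinite induction on ordinal height with case analysis on the last rule that you carry out (including the uniform treatment of the infinitary rule $*\mathrm{L}_\omega$). The one detail worth adding to your "routine reassembly" step is that for a linear structural rule in $\mathcal{R}$ whose context metavariable containing the tracked occurrence is repeated in a premise (e.g.\ the rule C), that premise contains several corresponding occurrences of $\alpha\cdot\beta$, so the height-preserving induction hypothesis must be applied once per occurrence --- which your setup already permits, since each application preserves the height bound.
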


We will write \(\nwProof, \wProof\) to denote the sets of proofs of \(\gentzenContinuousActionLattice + (\cdot \text{L} + 1) + \mathcal{R}\) and \({\gentzenContinuousActionLatticeOmega}' + (\cdot \text{L} +1) + \mathcal{R}\).
Similarly, \(\text{p}\nwProof\) and \(\text{p}\wProof\) will be used to denote the corresponding sets of preproofs.

Given a sequent \(S =(\Gamma \Rightarrow \beta)\) we say that \(S' = (\Gamma' \Rightarrow \beta)\) is a \(*\)-projection of \(S\) if \(\Gamma'\) is obtained by replacing the \(*\) of some \(*\)-formulas in \(\Gamma\) by some natural number (not necessarily the same in each formula).
The desired translation will be constructed using that if \(\gentzenContinuousActionLattice + (\cdot \text{L} + 1) + \mathcal{R}\) proves \(S\) it also proves \(S'\).

\begin{definition}
  Let \(S = \alpha_0,\ldots, \alpha_{n-1} \Rightarrow \beta\) be a sequent and further let 
  \(f \colon \set{0,\ldots, n - 1} \longrightarrow \NN_\bot\).
  We say that \(f\) is a \emph{\(*\)-assignment for} \(S\) if
  \(f(i) \neq \bot\) implies that \(\alpha_i\) is a \(*\)-formula.
  
  If \(f\) is a \(*\)-assignment of \(S\), we define the
  \emph{\(f\)-projection of \(S\)}, denoted as \(S^f\),
  to be the sequent \(\alpha'_0, \ldots, \alpha'_n \Rightarrow \beta\) where
  \begin{enumerate}
    \item \(\alpha'_i := \alpha_i\) if \(f(i) = \bot\).
    \item \(\alpha'_i := \beta^n_i\) if \(f(i) = n\) and \(\alpha_i = \beta_i^*\). \qedhere
  \end{enumerate}
\end{definition}

In case we denote \(S\) via its components (either with formulas or formula sequences) then we will denote \(S^f\) by adding a \(^f\) to each of its components in the LHS.
For example, if \(S = \Gamma, \alpha_0, \alpha_1,\Delta \Rightarrow \beta\), then \(\Gamma^f, \alpha^f_0, \alpha^f_1, \Delta^f \Rightarrow \beta\) denotes \(S^f\).

Given \(f\) a \(*\)-assignment for \(S\) and a rule instance
  \[
    \AxiomC{\(S_0\)}
    \AxiomC{\(\cdots\)}
    \AxiomC{\(S_{n-1}\)}
    \TrinaryInfC{\(S\)}
    \DisplayProof
  \]
  which is either linear or principal with \(f(\phi) = \bot\) for the principal formula \(\phi\),
  we define
  \label{df:evolution-of-assignment}
  \[
    f_i(q) =
    \begin{cases}
    f(q') &\text{if \(q\) is an immediate ancestor} \\
            &\text{of some }q'\text{ in }S_i, \\
    \bot &\text{if there is no \(q'\) in \(S_i\) such that} \\ 
         &\text{\( q\) is an immediate ancestor of \(q'\)}.
    \end{cases}
  \]
  This is a \(*\)-assignment function for \(S\).
  
  The next lemma is an easy consequence of the definition of linear rules.

  \begin{lemma}\label{lm:rule-uniformity}
  Let \(f\) be a \(*\)-assignment for \(S\) and 
  \[
    \AxiomC{\(S_0\)}
    \AxiomC{\(\cdots\)}
    \AxiomC{\(S_{n-1}\)}
    \TrinaryInfC{\(S\)}
    \DisplayProof
  \]
  be an instance of a rule \(R\) which is either linear, \(\text{id}\) or principal with principal formula \(\phi\) such that \(f(\phi) = \bot\).
  Then
  \[
    \AxiomC{\(S_0^{f_0}\)}
    \AxiomC{\(\cdots\)}
    \AxiomC{\(S_{n-1}^{f_{n-1}}\)}
    \TrinaryInfC{\(S^f\)}
    \DisplayProof
  \]
  is also an instance of \(R\).
\end{lemma}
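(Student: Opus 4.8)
The plan is to prove Lemma~\ref{lm:rule-uniformity} by a careful case analysis on the shape of the rule $R$, in each case checking that applying the projection operation $(-)^{f}$ (respectively $(-)^{f_i}$ to the premises) yields another genuine instance of $R$. The key conceptual point is that the $*$-assignment $f$ together with its induced assignments $f_i$ are defined precisely so as to be compatible with the immediate-ancestor relation: a formula occurrence in a premise $S_i$ gets the value $f(q')$ of the occurrence $q'$ it is an immediate ancestor of, and $\bot$ otherwise. So the projection replaces a $*$ by a natural number at a premise occurrence exactly when it does so at the corresponding conclusion occurrence (or at an auxiliary occurrence linked to the principal formula). Since $f(\phi) = \bot$ at the principal formula $\phi$, the principal formula and all its auxiliary ancestors are left untouched by the projection, so the active part of the rule instance is unaffected; only context formulas and side formulas can be modified, and these are modified uniformly.

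The first step is to handle the structural (linear) rules. Here the conclusion $\Upsilon_k \Rightarrow \alpha_k$ consists of distinct sequence metavariables and formula metavariables, and each premise is built from the same metavariables. If the instantiation sends a sequence metavariable $X$ to a sequence of formulas $\gamma_1,\ldots,\gamma_r$, then $f$ restricted to the block occupied by $X$ in the conclusion determines, via the immediate-ancestor clause (3) for sequence metavariables, exactly the restriction of each $f_i$ to the block occupied by $X$ in $S_i$ — and it is the \emph{same} restriction at each occurrence of $X$, because all occurrences of $X$ are filled by the same sequence. Hence projecting replaces the $j$-th formula of that block by a fixed power consistently across the conclusion and all premises; the result is the instance of $R$ under the new instantiation sending $X$ to $\gamma_1',\ldots,\gamma_r'$ (and each formula metavariable $\beta$ to $\beta$ itself, since those occur on the right and are unaffected). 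The case of $\mathrm{id}$ is immediate since $a$ is a variable, not a $*$-formula, so $f$ must be $\bot$ there and $S^f = S$.

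The second step is the principal rules from Figure~\ref{fig:common-rules}, under the hypothesis $f(\phi) = \bot$ for the principal formula $\phi$. For each such rule the principal formula is not projected, so the decomposition of $\phi$ into its subformulas in the premises (the auxiliary formulas) proceeds identically in $S^f$ as in $S$; and the auxiliary formulas, being immediate ancestors of $\phi$, receive $f_i$-value $\bot$, so they too are unprojected. The context sequence metavariables $\Gamma,\Delta,\Sigma$ are handled just as in the structural case — their occurrences are filled by the same sequences in conclusion and premises, and the $f_i$-values on those blocks agree with the $f$-values — so projecting them is uniform. One must also check the rules where a context metavariable is \emph{split} between premises (e.g.\ $\cdot\mathrm{R}$, $\lresidual\mathrm{L}$, $*\mathrm{R}_1$): there the relevant portion of $f$ restricts to the relevant portion of each $f_i$, which is exactly what the immediate-ancestor bookkeeping records, so again no inconsistency arises. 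Finally, the $*\mathrm{L}_\omega$ rule and the modified rules $\cdot\mathrm{L}+1$ and the primed $*\mathrm{L}_\omega$ are checked the same way, using that the principal $*$-formula $\alpha^*$ has $f$-value $\bot$ so its $\omega$-many ancestors $\alpha^{(n)}$ (or $\alpha,\alpha^n$) are left alone.

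I expect the main obstacle to be purely bookkeeping: making fully precise the claim that ``the restriction of $f$ to a block occupied by a sequence metavariable equals the restriction of $f_i$ to the corresponding block,'' which requires unwinding the definition of immediate ancestor for sequence metavariables (clause (3), ``same position in the sequence of formulas being used for the instantiation'') together with the defining equation for $f_i$. There is no real mathematical difficulty — the definitions were engineered for exactly this — but one has to be meticulous that linearity of $R$ (ensuring each sequence metavariable of the conclusion occurs with no repetition, and is matched by genuine occurrences in the premises) is what guarantees that the new instantiation is well defined and that the premise projections $S_i^{f_i}$ really arise from it. This is why the lemma is stated only for linear (or principal-with-$\bot$) rules, and why the excerpt calls it ``an easy consequence of the definition of linear rules.''
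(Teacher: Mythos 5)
Your proposal is correct and follows essentially the same route as the paper's own (very terse) proof: a case split into linear/structural rules (where pairwise distinctness of the conclusion's metavariables guarantees the projection acts uniformly and yields a new instantiation), the identity axiom (atomic, hence unaffected), and principal rules (where $f(\phi)=\bot$ forces the auxiliaries to be unprojected, so only the context changes, and it changes uniformly). Your write-up merely makes explicit the bookkeeping about immediate ancestors and sequence-metavariable blocks that the paper leaves implicit.
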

\begin{proof}
  In structural rules the result follows from the restriction of all the metavariables in the conclusion being pairwise distinct.
  The case for the identity rule is trivial, since the only formulas appearing are atomic so the \(*\)-assignment will not change the sequent.
  For principal rules we just notice that since, the principal formula is unchanged by \(f\), so are the auxiliaries. 
  Then only the context changes.
  However, it is easy to see that it changes uniformly, i.e.\ each of the instantiations of the metavariables for sequences are replaced by a new instantiation, giving rise to another instance of the rule.
\end{proof}

\begin{definition}
  Let \(n \in \NN\), we define the projection function \(\Projection\) from
  \[
    \set{(\pi, f) \mid \pi \in \text{p}\mathbb{P}^\infty,
    f\, *\text{-assignment for }S^\pi}
  \]
  to \(\text{p}\mathbb{P}^\infty\).
  We define it corecursively, writing \(\Projection(\pi,f)\) as \(\Projection^f(\pi)\).

  If \(k\) is principal and \(f(k) = 0\), we define
  \[
    \hspace{-0.4cm}
    \Projection^f\left(
      \AxiomC{\(\pi_0\)}
      \noLine
      \UnaryInfC{\(\Pi_l, \Pi_r \Rightarrow \beta\)}
      \AxiomC{\(\pi_1\)}
      \noLine
      \UnaryInfC{\(\Pi_l, \alpha,\alpha^*, \Pi_r \Rightarrow \beta\)}
      \RightLabel{\(*\)L}
      \BinaryInfC{\(\Pi_l, \alpha^*, \Pi_r \Rightarrow \beta\)}
      \DisplayProof
    \right) {=}
  \]

  \[
    \AxiomC{\(\Projection^{f'}(\pi_0)\)}
    \noLine
    \UnaryInfC{\(\Pi^{f'}_l, \Pi^{f'}_r \Rightarrow \beta\)}
    \RightLabel{\(1\text{L}\)}
    \UnaryInfC{\(\Pi^f_l,\alpha^{0}, \Pi^f_r \Rightarrow \beta\)}
    \DisplayProof
  \]
  where \(\alpha^*\) in the conclusion is at position \(k\) and
  \[
    f' := (f \restricts k) \cup \set{(k'-1, n) \mid 
      (k', n) \in f, k' > k}.
  \]

  If \(k\) is principal and \(f(k) \in \mathbb{N}\setminus \set{0}\), we define
  \[
    \hspace{-0.4cm}
    \Projection^f\left(
      \AxiomC{\(\pi_0\)}
      \noLine
      \UnaryInfC{\(\Pi_l, \Pi_r \Rightarrow \beta\)}
      \AxiomC{\(\pi_1\)}
      \noLine
      \UnaryInfC{\(\Pi_l, \alpha, \alpha^*, \Pi_r \Rightarrow \beta\)}
      \RightLabel{\(*\)L}
      \BinaryInfC{\(\Pi_l, \alpha^*, \Pi_r \Rightarrow \beta\)}
      \DisplayProof
    \right) =
  \]
  \[
    \AxiomC{\(\Projection^{f'}(\pi_1)\)}
    \noLine
    \UnaryInfC{\(\Pi^{f'}_l, \alpha, \alpha^{n-1}, \Pi^{f'}_r \Rightarrow \beta\)}
    \RightLabel{\(\cdot\text{L} + 1\)}
    \UnaryInfC{\(\Pi^{f'}_l, \alpha^{n}, \Pi^{f'}_r \Rightarrow \beta\)}
    \DisplayProof
  \]
  where \(\alpha^*\) at the conclusion is at position \(k\) and
  \begin{multline*}
    f' := (f\restricts k) \cup \set{(k,\bot),(k+1, n-1)} \cup \\
      \set{(k'+1, n) \mid (k', n) \in f \text{ and }k' > k}.
    \end{multline*}
  
  If \(k\) is principal and \(f(k) = \bot\) or there is no principal formula, we define
  \[
    \Projection^f\left(
    \AxiomC{\(\ldots\)}
    \AxiomC{\(\pi_{i}\)}
    \noLine
    \UnaryInfC{\(S_i\)}
    \AxiomC{\(\ldots\)}
    \RightLabel{\(r\)}
    \TrinaryInfC{\(S\)}
    \DisplayProof
    \right) =
    \AxiomC{\(\ldots\)}
    \AxiomC{\(\Projection^{f_i}(\pi_i)\)}
    \noLine
    \UnaryInfC{\(S^{f_i}_i\)}
    \AxiomC{\(\ldots\)}
    \RightLabel{\(r\)}
    \TrinaryInfC{\(S^f\)}
    \DisplayProof
  \]
  where the \(f_i\)'s were defined above.
  Finally, we note that it is easy to check that the result lies in \(\text{p}\mathbb{P}^\infty\) in the first two cases.
  In the third case we just need to use Lemma~\ref{lm:rule-uniformity}.
\end{definition}

The following is proven by induction on the length of nodes.
\begin{lemma}\label{lm:nodes-of-projection-are-nodes-of-original}
  For  \(\text{p}\mathbb{P}^\infty\), \(\nodes(\Projection^f(\pi)) \subseteq \nodes(\pi)\).
  In particular,
  \(\branches(\Projection^f(\pi)) \subseteq \branches(\pi)\).
\end{lemma}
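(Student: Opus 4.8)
The plan is to prove the inclusion $\nodes(\Projection^f(\pi)) \subseteq \nodes(\pi)$ by induction on the length of nodes, with the statement stated uniformly over all pairs $(\pi,f)$ so that the corecursion can be unfolded one layer at a time. Concretely, I would show: for every $n \in \NN$, every $\pi \in \text{p}\nwProof$, every $*$-assignment $f$ for $S^\pi$, and every node $w$ with $\length(w) \le n$, if $w \in \nodes(\Projection^f(\pi))$ then $w \in \nodes(\pi)$. The base case $n = 0$ is immediate, since the only node of length $0$ is $\epsilon$, which belongs to every nonempty (pre)proof tree.

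For the inductive step, fix $w$ with $\length(w) = n+1$ (the case $\length(w) \le n$ is handled directly by the induction hypothesis) and write $w = i w''$ with $i \in \NN$ and $\length(w'') = n$. I would then do a case analysis on the root rule of $\pi$ mirroring the three clauses of the definition of $\Projection$. In the first clause ($*\mathrm{L}$ principal with $f(k) = 0$), the root of $\Projection^f(\pi)$ carries the unary rule $1\mathrm{L}$, so its only successor sits at position $0$ and the subtree there is $\Projection^{f'}(\pi_0)$; hence $i = 0$ and $w'' \in \nodes(\Projection^{f'}(\pi_0))$, so $w'' \in \nodes(\pi_0)$ by the induction hypothesis applied to $(\pi_0, f')$ and $n$, and since $\pi_0$ is the subtree of $\pi$ at node $0$ we conclude $w = 0 w'' \in \nodes(\pi)$. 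The second clause ($*\mathrm{L}$ principal with $f(k) \in \NN \setminus \{0\}$) is handled the same way, the only difference being that the root rule of $\Projection^f(\pi)$ is now $\cdot \text{L} + 1$, whose unique successor, by the special convention for that rule, sits at position $1$ — which matches the position of $\pi_1$ in the $*\mathrm{L}$ instance of $\pi$ — so $i = 1$, $w'' \in \nodes(\Projection^{f'}(\pi_1))$, and the induction hypothesis for $(\pi_1, f')$ gives $w'' \in \nodes(\pi_1)$, hence $w \in \nodes(\pi)$. In the third clause ($f(k) = \bot$ or no principal formula) the root rule of $\Projection^f(\pi)$ coincides with that of $\pi$ and has its premises in the same positions (here one invokes Lemma~\ref{lm:rule-uniformity} to know this genuinely is an instance of the same rule), so from $w = i w'' \in \nodes(\Projection^f(\pi))$ we get that $i$ is a legitimate child position and $w'' \in \nodes(\Projection^{f_i}(\pi_i))$; the induction hypothesis for $(\pi_i, f_i)$ yields $w'' \in \nodes(\pi_i)$, whence $w = i w'' \in \nodes(\pi)$.

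For the ``in particular'' clause, I would simply note that membership of an infinite sequence $(w_j)_{j \in \NN}$ in $\branches(T)$ is determined by the purely syntactic conditions $w_0 = \epsilon$ and $w_{j+1} = w_j n$ for some $n$, together with $w_j \in \nodes(T)$ for all $j$; the first two conditions are intrinsic to the strings $w_j$ and so are automatically preserved, while the third transfers via the inclusion just proved. Hence every branch of $\Projection^f(\pi)$ is a branch of $\pi$.

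I do not expect a genuine obstacle: the argument is bookkeeping about which successor position each clause of the definition of $\Projection$ uses. The one point that needs care — and the reason the lemma is isolated rather than treated as obvious — is the shift in successor positions caused by the $\cdot \text{L} + 1$ convention in the second clause and the collapse of the binary $*\mathrm{L}$ into a unary rule in the first two clauses; getting these positions right is exactly what makes $\nodes(\Projection^f(\pi))$ a subset of $\nodes(\pi)$ (and not merely ``the same tree up to relabelling'').
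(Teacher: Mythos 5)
Your proof is correct and follows exactly the route the paper indicates: the paper dispatches this lemma with the single remark that it ``is proven by induction on the length of nodes,'' and your argument is a careful unfolding of that induction, including the one genuinely delicate point (the successor position shift forced by the \(\cdot\mathrm{L}+1\) convention in the second clause of \(\Projection\)). Nothing further is needed.
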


If we project the starting point of a progressing thread in a branch \(b\), then the branch \(b\) cannot also be a branch of the projection, as the following lemma records.
\begin{lemma}\label{lm:star-progressing-thread-implies-stucked}
  Let \(\pi \in \mathrm{p}\nwProof\),
  \(b \in \branches(\pi)\),
  and \(f\) be a \(*\)-assignment for \(S^\pi\).
  If there is a \((\pi,b)\)-progressing \(*\)-thread \(t\) starting at \(0\) such
  that \(f(t_0) \neq \bot\), then \(b \not \in \branches(\Projection^f(\pi))\).
\end{lemma}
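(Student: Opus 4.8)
The plan is to track the progressing $*$-thread $t$ through the corecursive construction of $\Projection^f(\pi)$ and show that it forces the branch to terminate — i.e., at some finite stage the node along $b$ ceases to be a node of the projection. By Lemma~\ref{lm:nodes-of-projection-are-nodes-of-original} we know $\nodes(\Projection^f(\pi)) \subseteq \nodes(\pi)$, so it suffices to exhibit some $n$ such that the prefix $b_n$ of $b$ is \emph{not} a node of $\Projection^f(\pi)$, which contradicts $b \in \branches(\Projection^f(\pi))$. The key quantity to follow is the value $f^{(i)}(t_i)$, where $f^{(i)}$ denotes the $*$-assignment carried to node $b_i$ by the corecursion (defined via the evolution $f \mapsto f_i$ from Definition~\ref{df:evolution-of-assignment}), together with the position in the sequent occupied by the formula occurrence $t_i$. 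Since $t$ is a $*$-thread, $S^\pi_{b_i}(t_i) = \alpha^*$ for a fixed $\alpha$ and all $i$, and $t_{i+1}$ is always an immediate ancestor of $t_i$.

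First I would set up an invariant: for each $i$ in the domain of $t$ (or rather, in the domain of the suffix $*$-thread, but by relabelling we may assume $t$ itself is the $*$-thread starting at $0$), $f^{(i)}(t_i) \neq \bot$, so $f^{(i)}(t_i) = m_i$ for some natural number $m_i$. This is the base hypothesis $f(t_0) \neq \bot$; the inductive step divides into cases on the rule $R^\pi_{b_i}$ applied at $b_i$. If $t_i$ is \emph{not} principal at $b_i$, then we are in the third clause of the definition of $\Projection$: the rule is reapplied with the evolved assignments $f_j$, and because $t$ follows an immediate ancestor, the new value $f^{(i+1)}(t_{i+1}) = f^{(i)}(t_i) = m_i$ is unchanged (and still $\neq \bot$), and the thread persists along $b$. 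If $t_i$ \emph{is} principal — which, as noted in the text, can only happen at the $*\mathrm{L}$ rule — then $k := t_i$ is the principal position, $f^{(i)}(k) = m_i \in \NN$, and we use one of the first two clauses: if $m_i = 0$, the projection replaces $\alpha^*$ by $\alpha^{0}$ and applies $1\mathrm{L}$, so the formula occurrence is consumed and is no longer a $*$-formula — the thread cannot continue as a $*$-thread; if $m_i > 0$, the projection replaces $\alpha^{m_i}$ (i.e. $\alpha^*$ with assignment $m_i$) by $\alpha, \alpha^{m_i - 1}$ via $\cdot\mathrm{L}+1$, and the new assignment has $f'(k+1) = m_i - 1$, so $m_{i+1} = m_i - 1$: the value strictly decreases.

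So the two facts to extract are: (i) every progress point of the $*$-thread $t'$ (a point where $t'_i$ is principal, hence an application of $*\mathrm{L}$ with the assignment value currently positive) strictly decreases $m_i$ by one, while non-progress points leave $m_i$ unchanged; and (ii) when $m_i$ reaches $0$ and the next $*\mathrm{L}$ application hits that occurrence, the corecursion emits $1\mathrm{L}$ instead, so the child node $b_{i+1}$ of $b_i$ — which in $\pi$ is $b_i 1$ (the right premise of $*\mathrm{L}$) — is \emph{not} a node of $\Projection^f(\pi)$, since the $1\mathrm{L}$ step there is unary with its child at position $0$. Because $t$ is progressing, $t'$ has infinitely many progress points, so along the branch there are infinitely many strict decreases of a natural number, which is impossible unless $m_i$ hits $0$ first and triggers case (ii) at some finite stage $i_0$; then $b_{i_0+1} \notin \nodes(\Projection^f(\pi))$, contradicting $b \in \branches(\Projection^f(\pi))$. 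The main obstacle I anticipate is the bookkeeping in case (ii): one must check carefully that when the assignment value is $0$ the very next time the thread's occurrence is principal (an instance of $*\mathrm{L}$), the corecursion genuinely diverts to the $1\mathrm{L}$ branch and drops the right child, and that no intervening rule application can "refill" the assignment value from $0$ back to a positive number — this follows because the evolution $f \mapsto f_i$ only ever copies existing values along immediate ancestors (never increases them) except at the $m>0$ case of $*\mathrm{L}$ where it decreases, so $m_i$ is monotonically non-increasing along $t$. Handling the indexing shift in $f'$ (the reindexing $k' \mapsto k'\pm 1$ of positions to the right of $k$) is routine but must be done with care to confirm the thread occurrence stays correctly tracked.
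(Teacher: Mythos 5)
Your proof is correct and follows essentially the same route as the paper's: you track the assignment value of the thread's occurrence through the corecursion, note that it is preserved at non-principal steps and strictly decreases at each progress point, and conclude that since the thread has infinitely many progress points the value must reach $0$ at a principal step, where the projection emits $1$L and drops the right child that the branch is forced to enter. The paper packages exactly this descent as a lexicographic induction on the pair $(f(t_0),m)$, with $m$ the distance to the next progress point, but the content is the same.
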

\begin{proof}
  We do induction in the pairs \((f(t_0), m)\) ordered lexicographically, where
  \(m\) is the distance to the next time the thread \(t\) is principal in
  \((\pi,b)\).

  Case 1: \(t\) is principal in the last step of \(\pi\) and \(f(t_0) = 0\).
  Then \(\pi\) and \(\Projection^f(\pi)\) are, respectively, of shape
    \[
      \AxiomC{\(\pi_0\)}
      \noLine
      \UnaryInfC{\(\Pi_l, \Pi_r \Rightarrow \beta\)}
      \AxiomC{\(\pi_1\)}
      \noLine
      \UnaryInfC{\(\Pi_l, \alpha, \alpha^* \Pi_r \Rightarrow \beta\)}
      \BinaryInfC{\(\Pi_l, \alpha^*, \Pi_r \Rightarrow \beta\)}
      \DisplayProof
      \hspace{1cm}
      \AxiomC{\(\Projection^{f'}(\pi_0)\)}
      \noLine
      \UnaryInfC{\(\Pi'_l, \Pi'_r \Rightarrow \beta\)}
      \RightLabel{\(1\)L}
      \UnaryInfC{\(\Pi'_l, 1, \Pi'_r \Rightarrow \beta\)}
      \DisplayProof
    \]
  where the displayed \(\alpha^*\) in the conclusion is at position \(t_0\).
  However, since \(t\) is a \(*\)-thread which progresses infinitely often it
  must be the case that \(b = 1 : \tail(b)\), i.e.\ that it goes to the right.
  We can see that \(1\) is not a node in \(\Projection^f(\pi)\), so
  \(b \not \in \branches(\Projection^f(\pi))\).

  Case 2: \(t\) is principal in the last step of \(\pi\) and \(f(t_0) = n > 0\).
  Then \(\pi\) and \(\Projection^f(\pi)\) are, respectively, of shape
  \[
      \AxiomC{\(\pi_0\)}
      \noLine
      \UnaryInfC{\(\Pi_l, \Pi_r \Rightarrow \beta\)}
      \AxiomC{\(\pi_1\)}
      \noLine
      \UnaryInfC{\(\Pi_l, \alpha, \alpha^* \Pi_r \Rightarrow \beta\)}
      \RightLabel{\(*\text{L}\)}
      \BinaryInfC{\(\Pi_l, \alpha^*, \Pi_r \Rightarrow \beta\)}
      \DisplayProof
    \]
    \[
      \AxiomC{\(\Projection^{f'}(\pi_1)\)}
      \noLine
      \UnaryInfC{\(\Pi'_l, \alpha, \alpha^{n-1}, \Pi'_r \Rightarrow \beta\)}
      \RightLabel{\(\cdot \text{L} + 1\)}
      \UnaryInfC{\(\Pi'_l, \alpha^n, \Pi'_r \Rightarrow \beta\)}
      \DisplayProof
    \]
  where \(\alpha^*\) in the conclusion is at position \(t_0\).
  Again, as \(t\) is a \(*\)-thread which progresses infinitely often it
  must be the case that \(b = 1 : \tail(b)\), i.e.\ that it goes to the right premise.
  We can apply the induction hypothesis to \(\pi_1 \in \text{p}\nwProof\),
  \(\tail(b) \in \branches(\pi_1)\), \(f'\) \(*\)-assignment for \(S^{\pi_1}\)
  and \(\tail(t)\), since it is a \((\pi_1, \tail(b))\)-progressing \(*\)-thread
  starting at \(0\) such that \(f'(\tail(t)_0) = f'(t_1) = f(t_0) - 1 \neq \bot\),
  as the measure has decreased.

  Case 3: \(t\) is not principal in the last step of \(\pi\).
  In this case either \(t_0\) belongs to the context of a principal rule or the
  last rule is structural.
  Also \(b = i : \tail(b)\), for some \(i\).
  \(\pi\) and \(\Projection^f(\pi)\) will be of shape:
    \[ 
      \AxiomC{\(\cdots\)}
      \AxiomC{\(\pi_i\)}
      \noLine
      \UnaryInfC{\(S_i\)}
      \AxiomC{\(\cdots\)}
      \RightLabel{\(r\)}
      \TrinaryInfC{\(S\)}
      \DisplayProof
      \hspace{1cm}
      \AxiomC{\(\cdots\)}
      \AxiomC{\(\Projection^{f_i}(\pi_i)\)}
      \noLine
      \UnaryInfC{\(S^{f_i}_i\)}
      \AxiomC{\(\cdots\)}
      \RightLabel{\(r'\)}
      \TrinaryInfC{\(S'\)}
      \DisplayProof
    \]
  Then will have that \(\pi_i \in \text{p}\nwProof\),
  \(\tail(b) \in \branches(\pi_i)\),
  \(f_i\) is a \(*\)-assignment for \(S^{\pi_i}\),
  and \(\tail(t)\) will be a \((\pi_1, \tail(b))\)-progressing \(*\)-thread
  starting at \(0\) such that \(f_i(\tail(t)_0) = f_i(t_1) = f(t_0) \neq \bot\),
  since \(t_1\) must be an immediate ancestor of \(t_0\).
  However, notice that the \(i\) such that \(\tail(t)_i\) is principal must be 
  smaller than the first \(i\) such that \(t_i\) is principal (as \(t_0\) is not
  principal).
  This mean that in the order the first number remains the same
  (\(f_i(t_1) = f(t_0)\)) while the second number strictly decreases.
  We can finish the proof just by applying the induction hypothesis.
\end{proof}

The following corollary intuitively says that progressing \(*\)-threads and projected \(*\)-threads cannot coincide at any point.

\begin{corollary}\label{cor:progressing-distinct-from-all-principal}
  Let \(\pi \in \nwProof\), 
  \(f\) be a \(*\)-assignment for \(S^\pi\)
  and \(b \in \branches(\Projection^f(\pi))\).
  If \(t\) is a \((\pi,b)\)-progressing \(*\)-thread, then for each  \((\pi,b)\) \(*\)-thread $t'$ starting at \(0\) with \(f(t'_0) \neq \bot\), we have \(t_i \neq t'_i\) for any \(i \in \domain(t) \cap \domain(t')\).
\end{corollary}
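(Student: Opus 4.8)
The plan is to argue by contradiction, reducing to Lemma~\ref{lm:star-progressing-thread-implies-stucked} via a splicing construction. Suppose $t$ is a $(\pi,b)$-progressing $*$-thread, $t'$ is a $(\pi,b)$ $*$-thread starting at $0$ with $f(t'_0) \neq \bot$, and $t_i = t'_i$ for some $i \in \domain(t) \cap \domain(t')$; I will derive $b \notin \branches(\Projection^f(\pi))$, contradicting the hypothesis. Note first that $\pi \in \nwProof \subseteq \mathrm{p}\nwProof$ and, by Lemma~\ref{lm:nodes-of-projection-are-nodes-of-original}, $b \in \branches(\Projection^f(\pi)) \subseteq \branches(\pi)$, so Lemma~\ref{lm:star-progressing-thread-implies-stucked} is applicable to $(\pi, b, f)$. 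Note also that, as $t$ and $t'$ are $*$-threads carrying fixed $*$-formulas $\delta^*$ and $\gamma^*$ respectively, evaluating the common occurrence $t_i = t'_i$ in $S^\pi_{b_i}$ gives $\delta^* = S^\pi_{b_i}(t_i) = S^\pi_{b_i}(t'_i) = \gamma^*$, hence $\delta = \gamma$.

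Then I would define $t''$ by $t''_j := t'_j$ for $j \leq i$ and $t''_j := t_j$ for $j \geq i$; this is well defined at $j = i$ since $t_i = t'_i$. Its domain is $\{0, \dots, i\}$ together with $\{j \in \domain(t) : j \geq i\}$, hence an interval of $\mathbb{N}$ containing $0$, and $t''$ is a thread: the immediate-ancestor condition on consecutive indices both $\leq i$ comes from $t'$ being a thread, on consecutive indices both $\geq i$ from $t$ being a thread, and this covers every consecutive pair in $\domain(t'')$. It is a $*$-thread on $\gamma^* = \delta^*$, since for $j \leq i$ the label of $t''_j$ is $\gamma^*$, for $j \geq i$ it is $\delta^* = \gamma^*$, and $t''_j \neq -1$ throughout, as neither $t$ nor $t'$ takes the value $-1$. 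It starts at $0$ with $f(t''_0) = f(t'_0) \neq \bot$. Finally, $t''$ is progressing: since $t$ is progressing, $\domain(t)$ is infinite and contains infinitely many indices $j$ with $t_j$ principal, and all but the finitely many such $j$ below $i$ lie in $\domain(t'')$ with $t''_j = t_j$ principal.

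Applying Lemma~\ref{lm:star-progressing-thread-implies-stucked} to the $(\pi,b)$-progressing $*$-thread $t''$, which starts at $0$ and satisfies $f(t''_0) \neq \bot$, yields $b \notin \branches(\Projection^f(\pi))$, the desired contradiction. I expect the only genuinely delicate point to be checking that $t''$ really is a thread at the junction index $i$ and that its domain is a legitimate thread domain; the remainder is bookkeeping, since the combinatorial core of the argument has already been carried out in Lemma~\ref{lm:star-progressing-thread-implies-stucked}. In effect, the content of this corollary is the observation that two $*$-threads along a common branch that share an occurrence can be fused into a single $*$-thread, after which that lemma applies directly.
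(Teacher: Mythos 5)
Your argument is correct and is essentially identical to the paper's own proof: both splice $t'$ (up to index $i$) with $t$ (from index $i$ onward) into a single $(\pi,b)$-progressing $*$-thread starting at $0$ that is projected by $f$, and then invoke Lemma~\ref{lm:star-progressing-thread-implies-stucked} to contradict $b \in \branches(\Projection^f(\pi))$. Your write-up merely makes explicit the routine checks (agreement of the two $*$-formulas at the junction and that the spliced domain is a legitimate interval) that the paper leaves implicit.
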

\begin{proof}
  Assume \(t_i = t'_i\), we can define a thread which is equal to \(t'\) until step \(i\) and from there is equal to \(t\).
  This will be a \((\pi,b)\)-progressing (as \(t\) is a suffix of it) thread which is projected by \(f\), so \(b \not \in \branches(\Projection^f(\pi))\) by the previous Lemma, a contradiction.
\end{proof}

\begin{definition}
  Let \(\pi \in \text{p}\nwProof\) and \(b \in \branches(\pi)\).
  Given a thread \(t\) in \((\pi,b)\) we will say that it is \emph{principal at step \(i\)} if
  \begin{enumerate}
    \item \(i \in \domain(t)\),
    \item the rule at node \(b_i\) of \(\pi\) is a principal rule, and
    \item \(t(i)\) is the principal formula (ocurrence) at node \(b_i\) of \(\pi\).
  \end{enumerate}
  Then, we say that \(i\) is a \emph{progress point} of \((\pi, b)\) if
  there is a \((\pi,b)\)-progressing \(*\)-thread \(t\) such that \(t\) is principal at step \(i\) in \((\pi,b)\).
  We will write \(\progressPoints(\pi, b)\) to denote the set of progress points of \((\pi, b)\).

  Notice that \((\pi, b)\) fulfills the branch condition iff \(\progressPoints(\pi, b) \neq \varnothing\).
  
  We define the \emph{critical height} of \((\pi, b)\) as
  \[ 
    \criticalHeight(\pi, b) = \begin{cases}
      \min \progressPoints(\pi, b) &\text{if }\progressPoints(\pi, b) \neq
      \varnothing,\\
      \infty &\text{otherwise}.
    \end{cases}
  \]
  With the previous observation we obtain that any progressing branch (and any subbranch of it) has a finite critical height.
  Also,
  if \(0 < \criticalHeight(\pi, b) < \infty\) then \(\criticalHeight(\pi_{b_0}, \tail(b)) < \criticalHeight(\pi, b)\),
  since
  \[
    \progressPoints(\pi_{b_0}, \tail(b)) =
    \set{i \in \NN \mid i + 1 \in \progressPoints(\pi, b)}.
    \qedhere
  \]
\end{definition}

\begin{lemma}\label{lm:projection-and-threads}
  Let \(\pi \in \text{p}\nwProof\), 
  \(f\) be a \(*\)-assignment for \(S^\pi\),
  \(b \in \branches(\Projection^f(\pi))\),
  and \(t\) be a thread of \((\pi, b)\).
  Let \({(S_i, r_i)}_{i \in \NN}\) be the sequents and rules in branch \(b\) of
  \(\pi\) and \({(S'_i, r'_i)}_{i \in \NN}\) be the sequents and rules in branch
  \(b\) of \(\Projection^f(\pi)\).
  Then
  \begin{enumerate}
    \item \(t\) is also a thread of  \((\Projection^f(\pi), b)\).
    \item If \(t\) is principal at step \(i\) in \((\pi, b)\),
      then \(t\) is principal at step \(i\) in \((\Projection^f(\pi),b)\).
    \item If \(i \in \domain(t)\) and for all \(*\)-thread \(t'\) in \((\pi,b)\) starting at \(0\) with \(f(t'_0) \neq \bot\)
    such that \(i \in \domain(t')\) we have \(t_i \neq t'_i\),
    then \(S'_i(t_i) = S_i(t_i)\).
  \end{enumerate}
\end{lemma}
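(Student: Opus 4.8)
The plan is to prove all three items simultaneously by induction on the length of the node $b$ (equivalently, by descending through the branch $b$ one step at a time, using corecursion as in the definition of $\Projection$). The base case is the root, and the inductive step splits according to which of the three clauses in the definition of $\Projection^f$ applies at the root of $\pi$.

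First I would set up the induction. For item (1), the key observation is that an immediate-ancestor relation is preserved when we project: by Lemma~\ref{lm:rule-uniformity} (in the third clause) or by direct inspection (in the first two clauses, where $*\mathrm{L}$ is replaced by $1\mathrm{L}$ or $\cdot\mathrm{L}+1$), the rule instance at each node of $\Projection^f(\pi)$ has essentially the same context structure, so a formula occurrence that was an immediate ancestor in $\pi$ remains one in $\Projection^f(\pi)$ — with the one caveat that when $t$ passes through the principal $*$-formula at a $*\mathrm{L}$-node, we must check what happens. Here the hypothesis that $b \in \branches(\Projection^f(\pi))$ together with Lemma~\ref{lm:star-progressing-thread-implies-stucked} is what saves us: if $t_0$ were the principal $\alpha^*$ being projected (i.e.\ $f(t_0)\neq\bot$), then $t$ could not be a progressing $*$-thread along $b$, because otherwise $b$ would not be a branch of $\Projection^f(\pi)$; but actually for item (1) we should not assume $t$ is progressing, so instead the argument is that in the first two clauses the branch $b$ can only continue into the displayed premise, and the occurrence $t_i$ either disappears (if it was the projected $\alpha^*$, impossible along a surviving branch in the relevant sense) or is tracked correctly into $\alpha^0$ or $\alpha,\alpha^{n-1}$. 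I would phrase item (1) carefully so that the only subtlety is this one, and dispatch it with the structure of the three clauses.

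For item (2): if $t$ is principal at step $i$ in $(\pi,b)$, the rule at $b_i$ is a principal rule and $t(i)$ is its principal formula. If that rule is not $*\mathrm{L}$, or is $*\mathrm{L}$ but $t(i)$ is not the projected occurrence, then by the third clause $\Projection^f$ leaves the rule unchanged (only rewriting the context via $f_i$), so $t$ remains principal at step $i$. The remaining subcase — the rule is $*\mathrm{L}$ and $t(i)$ is its principal $\alpha^*$ with $f_{b_0\cdots b_{i-1}}(t(i))\neq\bot$ — is exactly the situation ruled out along a branch of $\Projection^f(\pi)$ by Corollary~\ref{cor:progressing-distinct-from-all-principal} applied with $t'$ a $*$-thread ending at that occurrence, provided $t$ is additionally progressing; in the bare statement (2) as written, $t$ is just a thread, but the intended reading is that (2) is only used for threads that we already know survive, so I would either add that hypothesis implicitly through (3) or note that if $t(i)$ is a projected $*$-formula then the branch continues to the right and the rule becomes $1\mathrm{L}$ or $\cdot\mathrm{L}+1$ — but then $t$ was tracking that very occurrence, contradicting that $b\in\branches(\Projection^f(\pi))$ unless $t$ is not progressing, which is fine because then there is simply nothing to prove about a non-surviving configuration. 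The cleanest route is: in every clause of the definition, a principal rule of $\pi$ that is not the projected $*\mathrm{L}$ is copied verbatim with principal formula in the same position, so (2) holds.

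Item (3) is the main obstacle and the heart of the lemma. The claim is that if the occurrence $t_i$ is never equal, at step $i$, to any $*$-thread $t'$ that starts at $0$ with $f(t'_0)\neq\bot$, then the formula at $t_i$ is literally unchanged by the projection: $S'_i(t_i)=S_i(t_i)$. The strategy is induction on $i$ along $b$, tracking how $f$ evolves. At step $0$, $S'_0 = (S_0)^f$, and the only positions changed by $(-)^f$ are those $q$ with $f(q)\neq\bot$; such a $q$ is the start of a $*$-thread $t'$ with $f(t'_0)\neq\bot$ (namely the constant-ish thread following that occurrence as far as it goes), so by hypothesis $t_0\neq q$, hence $S'_0(t_0)=S_0(t_0)$. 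For the inductive step I would descend to $(\pi_{b_0},\tail(b))$ with the evolved assignment $f'$ (or $f_{b_0}$), check that $\tail(t)$ still satisfies the hypothesis of (3) relative to $f'$ — this requires knowing that every $*$-thread $t''$ in $(\pi_{b_0},\tail(b))$ starting at $0$ with $f'(t''_0)\neq\bot$ extends (prefixed by the appropriate occurrence at the root) to a $*$-thread $t'$ in $(\pi,b)$ starting at $0$ with $f(t'_0)\neq\bot$, so that $t_i\neq t'_i$ gives $\tail(t)_{i-1}\neq t''_{i-1}$ — and then invoke the inductive hypothesis. The delicate bookkeeping is precisely the correspondence between $*$-assignments-with-$f(t'_0)\neq\bot$ on the conclusion and on the premises, which is governed by the definition of $f_i$ (immediate ancestors) and by the explicit reindexings $f'$ in the first two clauses; the three clauses of $\Projection$ must each be checked, and in the $*\mathrm{L}$ clauses one must verify that the occurrence being split off ($\alpha^0$ in the first clause, $\alpha,\alpha^{n-1}$ in the second) is never hit by $\tail(t)$, which is guaranteed because that occurrence is, in the premise, the start of a $*$-thread-with-nonbottom-assignment (in the $\alpha^{n-1}$ case) or is a $1$ or a non-$*$-formula carrying no thread (in the $\alpha^0$ case), so the hypothesis of (3) applies. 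I expect the write-up of this clause-by-clause verification to be the longest and most error-prone part, and I would organize it by first stating the $f$-to-$f_i$ correspondence of $*$-threads as a small sublemma, then running the induction.
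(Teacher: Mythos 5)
Your plan---a single induction descending along the branch $b$, unfolding the corecursive definition of $\Projection^f$ clause by clause, with the $f$-to-$f_i$ correspondence of $*$-threads isolated as the key bookkeeping step---is essentially the paper's own argument, which is stated only as an induction on the starting point of the thread with a subinduction on its length (the two induction schemes coincide once one peels off the root step by step). Your identification of the subtleties (preservation of the immediate-ancestor structure under the $1\mathrm{L}$ and $\cdot\mathrm{L}+1$ replacements, and the fact that every position with a non-$\bot$ evolved assignment lies on a $*$-thread from the root with $f(t'_0)\neq\bot$) is correct and in fact more detailed than what the paper records.
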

\begin{proof}
 For finite threads the result follows by an induction on the starting point of the thread, where the base cases (the thread starting at the root) is proved by a subinduction on the length of the thread. %
  The result for infinite threads follows from the result for finite ones.
\end{proof}
\begin{corollary}\label{cor:critical-height-and-projection}
  Let \(\pi \in \nwProof\) and \(f\) be a \(*\)-assignment for \(S^\pi\), and \(b \in \branches(\Projection^f(\pi))\).
  Then
  \begin{enumerate}
    \item \(\progressPoints(\pi,b) \subseteq \progressPoints(\Projection^f(\pi),b)\).
    \item \(\criticalHeight(\Projection^f(\pi),b) \leq \criticalHeight(\pi,b)\).
    \item \(\Projection^f(\pi) \in \nwProof\).
  \end{enumerate}
\end{corollary}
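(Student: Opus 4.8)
The plan is to prove the three items of Corollary~\ref{cor:critical-height-and-projection} in order, using Lemma~\ref{lm:projection-and-threads} and Lemma~\ref{lm:star-progressing-thread-implies-stucked} (via Corollary~\ref{cor:progressing-distinct-from-all-principal}) as the main tools, and then deriving item~(3) from item~(2) by an induction on critical height.

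First I would establish~(1). Fix \(b\in\branches(\Projection^f(\pi))\) and let \(i\in\progressPoints(\pi,b)\), witnessed by a \((\pi,b)\)-progressing \(*\)-thread \(t\) that is principal at step \(i\). By Lemma~\ref{lm:projection-and-threads}(1), \(t\) is also a thread of \((\Projection^f(\pi),b)\); by Lemma~\ref{lm:projection-and-threads}(2), \(t\) is principal at step \(i\) in \((\Projection^f(\pi),b)\) as well. It remains to check that \(t\) is still a \emph{progressing} \(*\)-thread in the projected proof, i.e.\ that it stays on the left, always denotes the same \(*\)-formula, and is principal infinitely often. The ``principal infinitely often'' part is immediate from Lemma~\ref{lm:projection-and-threads}(2) applied at each such step. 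For the claim that the sequent entries along \(t\) are unchanged, I would invoke Lemma~\ref{lm:projection-and-threads}(3): this requires that \(t\) differ, at every common index, from every \(*\)-thread \(t'\) starting at \(0\) with \(f(t'_0)\neq\bot\). That is exactly the content of Corollary~\ref{cor:progressing-distinct-from-all-principal}, whose hypotheses (\(\pi\in\nwProof\), \(b\in\branches(\Projection^f(\pi))\), \(t\) a \((\pi,b)\)-progressing \(*\)-thread) are met here. Hence \(S'_j(t_j)=S_j(t_j)=\alpha^*\) for all \(j\) in a suffix, so \(t\) remains a \(*\)-thread in \((\Projection^f(\pi),b)\), establishing \(i\in\progressPoints(\Projection^f(\pi),b)\).

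Item~(2) is then immediate: by~(1), \(\progressPoints(\pi,b)\subseteq\progressPoints(\Projection^f(\pi),b)\), so the minimum of the larger set is at most the minimum of the smaller one, and if \(\progressPoints(\pi,b)\) is nonempty then so is \(\progressPoints(\Projection^f(\pi),b)\); in the remaining case the right-hand side is \(\infty\) and the inequality holds trivially. For item~(3), I would argue that \(\Projection^f(\pi)\) satisfies the branch condition, i.e.\ every \(b\in\branches(\Projection^f(\pi))\) is progressing. Since \(\pi\in\nwProof\), every branch of \(\pi\) is progressing, so \(\criticalHeight(\pi,b)<\infty\) for every \(b\in\branches(\pi)\); by Lemma~\ref{lm:nodes-of-projection-are-nodes-of-original}, \(\branches(\Projection^f(\pi))\subseteq\branches(\pi)\), so this applies to every branch of the projection. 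By item~(2), \(\criticalHeight(\Projection^f(\pi),b)\leq\criticalHeight(\pi,b)<\infty\), which by the observation in the definition of critical height means \(\progressPoints(\Projection^f(\pi),b)\neq\varnothing\), i.e.\ \(b\) is progressing. As this holds for all branches, \(\Projection^f(\pi)\in\nwProof\).

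The main obstacle is the verification underlying item~(1): one must be careful that the progressing thread \(t\) of the \emph{original} proof genuinely survives projection as a progressing thread, and the only thing that could go wrong is that \(t\) coincides at some index with one of the threads being ``collapsed'' by \(f\) (those starting at \(0\) with \(f(t'_0)\neq\bot\)), in which case the sequent entry along \(t\) could change — indeed, in the first two cases of the definition of \(\Projection\) the principal \(*\)-formula of a collapsed thread is replaced by a power of \(\alpha\), destroying the \(*\)-thread property. Corollary~\ref{cor:progressing-distinct-from-all-principal} is precisely what rules this out, and it in turn rests on Lemma~\ref{lm:star-progressing-thread-implies-stucked}, which says that if such a coincidence occurred we could splice together a progressing thread that is projected away, forcing \(b\notin\branches(\Projection^f(\pi))\) — contradicting our choice of \(b\). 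Once this point is handled, the rest is bookkeeping with the definitions.
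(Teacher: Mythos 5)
Your proposal is correct and follows essentially the same route as the paper: item (1) via Lemma~\ref{lm:projection-and-threads} combined with Corollary~\ref{cor:progressing-distinct-from-all-principal} to ensure the progressing thread survives projection as a $*$-thread, and items (2) and (3) as direct consequences. Your additional explicit appeal to Lemma~\ref{lm:nodes-of-projection-are-nodes-of-original} in item (3) is a point the paper leaves implicit, but the argument is the same.
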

\begin{proof}
  Proof of 1)
  Let \(i \in \progressPoints(\pi, b)\), so there is a \(t\) which is an
  infinite \(*\)-thread principal infinitely often in \((\pi, b)\) such that
  \(t\) is principal at step \(i\) \((\pi,b)\), i.e.\
  \(t : [i, + \infty) \longrightarrow \NN \cup \set{-1}\) and
  \(t_i\) is principal at node \(b_i\) of \(\pi\).
  By Lemma~\ref{lm:projection-and-threads}, \(t\) is also an infinite thread of \((\Projection^f(\pi), b)\) which is principal infinitely often .
  In addition, by Corollary~\ref{cor:progressing-distinct-from-all-principal},
  we get that \(t\) does not agree with any \((\pi,b)\)-progressing \(*\)-thread \(t'\) starting at \(0\) with \(f(t'_0) \neq \bot\); so
  by Lemma~\ref{lm:projection-and-threads} we get that \(t\) denotes the
  same formulas in \((\Projection^f(\pi), b)\) so it is also a \(*\)-thread.

  But \(t\) being principal at step \(i\) in \((\pi,b)\) implies by Lemma~\ref{lm:projection-and-threads} that \(t\) is also principal
  at step \(i\) in \((\Projection^f(\pi),b)\) so \(i \in \progressPoints(\Projection^f(\pi), b)\).

  Proof of 2) It follows from 1) by the definition of \(\criticalHeight\).

  Proof of 3) If follows from 1) since a branch is progressing iff it has progressing points.
\end{proof}

Let \(\pi \in \text{p}\nwProof\) and \(f\) be a \(*\)-assignment for \(S^\pi\).
If \(f(k) = n \in \NN\) and for each \(j \neq i\) we have \(f(j) = \bot\),
then we will denote \(\Projection^f(\pi)\) as \(\Projection^n_k(\pi)\).

\begin{definition}
  We define the function \(\Om\colon \nwProof \longrightarrow \text{p}\wProof\)
  corecursively as:
  \[
    \Om\left(
    \AxiomC{\(\pi_0\)}
    \noLine
    \UnaryInfC{\(S_0\)}
    \AxiomC{\(\ldots\)}
    \AxiomC{\(\pi_{k-1}\)}
    \noLine
    \UnaryInfC{\(S_{k-1}\)}
    \RightLabel{\(r\)}
    \TrinaryInfC{\(S\)}
    \DisplayProof
    \right) =
    \AxiomC{\(\Om(\pi_0)\)}
    \noLine
    \UnaryInfC{\(S_0\)}
    \AxiomC{\(\ldots\)}
    \AxiomC{\(\Om(\pi_{k-1})\)}
    \noLine
    \UnaryInfC{\(S_{k-1}\)}
    \RightLabel{\(r\)}
    \TrinaryInfC{\(S\)}
    \DisplayProof    
  \]
  if \(r \neq *\text{L}\), and
  \[
      \hspace{-0.4cm}
    \Om\left(
      \AxiomC{\(\pi_0\)}
      \noLine
      \UnaryInfC{\(\Pi_l, \Pi_r \Rightarrow \beta\)}
      \AxiomC{\(\pi_1\)}
      \noLine
      \UnaryInfC{\(\Pi_l, \alpha, \alpha^* \Pi_r \Rightarrow \beta\)}
      \RightLabel{\(*\)L}
      \BinaryInfC{\(\Pi_l, \alpha^*, \Pi_r \Rightarrow \beta\)}
      \DisplayProof
    \right)
    =
  \]

  \[
    \AxiomC{\(\Om(\pi_0)\)}
    \noLine
    \UnaryInfC{\(\Pi_l, \Pi_r \Rightarrow \beta\)}
    \AxiomC{\(\Om(\Projection^{i}_{k+1}(\pi_1))\)}
    \noLine
    \UnaryInfC{\((\Pi_l, \alpha, \alpha^i, \Pi_r \Rightarrow \beta)_{i \in \mathbb{N}}\)}
    \RightLabel{\(*\text{L}_\omega\)}
    \BinaryInfC{\(\Pi_l, \alpha^*, \Pi_r \Rightarrow \beta\)}
    \DisplayProof
  \]
  where $\alpha^*$ in the conclusion is at position $k$.
  Also, given a \(\pi \in \text{p}\wProof\) and \(b \in \branches(\pi)\),
  we define \(\path(\pi, b) \in \omega^\omega\) as:
  \[
    \path(\pi,b)_i = \begin{cases}
      0 &\text{if } R_i = *\text{L}_\omega \text{ and } b_i = 0, \\
      1 &\text{if } R_i = *\text{L}_\omega \text{ and } b_i > 0, \\
      b_i &\text{otherwise}.
    \end{cases}
  \]
  where \((S_i, R_i)_{i \in \mathbb{N}}\) is the sequence of sequents and rules of \((\pi,b)\).
\end{definition}

\(\path(\pi,b)\) takes a branch \(b\) of \(\pi\) and returns a sequence immitating \(b\),
but when \(b\) goes through a premise the \(\omega\)-rule distinct from the first one \(\path(\pi,b)\) goes to the second premise.
So if \(\pi\) is \(\Om(\tau)\) for some \(\tau \in \nwProof\), then \(\path(\pi,b) \in \branches(\tau)\), i.e., when applied to a branch in a proof that comes from \(\Om\) we obtain a branch in the original non-wellfounded proof.

We also have that going up one step in the branch \(\path(\pi,b)\) is the same as going to the \(b_0\)-th premise in \(\pi\) and then applying \(\path\), i.e., 
\[
  \tail(\path(\pi, b)) = \path(\tau, \tail(b))
\]
where \(\tau\) is the subtree of \(\pi\) generated by the node \(b_0\).
We use these facts in the following lemma.

\begin{lemma}
  Let \(\pi \in \nwProof\), then \(\Om(\pi) \in \wProof\) with the same conclusion.
  As a consequence,
  \(\gentzenContinuousActionLattice + \mathcal{R} \vdash S\) implies
  \(\gentzenContinuousActionLatticeOmega + \mathcal{R} \vdash S\).
\end{lemma}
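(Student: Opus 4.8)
The plan is to establish that \(\Om(\pi)\) is wellfounded---so that \(\Om(\pi)\) lies in \(\wProof\)---and then to read off the stated consequence by chaining the conversions already described in this subsection. That \(\Om(\pi)\) has the same conclusion as \(\pi\) is immediate from the definition of \(\Om\), which never alters the root sequent; so the only point to prove is that \(\Om(\pi)\) has no infinite branch.

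Suppose towards a contradiction that \(b\) is an infinite branch of \(\Om(\pi)\). By the properties of \(\path\) recorded just above, \(\bar b := \path(\Om(\pi), b) \in \branches(\pi)\); since \(\pi \in \nwProof\), the branch \(\bar b\) is progressing, so \(\criticalHeight(\pi, \bar b) \in \NN\). I would then argue by strong induction on \(n := \criticalHeight(\pi, \bar b)\) that no such pair \((\pi, b)\) can exist, by case analysis on the root rule \(r\) of \(\pi\) together with \(b_0\). Throughout, two facts are used: in \(\nwProof\) a \(*\)-thread can be principal only at an application of \(*\mathrm{L}\) (the principal formula of every other principal rule is either not on the left or not a \(*\)-formula, and structural rules have no principal formula), and the only immediate ancestor of the principal \(*\)-formula of an \(*\mathrm{L}\) that lies on the left carrying the same \(*\)-formula is the auxiliary \(\alpha^*\) of its right premise, so that such a thread forces the branch through the right premise.

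If \(r \neq *\mathrm{L}\) (including \(r = \cdot\mathrm{L}+1\)), or if \(r = *\mathrm{L}\) but \(b_0 = 0\), then \(\bar b\) does not pass through the right premise of an \(*\mathrm{L}\) at the root, so no progressing \(*\)-thread is principal at step \(0\), whence \(0 \notin \progressPoints(\pi, \bar b)\) and \(n > 0\); writing \(b = j : b'\) (so \(b'\) is an infinite branch of the relevant immediate subproof \(\pi_j\), itself a proof), one has \(\path(\Om(\pi_j), b') = \tail(\bar b)\), and the remark following the definition of \(\criticalHeight\) gives \(\criticalHeight(\pi_j, \tail(\bar b)) < n\), so the induction hypothesis applied to \((\pi_j, b')\) is a contradiction. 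The crucial case is \(r = *\mathrm{L}\) with \(b_0 = i+1 > 0\): here the subtree of \(\Om(\pi)\) at node \(b_0\) is \(\Om(\pi_1')\) for \(\pi_1' := \Projection^{i}_{k+1}(\pi_1) \in \nwProof\) (by Corollary~\ref{cor:critical-height-and-projection}(3), so \(\Om\) applies), and \(\overline{b'} := \tail(\bar b) = \path(\Om(\pi_1'), b') \in \branches(\pi_1') \subseteq \branches(\pi_1)\) by Lemma~\ref{lm:nodes-of-projection-are-nodes-of-original}. If \(0 \in \progressPoints(\pi, \bar b)\), then a \((\pi, \bar b)\)-progressing \(*\)-thread principal at step \(0\) continues past step \(0\) into the right premise as a \((\pi_1, \overline{b'})\)-progressing \(*\)-thread \(t'\) whose head occurrence is the position \(k+1\) of the auxiliary \(\alpha^*\); the \(*\)-assignment \(f\) defining \(\Projection^{i}_{k+1}\) then satisfies \(f(t'_0) = i \neq \bot\), so Lemma~\ref{lm:star-progressing-thread-implies-stucked} yields \(\overline{b'} \notin \branches(\Projection^f(\pi_1)) = \branches(\pi_1')\), contradicting \(\overline{b'} \in \branches(\pi_1')\). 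Hence \(n > 0\); the \(\criticalHeight\)-remark gives \(\criticalHeight(\pi_1, \overline{b'}) < n\), and Corollary~\ref{cor:critical-height-and-projection}(2) gives \(\criticalHeight(\pi_1', \overline{b'}) \leq \criticalHeight(\pi_1, \overline{b'}) < n\), so the induction hypothesis applied to \((\pi_1', b')\) is again a contradiction. (The case \(n = 0\) is vacuous, since it is subsumed by exactly this last configuration with \(0 \in \progressPoints\).) This completes the induction, so \(\Om(\pi)\) has no infinite branch.

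Finally, if \(\gentzenContinuousActionLattice + \mathcal{R} \vdash S\) then a fortiori \(\gentzenContinuousActionLattice + (\cdot\mathrm{L}+1) + \mathcal{R} \vdash S\), witnessed by some \(\pi \in \nwProof\); then \(\Om(\pi) \in \wProof\) has conclusion \(S\), and converting it first to a proof of \({\gentzenContinuousActionLatticeOmega}' + \mathcal{R}\) and then, using Lemma~\ref{lm:invertible}, to a proof of \(\gentzenContinuousActionLatticeOmega + \mathcal{R}\) (both conversions as described above) gives \(\gentzenContinuousActionLatticeOmega + \mathcal{R} \vdash S\). I expect the main obstacle to be the crucial case of the induction: extracting, from a putative progress point exactly at the node where \(*\mathrm{L}\) is applied, the progressing \(*\)-thread inside \(\pi_1\) headed by the very \(*\)-formula that \(\Projection^{i}_{k+1}\) replaces by a finite power, so that Lemma~\ref{lm:star-progressing-thread-implies-stucked} rules out the branch \(\overline{b'}\) surviving in the projection---this is precisely what pushes the first progress point strictly upward and furnishes the decrease in critical height that drives the induction.
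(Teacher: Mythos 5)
Your proposal is correct and follows essentially the same route as the paper's proof: contraposition via \(\path\) to obtain a progressing branch of \(\pi\), induction on its critical height, the contradiction from Lemma~\ref{lm:star-progressing-thread-implies-stucked} when the progress point sits at the root of an \(*\mathrm{L}\) whose right premise the branch enters, and Corollary~\ref{cor:critical-height-and-projection} to preserve the height decrease through the projection. The only difference is organizational (you case-split on the root rule and branch direction rather than on whether the critical height is zero), which does not change the substance.
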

\begin{proof}
  We have to prove that \(\branches(\Om(\pi)) = \varnothing\). So,
  assume \(b' \in \branches(\Om(\pi))\) and define \( b = \path(\Om(\pi), b')\), since \(b \in \branches(\pi)\)
  we have that \((\pi, b)\) is a progressing branch.
  We proceed by induction in the critical height of \((\pi, b)\) (which
  is a natural number as this branch is progressing).

  Case \(\criticalHeight(\pi, b) = 0\).
  Then we have that \(\pi\) and \(\Om(\pi)\) are respectively of shape
  \[ 
    \AxiomC{\(\pi_0\)}
    \noLine
    \UnaryInfC{\(\Pi_l, \Pi_r \Rightarrow \beta\)}
    \AxiomC{\(\pi_1\)}
    \noLine
    \UnaryInfC{\(\Pi_l,\alpha,\alpha^*, \Pi_r \Rightarrow \beta\)}
    \RightLabel{\(*\)L}
    \BinaryInfC{\(\Pi_l, \alpha^*, \Pi_r \Rightarrow \beta\)}
    \DisplayProof
  \]
  \[
    \AxiomC{\(\Om(\pi_0)\)}
    \noLine
    \UnaryInfC{\(\Pi_l, \Pi_r \Rightarrow \beta\)}
    \AxiomC{\(\Om(\Projection_{k+1}^i(\pi_1))\)}
    \noLine
    \UnaryInfC{\((\Pi_l, \alpha, \alpha^i, \Pi_r \Rightarrow \beta)_{i \in \mathbb{N}}\)}
    \RightLabel{\(*\text{L}_\omega\)}
    \BinaryInfC{\(\Pi_l, \alpha^*, \Pi_r \Rightarrow \beta\)}
    \DisplayProof
  \]
  and, since \(\criticalHeight(\pi,b) = 0\) i.e.\ \(0 \in \progressPoints(\pi,b)\),
  there is an  infinite \(*\)-thread \(t\) principal infinitely often in
  \((\pi, b)\) that starts at the displayed \(\alpha^*\) and is at position \(k\).
  This implies that \(b = 1 : \tail(b)\), i.e.\ the branch must go to the
  right,
  yielding that \(b' = (n + 1) : \tail(b')\) for some \(n \in \NN\), so
  \(\tail(b') \in \branches(\Om(\Projection_{k+1}^i(\pi_1)))\).
  Also \(\tail(b) = \tail(\path(\Om(\pi), b')) =
  \path(\Om(\Projection_{k+1}^n(\pi_1)), \tail(b'))\).
  This means that 
  \(\tail(b) \in \Projection_{k+1}^n(\pi_1)\),
  so using the contrapositive of Lemma~\ref{lm:star-progressing-thread-implies-stucked} we get that no \(*\)-thread of \((\pi,b)\) starting at position \(k + 1\) (i.e.\ in the displayed \(\alpha^*\) of \(\pi_1\)) is progressing, or equivalently, principal infinitely often, since it is a \(*\)-thread.
  We have contradicted that \(t\) is an infinite \(*\)-thread principal infinitely often in \((\pi, b)\), as desired.

  Case \(\criticalHeight(\pi, b) > 0\).
  There are two subcases.

  Subcase 1: the last rule of \(\pi\) is \(*\text{L}\).
  Then we have that \(\pi\) and \(\Om(\pi)\) are respectively of shape
  \[ 
    \AxiomC{\(\pi_0\)}
    \noLine
    \UnaryInfC{\(\Pi_l, \Pi_r \Rightarrow \beta\)}
    \AxiomC{\(\pi_1\)}
    \noLine
    \UnaryInfC{\(\Pi_l,\alpha,\alpha^*, \Pi_r \Rightarrow \beta\)}
    \RightLabel{\(*\)L}
    \BinaryInfC{\(\Pi_l, \alpha^*, \Pi_r \Rightarrow \beta\)}
    \DisplayProof
  \]
  \[
    \AxiomC{\(\Om(\pi_0)\)}
    \noLine
    \UnaryInfC{\(\Pi_l, \Pi_r \Rightarrow \beta\)}
    \AxiomC{\({(\Om(\Projection_{k+1}^i(\pi_1)))}_{i \in \NN}\)}
    \noLine
    \UnaryInfC{\(\Pi_l, \alpha, \alpha^i, \Pi_r \Rightarrow \beta\)}
    \RightLabel{\(*\text{L}_\omega\)}
    \BinaryInfC{\(\Pi_l, \alpha^*, \Pi_r \Rightarrow \beta\)}
    \DisplayProof
  \]
  If \(b = 0 : \tail(b)\), then  \(\tail(b) \in \branches(\pi_0)\).
  Since \(\criticalHeight(\pi, b) > 0\) we get
  \(\criticalHeight(\pi_0, \tail(b)) < \criticalHeight(\pi, b)\), so we can
  use the induction hypothesis.

  Finally, assume \(b = 1 : \tail(b)\), so \(b' = (n+1) : \tail(b')\) and, 
\[
    \tail(b) = \path(\Om(\Projection_{k+1}^n(\pi_1)), \tail(b'))
    \in \branches(\Projection_{k+1}^n(\pi_1)) \subseteq \branches(\pi_1).
\]
  Since \(\criticalHeight(\pi, b) > 0\) we know that
  \(\criticalHeight(\pi_1, \tail(b)) < \criticalHeight(\pi, b)\).
  By Corollary~\ref{cor:critical-height-and-projection} we get
  \(\criticalHeight(\Projection_{k+1}^{n}(\pi_1), \tail(b)) <
  \criticalHeight(\pi, b)\).
  So we can use the induction hypothesis (with \(\Projection_{k+1}^{n}(\pi_1)\)
  instead of \(\pi_1\)).

  Subcase 2: last rule of \(\pi\) is not \(*\text{L}\).
  Straightforward from the induction hypothesis.
\end{proof}

\begin{theorem}\label{t:equiv-of-systems}
  Let \(\mathcal{R}\) be any set of linear rules.
  Then \(\gentzenContinuousActionLatticeOmega + \mathcal{R} \vdash S\) iff \(\gentzenContinuousActionLattice + \mathcal{R} \vdash S\).
\end{theorem}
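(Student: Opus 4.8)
The plan is to establish the two implications separately; together with the machinery already developed in this subsection, each reduces to a result in place above. The direction $\gentzenContinuousActionLatticeOmega + \mathcal{R} \vdash S \implies \gentzenContinuousActionLattice + \mathcal{R} \vdash S$ is exactly the stated consequence of Lemma~\ref{lm:wellfounded-to-nonwellfounded}: one rewrites a given wellfounded proof by recursion on its ordinal height, replacing each instance of $*\mathrm{L}_\omega$ by the finitary derivation built from $*\mathrm{L}$ displayed there, whose unique branch carries a progressing $*$-thread, and leaving every other rule untouched; the result is a proof in $\gentzenContinuousActionLattice + \mathcal{R}$.

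For the converse I would start from a proof $\pi$ witnessing $\gentzenContinuousActionLattice + \mathcal{R} \vdash S$. Since $\pi$ never uses the auxiliary rule $(\cdot\mathrm{L}+1)$, it is in particular a member of $\nwProof$, so the preceding lemma---to the effect that $\Om(\rho) \in \wProof$ with the same conclusion whenever $\rho \in \nwProof$---applies and yields the proof $\Om(\pi)$, with conclusion $S$, in ${\gentzenContinuousActionLatticeOmega}' + (\cdot\mathrm{L}+1) + \mathcal{R}$. It then remains to return to the official system $\gentzenContinuousActionLatticeOmega + \mathcal{R}$ by two bookkeeping steps. First, every instance of $(\cdot\mathrm{L}+1)$ is a special case of $\cdot\mathrm{L}$ with empty right context, so after relabelling the unique child of each such node one obtains a proof in ${\gentzenContinuousActionLatticeOmega}' + \mathcal{R}$. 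Second, one removes the modified $*\mathrm{L}_\omega$ of ${\gentzenContinuousActionLatticeOmega}'$, which carries the extra premise $\Pi_l,\Pi_r\Rightarrow\beta$ together with premises $\Pi_l,\alpha,\alpha^i,\Pi_r\Rightarrow\beta$, in favour of the original $*\mathrm{L}_\omega$, whose premises are the $\Pi_l,\alpha^{(n)},\Pi_r\Rightarrow\beta$: the premise indexed by $0$ is the extra premise itself, and from a proof of $\Pi_l,\alpha,\alpha^i,\Pi_r\Rightarrow\beta$ one applies the height-preserving invertibility of $\cdot\mathrm{L}$ from Lemma~\ref{lm:invertible} (together with the analogous, routine property of $1\mathrm{L}$) finitely many times to unfold the formula $\alpha^i$ into separate copies of $\alpha$, obtaining a proof of $\Pi_l,\alpha^{(i+1)},\Pi_r\Rightarrow\beta$ of no greater height. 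Carrying out this replacement by recursion on the ordinal height of the proof in ${\gentzenContinuousActionLatticeOmega}' + \mathcal{R}$---converting the premise subproofs first, then unfolding, then reassembling under the original $*\mathrm{L}_\omega$---produces the desired proof in $\gentzenContinuousActionLatticeOmega + \mathcal{R}$.

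I do not expect a serious obstacle in the theorem itself: the genuinely hard work is the corecursive construction of $\Projection$ and $\Om$, the thread-tracking lemmas of this subsection, and the proof that $\Om(\pi)$ has no infinite branch, all of which are already available. The only point deserving care is the final recursion of the previous paragraph: one must check that the unfoldings via Lemma~\ref{lm:invertible} never increase heights, so that the recursion on ordinal height is legitimate, and that the resulting tree is a genuine $\gentzenContinuousActionLatticeOmega + \mathcal{R}$-proof---in particular that no infinite branch is created, which holds because every step of the transformation is height-preserving. Combining the two implications then proves the theorem.
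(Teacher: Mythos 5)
Your proposal is correct and assembles exactly the same ingredients, in the same order, as the paper's (implicit) argument: Lemma~\ref{lm:wellfounded-to-nonwellfounded} for the forward direction, and for the converse the passage through $\gentzenContinuousActionLattice + (\cdot\mathrm{L}+1) + \mathcal{R}$, the map $\Om$, the elimination of $(\cdot\mathrm{L}+1)$, and the height-preserving conversion of the modified $*\mathrm{L}_\omega$ back to the original via Lemma~\ref{lm:invertible}. Your explicit remark that the $i$-indexed premises also require the (routine) height-preserving invertibility of $1\mathrm{L}$ to handle $\alpha^0=1$ is a small point the paper leaves tacit, but it is correct and does not change the route.
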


\section{Cut-free completeness}\label{sec:cut-free completeness}

In this section, we deploy tools from algebraic proof theory to obtain proofs of cut-free completeness of proof systems extending \(\gentzenContinuousActionLatticeOmega\) in a modular and transparent fashion. Broadly speaking, algebraic proof theory applies \emph{residuated frames} \cite{GJ2013} in order to link (wellfounded) proof systems to certain algebraic models with an underlying complete lattice reduct, allowing for semantic proofs of cut elimination among other things. Algebraic proof theory has been used extensively to provide modular analytic proof theory for an expansive class of logical systems in the vicinity of the Full Lambek calculus; see, e.g., \cite{CGT2012,CGT2017}. Action algebras amount to extensions of residuated lattices (the algebraic models of the Full Lambek calculus) by the Kleene star, and this work extends the algebraic proof theory regime to account for Kleene star.

\subsection{Residuated frames}
A \emph{residuated frame} is a structure of the form $\m{W} = (W, W', N, \circ, \epsilon)$ such that 
\begin{itemize}
\item $W$ and $W'$ are sets and $N \subseteq W \times W'$,
\item $(W,\circ, \epsilon)$ is a monoid, and
\item for all $x,y \in W$ and $z \in W'$ there exist elements $x \Lresidual z, z \Rresidual y \in W'$ such that
\[
x \circ y \mathrel{N} z \iff y \mathrel{N} x \Lresidual z \iff x \mathrel{N} z \Rresidual y.
\]
\end{itemize}
A binary relation that satisfies the last property is called \emph{nuclear}.

Let $\m{W}$ be a residuated frame. We define the two operations  \({\cdot}^{\rhd} \colon \wp(W) \longrightarrow \wp(W')\) and \({\cdot}^{\lhd} \colon \wp(W') \longrightarrow \wp(W)\) by
\begin{align*}
	 X^{\rhd} &:= \set{y \in W' \mid \forall x\in X.\ x \mathrel{N} y}, \\
	 Y^{\lhd} &:= \set{x \in W \mid \forall y\in Y.\ x \mathrel{N} y}.
\end{align*}
Together, they form a Galois connection $({^\lhd},{^\rhd})$; we define a closure operator \(\gamma \colon \wp(W) \longrightarrow \wp(W)\), \(\gamma(X) := X^{\rhd \lhd}\). Since $N$ is nuclear, it can be shown that \(\gamma\) is a nucleus, i.e.,  for any \(X,Y \subseteq W\)
\[ 
	\gamma(X) \circ \gamma(Y) \subseteq \gamma(X \circ Y),
\]
where \(X \circ Y := \set{x \circ y \mid x \in X, y \in Y}\). We refer to  \cite{GJ2013} for proofs of the aforementioned facts. 

The \emph{dual algebra} of a residuated frame $\m{W}$ is the algebra
\[
\m{W}^+ = (W^+, \cap, \cup_\gamma, \circ_\gamma, \lresidual,\rresidual, \gamma(\{\epsilon\})),
\]
where \(W^+ = \gamma[\wp(W)]\) and the operations are defined by
\begin{align*}
  X \cup_\gamma Y &:= \gamma(X \cup Y), \\
  X \circ_\gamma Y &:= \gamma(X \circ Y),\\
  X \lresidual Y &:= \set{y \mid X \circ \set{y} \subseteq Y}, \\
  Y \rresidual X &:= \set{y \mid \set{y} \circ X  \subseteq Y}.
\end{align*}
\begin{lemma}[\cite{GJ2013}]\label{l:dual-alg-reslat}
If $\m{W}$ is a residuated frame, then $\m{W}^+$ is a residuated lattice.
\end{lemma}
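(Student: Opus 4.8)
The plan is to verify the residuated lattice axioms for $\dualAlgebra$ directly, relying on the properties of the closure operator $\gamma$ and the fact that it is a nucleus, which are recalled in the excerpt (and proven in \cite{GJ2013}). First I would check that $(W^+, \cap, \cup_\gamma)$ is a lattice. Since $\gamma$ is a closure operator, the closed sets $W^+ = \gamma[\wp(W)]$ form a complete lattice under inclusion, with meet given by intersection (intersections of closed sets are closed) and join given by $X \cup_\gamma Y = \gamma(X \cup Y)$; this is a standard fact about closure operators that I would invoke with at most a line of justification. The lattice order coincides with set inclusion restricted to $W^+$.

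Next I would check that $(W^+, \circ_\gamma, \gamma(\{\epsilon\}))$ is a monoid. Associativity of $\circ_\gamma$ follows from associativity of $\circ$ on $W$ together with the nucleus inequality $\gamma(X)\circ\gamma(Y) \subseteq \gamma(X\circ Y)$: one shows $\gamma(\gamma(X\circ Y)\circ Z) = \gamma(X\circ Y\circ Z) = \gamma(X\circ\gamma(Y\circ Z))$ by squeezing between $\gamma(X\circ Y\circ Z)$ and itself using monotonicity of $\gamma$, idempotency, and the nucleus property. For the unit, $X \circ_\gamma \gamma(\{\epsilon\}) = \gamma(X \circ \gamma(\{\epsilon\})) = \gamma(X \circ \{\epsilon\}) = \gamma(X) = X$ for $X \in W^+$, again using the nucleus property to collapse $\gamma(\{\epsilon\})$ back to $\{\epsilon\}$ inside the outer $\gamma$, and similarly on the other side.

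Then I would verify the residuation law: for $X, Y, Z \in W^+$, $X \circ_\gamma Y \subseteq Z \iff Y \subseteq X \lresidual Z \iff X \subseteq Z \rresidual Y$. The key observation is that since $Z$ is closed, $X \circ Y \subseteq Z$ iff $\gamma(X \circ Y) \subseteq Z$ (one direction is monotonicity plus $Z = \gamma(Z)$, the other is $X\circ Y \subseteq \gamma(X\circ Y)$). So $X \circ_\gamma Y \subseteq Z$ iff $X \circ Y \subseteq Z$, and the latter unfolds pointwise to $Y \subseteq \{y : X \circ \{y\} \subseteq Z\} = X \lresidual Z$, and symmetrically for $\rresidual$. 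I would also remark that $X \lresidual Z$ and $Z \rresidual Y$ are themselves closed when $Z$ is closed (they are residuals in the Galois-connected structure), so they genuinely live in $W^+$; this can be seen either directly or as a consequence of the residuation law just established together with the existence of a largest element below.

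The only mildly delicate point — and the one I would flag as the main obstacle — is making the manipulations with nested applications of $\gamma$ fully rigorous, i.e., systematically using the three facts (monotonicity, idempotency $\gamma\gamma = \gamma$, and the nucleus inequality) to show identities like $\gamma(\gamma(X) \circ Y) = \gamma(X \circ Y)$. Since all of these are recorded in the excerpt or cited from \cite{GJ2013}, and the present lemma is itself attributed to \cite{GJ2013}, I would keep the proof brief: establish the lattice structure in one sentence, prove $\gamma(\gamma(X)\circ Y) = \gamma(X\circ Y) = \gamma(X \circ \gamma(Y))$ as a single auxiliary computation, and then derive monoid associativity, unit laws, and residuation as quick corollaries, citing \cite{GJ2013} for the remaining routine verifications.
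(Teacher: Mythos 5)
The paper gives no proof of this lemma, simply citing \cite{GJ2013}, and your direct verification is precisely the standard argument found there: lattice structure from the closure operator, the auxiliary identity $\gamma(\gamma(X)\circ Y)=\gamma(X\circ Y)=\gamma(X\circ\gamma(Y))$ for the monoid laws, and residuation via $X\circ_\gamma Y\subseteq Z\iff X\circ Y\subseteq Z$ for closed $Z$. You correctly flag the one genuinely delicate point---that $X\lresidual Z$ and $Z\rresidual Y$ are themselves $\gamma$-closed---and your proposed justification (applying the nucleus inequality to show $X\circ\gamma(X\lresidual Z)\subseteq\gamma(Z)=Z$, whence $\gamma(X\lresidual Z)\subseteq X\lresidual Z$) is sound.
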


\begin{definition}
A \emph{Gentzen frame} is a pair $(\m{W},\m{A})$ where 
\begin{itemize}
\item $\m{W} = (W, W', N, \circ, \epsilon )$ is a residuated frame. 
\item $\m{A}$ is an algebra in the language $\{\meet,\join,\cdot, \under,\ovr, 1\}$ with an injection from $A$ to $W$ under which we identify $A$ as a subset of $W$.
\item $(W,\circ,\epsilon)$ is generated by $A$.
\item there is an injection from $A$ into $W'$ whose image we will identify with $A$.
\item $N$ satisfies the rules in Figure~\ref{fig:gentzen} for, $x,y\in W$, $z\in W'$,  $a,a_1,a_2,b \in A$.
\end{itemize} 
A \emph{cut-free Gentzen frame} is defined analogously, but without requiring that it satisfies (Cut).
\end{definition}

\begin{figure}
	\[
	\AxiomC{\(x \mathrel{N} a\)}
	\AxiomC{\( a\mathrel{N} z  \)}
	\RightLabel{(Cut)}
	\BinaryInfC{\( x \mathrel{N} z \)}
	\DisplayProof
	\hspace{1cm}
	\AxiomC{}
	\RightLabel{(Id)}
	\UnaryInfC{\(a \mathrel{N} a\)}
	\DisplayProof
	\]
	\[
	\AxiomC{\(x \mathrel{N} a \)}
	\AxiomC{\(b \mathrel{N} z \)}
	\RightLabel{($\under$L)}
	\BinaryInfC{\(a\under b \mathrel{N} x \Lresidual z \)}
	\DisplayProof
	\hspace{1cm}
	\AxiomC{\(x \mathrel{N} a \Lresidual b \)}
	\RightLabel{($\under$R)}
	\UnaryInfC{\(x \mathrel{N} a \under b \)}
	\DisplayProof
	\]
	\[
	\AxiomC{\(x \mathrel{N} a \)}
	\AxiomC{\(b \mathrel{N} z \)}
	\RightLabel{($\ovr$L)}
	\BinaryInfC{\(b \ovr a \mathrel{N} z \Rresidual x \)}
	\DisplayProof
	\hspace{1cm}
	\AxiomC{\(x \mathrel{N} b \Rresidual a \)}
	\RightLabel{($\ovr$R)}
	\UnaryInfC{\(x \mathrel{N} b \ovr a \)}
	\DisplayProof
	\]
	\[
	\AxiomC{\(a\circ b \mathrel{N} z \)}
	\RightLabel{($\cdot$L)}
	\UnaryInfC{\(a\cdot b \mathrel{N} z \)}
	\DisplayProof
	\hspace{1cm}
	\AxiomC{\(x \mathrel{N} a \)}
	\AxiomC{\(y \mathrel{N} b \)}
	\RightLabel{($\cdot$R)}
	\BinaryInfC{\(x \circ y \mathrel{N}  a \cdot b \)}
	\DisplayProof
	\]
	\[
	\AxiomC{\(b \mathrel{N} z \)}
	\RightLabel{($\meet$L$_i$)}
	\UnaryInfC{\(a_0\meet a_1 \mathrel{N} z \)}
	\DisplayProof
  \hspace{0.5cm}
	\AxiomC{\(x \mathrel{N} a \)}
	\AxiomC{\(x \mathrel{N} b \)}
	\RightLabel{($\meet$R)}
	\BinaryInfC{\(x \mathrel{N}  a \meet b \)}
	\DisplayProof
	\]
	\[
	\AxiomC{\(a \mathrel{N} z \)}
	\AxiomC{\(b \mathrel{N} z \)}
	\RightLabel{($\join$L)}
	\BinaryInfC{\(a\join b \mathrel{N}  z \)}
	\DisplayProof
	\hspace{0.5cm}
	\AxiomC{\(x \mathrel{N} a_i \)}
	\RightLabel{($\join$R$_i$)}
	\UnaryInfC{\(x \mathrel{N} a_0\join a_1 \)}
	\DisplayProof
  \]
	\[
	\AxiomC{\(\epsilon \mathrel{N} z\)}
	\RightLabel{(1L)}
	\UnaryInfC{\( 1 \mathrel{N} z \)}
	\DisplayProof
	\hspace{1cm}
	\AxiomC{}
	\RightLabel{(1R)}
	\UnaryInfC{\(\epsilon \mathrel{N} 1\)}
	\DisplayProof
	\]
	
	\caption{Gentzen rules}
	\label{fig:gentzen}\end{figure}

\begin{definition}
A \emph{(cut-free) $*$-Gentzen frame} is pair $(\m{W},\m{A})$ such that $\m{A}$ is an algebra in the language $\{\meet,\join,\cdot, \under,\ovr,{}^*,1,0\}$ such that if $\m{A}'$ is the $\{{}^\ast,0 \}$-free reduct of $\m{A}$, then $(\m{W}, \m{A}')$ is a (cut-free) Gentzen frame and $N$ satisfies the additional rules in Figure~\ref{fig:star-gentzen} for $a\in A$,  $x,y\in W$, $z\in W'$, where for $x\in W$, $x^{(n)}$ is the $n$-th power with respect to $\circ$.
\end{definition}

\begin{figure}
	\[
	\AxiomC{\( x\mathrel{N} 0 \)}
	\RightLabel{($0$L)}
	\UnaryInfC{\( x  \mathrel{N} z \)}
	\DisplayProof
	\hspace{1cm}
	\AxiomC{\((a^{(n)} \mathrel{N} z)_{n\in \mathbb{N}}\)}
	\RightLabel{($*$L)}
	\UnaryInfC{\(a^* \mathrel{N} z\)}
	\DisplayProof
\]
\[
	\AxiomC{}
	\RightLabel{($*$R$_0$)}
	\UnaryInfC{\(\epsilon \mathrel{N} a^*\)}
	\DisplayProof
	\hspace{0.3cm}
	\AxiomC{\(x \mathrel{N} a\)}
	\AxiomC{\(y \mathrel{N} a^*\)}
	\RightLabel{($*$R$_1$)}
	\BinaryInfC{\(x \circ y \mathrel{N} a^*\)}
	\DisplayProof
	\]
	
	\caption{$*$-Gentzen rules}
	\label{fig:star-gentzen}\end{figure}

The \emph{dual algebra} of a (cut-free) $*$-Gentzen frame \((\m{W},\m{A})\) is the algebra 
	\[
	\dualAlgebra = (W^+, \cap, \cup_\gamma, \circ_\gamma, \lresidual,\rresidual, {^{*_\gamma}}, 0^{\lhd}, \gamma(\{\epsilon\})),
	\]
where we define
\[
  X^{*_\gamma} := \gamma(X^*)
\]
for
\[
  X^* = \set{x_0 \circ \cdots \circ x_{n-1} \mid x_i \in X, \, n\in \NN}.
\]
Note that we use the same notation as for the dual algebra of a residuated frame, but it will be clear from the context which dual algebra we mean.

\begin{proposition}
Let $(\m{W},\m{A})$ be a (cut-free) $*$-Gentzen frame. Then $\m{W}^+$ is a $*$-continuous action lattice.
\end{proposition}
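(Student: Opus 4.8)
The plan is to reduce the statement to Lemma~\ref{l:dual-alg-reslat} together with standard facts about complete residuated lattices, so that the only genuinely frame-theoretic input is the treatment of $0$ and of the star. Since the $\{{}^*,0\}$-free reduct of $(\m W,\m A)$ is a Gentzen frame, $\m W$ is in particular a residuated frame, so by Lemma~\ref{l:dual-alg-reslat} the reduct $(W^+,\cap,\cup_\gamma,\circ_\gamma,\lresidual,\rresidual,\gamma(\{\epsilon\}))$ of $\dualAlgebra$ is a residuated lattice. Moreover $W^+=\gamma[\wp(W)]$ is the family of $\gamma$-closed subsets of $W$ and hence is closed under arbitrary intersections, so the lattice reduct of $\dualAlgebra$ is in fact \emph{complete}, with $\bigwedge_i X_i=\bigcap_i X_i$, $\bigvee_i X_i=\gamma\bigl(\bigcup_i X_i\bigr)$, and least element $\bot=\gamma(\varnothing)$. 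Next I would check that $0^{\lhd}$ (that is, $\set{x\in W\mid x\mathrel N 0}$) equals $\bot$: it lies in $W^+$, being of the form $Y^{\lhd}$, and for any $X\in W^+$ and any $x\in 0^{\lhd}$ we have $x\mathrel N 0$, so rule $0\mathrm L$ yields $x\mathrel N z$ for every $z\in W'$; in particular $x\mathrel N y$ for every $y\in X^{\rhd}$, whence $x\in X^{\rhd\lhd}=X$. Thus $0^{\lhd}\subseteq X$ for all $X\in W^+$, i.e.\ $0^{\lhd}=\bot$.

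It remains to handle ${}^{*_\gamma}$. For $X\subseteq W$ write $X^{(n)}:=\set{x_1\circ\cdots\circ x_n\mid x_i\in X}$ for the $n$-th $\circ$-power of $X$ (so $X^{(0)}=\set{\epsilon}$), and recall that $X^*=\bigcup_{n\in\NN}X^{(n)}$ and $X^{*_\gamma}=\gamma(X^*)$. The key point is that for $X\in W^+$ the element $X^{*_\gamma}$ coincides with the supremum in $\dualAlgebra$ of the \emph{algebraic} powers $X^n$ of $X$ (where $X^0=\gamma(\set{\epsilon})$ and $X^{n+1}=X^n\circ_\gamma X$). Using the nucleus identities $\gamma(\gamma(U)\circ V)=\gamma(U\circ V)=\gamma(U\circ\gamma(V))$, which are immediate from $\gamma$ being a nucleus, one shows by induction on $n$ that $X^n=\gamma(X^{(n)})$; and since $X^{(n)}\subseteq\gamma(X^{(n)})\subseteq\gamma\bigl(\bigcup_m X^{(m)}\bigr)$ for every $n$, a sandwiching argument gives
\[
\bigvee_{n}X^n \;=\; \gamma\Bigl(\bigcup_n\gamma(X^{(n)})\Bigr) \;=\; \gamma\Bigl(\bigcup_n X^{(n)}\Bigr) \;=\; \gamma(X^*) \;=\; X^{*_\gamma}.
\]
This is exactly $*$-continuity for $\dualAlgebra$.

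Finally I would invoke the standard fact that any complete residuated lattice with a least element $0$ becomes a $*$-continuous action lattice once one sets $x^*:=\bigvee_{n\in\NN}x^n$: indeed $1\le x^*$ since $x^0=1$; $xx^*\le x^*$ and $x^*x\le x^*$ because $\cdot$ has residuals and therefore preserves arbitrary joins, so $x\cdot\bigvee_n x^n=\bigvee_n x^{n+1}\le x^*$ and symmetrically; and the Kleene induction rules $xy\le y\Rightarrow x^*y\le y$ and $yx\le y\Rightarrow yx^*\le y$ follow by proving $x^n y\le y$ (resp.\ $yx^n\le y$) by induction on $n$ and taking the join over $n$. Applying this to $\dualAlgebra$, with $0=0^{\lhd}$ and with $x^*=X^{*_\gamma}$ by the previous paragraph, shows that $\dualAlgebra$ is a $*$-continuous action lattice. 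Note that (Cut) is used nowhere above---only the residuated frame structure, rule $0\mathrm L$, and the definition of ${}^{*_\gamma}$---so the cut-free case is covered verbatim.

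I do not anticipate a serious obstacle beyond correctly invoking Lemma~\ref{l:dual-alg-reslat}; the step demanding the most care is keeping the two notions of power apart---the set-theoretic power $X^{(n)}\subseteq W$ used to define $X^*$ versus the algebraic power $X^n\in W^+$---and deploying the nucleus identities so as to identify $\gamma$ of the former with the latter, ensuring that the bare definition $X^{*_\gamma}=\gamma(X^*)$ genuinely delivers $*$-continuity rather than merely some Kleene star.
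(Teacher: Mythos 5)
Your proposal is correct and follows essentially the same route as the paper: invoke Lemma~\ref{l:dual-alg-reslat} for the complete residuated lattice reduct, use rule ($0$L) to show $0^{\lhd}$ is the bottom element, and identify $X^{*_\gamma}=\gamma(X^*)$ with $\bigvee_n X^n$ via the nucleus identity $\bigvee_n\gamma(X^{(n)})=\gamma(\bigcup_n X^{(n)})$. The only difference is that you spell out the standard final step (that a complete residuated lattice with bottom and $x^*=\bigvee_n x^n$ satisfies the action lattice axioms), which the paper leaves implicit.
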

\begin{proof}
Since $\m{W}$ is a residuated frame, it follows from Lemma~\ref{l:dual-alg-reslat} that $(W^+, \cap, \cup_\gamma, \circ_\gamma, \lresidual,\rresidual,  \gamma(\{\epsilon\}))$ is a complete residuated lattice. So it remains to show that $0^{\lhd}$ is the bottom element and $X^{*_\gamma} = \bigvee_{n\in \NN} X^n$ for each $X\in W^+$, where $X^n$ is the $n$-fold product of $X$ with respect to $\circ_\gamma$. For the first part let $X \in W^+$ and $x\in 0^{\lhd}$, i.e., $x\mathrel{N} 0$. It follows from (0L), that $x\mathrel{N} z$ for each $z \in X^{\rhd}$, yielding $x \in X^{\rhd\lhd}$.  Hence $0^{\lhd} \subseteq X$.  

For the second part let $X\in W^+$  and note that
\[
\bigcup_{n\in \NN} X^{(n)} =  X^*.
\]
Thus, since $\gamma$ is a nucleus, we get
\begin{align*}
\bigvee_{n\in \NN} X^n &= \bigvee_{n\in \NN} \gamma(X^{(n)}) \\
& = \gamma(\bigcup_{n\in \NN} X^{(n)}) \\
&= \gamma( X^*) =X^{*_\gamma}. \qedhere
\end{align*}
\end{proof}

Let $\m{A}$ and $\m{B}$ be two algebras in the language $\{\meet,\join,\cdot, \under,\ovr,{}^*,1,0\}$. A \emph{quasimorphism} from $\m{A}$ to $\m{B}$ is a map $f\colon A \to \wp(B)$ such that  $c_B \in F(c_A)$ for $c\in \{1,0\}$, $F(a)^{*_{B}} \subseteq F(a^{*_A})$, and $F(a) \bullet_B F(B) \subseteq F(a\bullet_A b)$ for $\bullet \in \{\meet,\join,\cdot,\under,\ovr\}$, where for $X,Y \in \wp(B)$, $X^{*_B} = \{x^{\ast_B} \mid x \in X \}$ and $X \bullet_{B} Y = \{x\bullet_B y \mid x\in X,y\in Y\}$.

\begin{lemma}[cf. \cite{GJ2013}]\label{lemma:quasi}
Let  $(\m{W},\m{A})$ be a cut-free $*$-Gentzen frame. 
\begin{enumerate}
\item The map $F\colon A \to \wp(W^+)$,
\begin{align*}
F(a) = \{X \in W^+ \mid a \in X \subseteq a^{\lhd}\}
\end{align*}
is a quasimorphism from $\m{A}$ to $\m{W}^+$.
\item If $(\m{W},\m{A})$ is a $*$-Gentzen frame, then $f\colon A \to W^+$, $f(a) = a^{\rhd\lhd} = a^{\lhd}$ is a homomorphism from $\m{A}$ to $\m{W}^+$. Moreover, if $N$ is antisymmetric, then $f$ is an embedding.
\end{enumerate}
\end{lemma}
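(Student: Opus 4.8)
The plan is to verify the quasimorphism clauses of item~(1) operation by operation, and then to obtain item~(2) by observing that in the presence of (Cut) each set $F(a)$ collapses to the singleton $\{f(a)\}$.

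For item~(1), all the clauses involving the operations and constants of the $\{{}^{*},0\}$-free reduct $\m{A}'$ of $\m{A}$, i.e.\ those for $\meet,\join,\cdot,\under,\ovr$ and for $1$, depend only on $(\m{W},\m{A}')$ being a cut-free Gentzen frame; these are precisely the clauses established in \cite{GJ2013}, where they are proved from the Gentzen rules of Figure~\ref{fig:gentzen}, the nuclearity of $N$, and the fact that $(W,\circ,\epsilon)$ is generated by $A$, so I would simply invoke that argument. The new content is the clause for $0$ and the clause for ${}^{*}$. For $0$ one checks $0^{\lhd}\in F(0)$, i.e.\ $0\in 0^{\lhd}\subseteq 0^{\lhd}$; the only point needing an argument is $0\in 0^{\lhd}$, which is the instance $0\mathrel{N}0$ of (Id), recalling $0\in A$. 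For ${}^{*}$, given $X\in F(a)$ one shows $X^{*_\gamma}=\gamma(X^{*})\in F(a^{*})$, i.e.\ $a^{*}\in\gamma(X^{*})$ and $\gamma(X^{*})\subseteq (a^{*})^{\lhd}$. The first membership follows from ($*$L): if $z\in (X^{*})^{\rhd}$ then, since $a\in X$ forces $a^{(n)}\in X^{*}$ for all $n$, we get $a^{(n)}\mathrel{N}z$ for all $n$, so ($*$L) yields $a^{*}\mathrel{N}z$; hence $a^{*}\in (X^{*})^{\rhd\lhd}=\gamma(X^{*})$. The inclusion $\gamma(X^{*})\subseteq (a^{*})^{\lhd}$ reduces, via the Galois connection, to $a^{*}\in (X^{*})^{\rhd}$, i.e.\ to $x_{0}\circ\cdots\circ x_{n-1}\mathrel{N}a^{*}$ for all $n$ and all $x_{i}\in X$; since $X\subseteq a^{\lhd}$ gives $x_{i}\mathrel{N}a$, this follows by induction on $n$ using ($*$R$_0$) when $n=0$ and ($*$R$_1$) for the inductive step.

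For item~(2), I would first record that (Cut) and (Id) yield $a^{\rhd\lhd}=a^{\lhd}$ for every $a\in A$: the inclusion $a^{\rhd\lhd}\subseteq a^{\lhd}$ uses $a\mathrel{N}a$ (from (Id)), and $a^{\lhd}\subseteq a^{\rhd\lhd}$ uses (Cut). Hence $a^{\lhd}=a^{\rhd\lhd}=\gamma(\{a\})$ is the least closed set containing $a$, so every closed $X$ with $a\in X\subseteq a^{\lhd}$ satisfies $X=a^{\lhd}$; since moreover $a^{\lhd}\in F(a)$ (again because $a\mathrel{N}a$), we get $F(a)=\{a^{\lhd}\}=\{f(a)\}$. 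Now apply item~(1), which is available since a $*$-Gentzen frame is in particular a cut-free $*$-Gentzen frame: each inclusion or membership in the definition of a quasimorphism becomes an equality of the single elements concerned, which is exactly that $f$ preserves the operations and constants. For instance $f(a)\bullet_{W^{+}}f(b)\in F(a)\bullet_{W^{+}}F(b)\subseteq F(a\bullet b)=\{f(a\bullet b)\}$ forces $f(a)\bullet_{W^{+}}f(b)=f(a\bullet b)$ for $\bullet\in\{\meet,\join,\cdot,\under,\ovr\}$, and likewise $f(a)^{*_\gamma}=f(a^{*})$, $f(0)=0^{\lhd}$, $f(1)=\gamma(\{\epsilon\})$. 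Finally, if $N$ is antisymmetric and $f(a)=f(b)$, then $a\in a^{\lhd}=b^{\lhd}$ gives $a\mathrel{N}b$ and, symmetrically, $b\mathrel{N}a$, so $a=b$; thus $f$ is an embedding.

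I do not expect a genuine conceptual obstacle; the difficulty is bookkeeping, with two points deserving care. First, throughout item~(1) one must stay within the cut-free discipline, establishing membership in closed sets directly from the one-sided frame rules and never invoking (Cut). Second, in the ${}^{*}$ clause one must correctly match the infinitary premise of ($*$L) with the set $X^{*}=\bigcup_{n\in\NN}X^{(n)}$ underlying $X^{*_\gamma}$, exactly as in the proof of the preceding proposition. The binary-operation clauses carry nothing new beyond \cite{GJ2013}; the only mild novelty is that $F(a)$ need not be a singleton in the cut-free setting, so there those clauses are genuine inclusions between families of closed sets rather than identities.
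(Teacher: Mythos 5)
Your proof is correct and follows essentially the same route as the paper: part (1) is reduced to \cite{GJ2013} for the $\{\meet,\join,\cdot,\under,\ovr,1\}$-clauses, the $0$-clause is handled via (Id), and the ${}^*$-clause is verified exactly as in the paper by using ($*$L) to get $a^*\in\gamma(X^*)$ and ($*$R$_0$), ($*$R$_1$) with induction to get $X^*\subseteq(a^*)^{\lhd}$. The only difference is that for part (2) you spell out the standard collapse $F(a)=\{a^{\lhd}\}$ under (Cut) and (Id), where the paper simply defers to \cite{GJ2013}; this is the same argument, just made explicit.
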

\begin{proof}
1) The proof for the connectives $\{\meet,\join,\cdot, \under,\ovr,1\}$ can be found in \cite{GJ2013}. So we only need to consider the remaining connectives $0$ and ${}^*$.
First note that by definition $F(0) = \{X\in W^+ \mid 0 \in X \subseteq 0^\lhd \}$. So clearly $0^{\lhd} \in F(0)$. 

Now for the connective ${}^*$, let $a \in A$. Then for $X \in F(a)^{*_\gamma}$, there exists a $Y \in F(a)$, i.e., $a \in Y \subseteq a^{\lhd}$ such that $X = Y^{*_\gamma}$, or, equivalently, $a\in Y$ and for each $y\in Y$, $y \mathrel{N} a$. Thus for each $n\in \NN$, $a^{(n)} \in Y^*$, yielding $a^{(n)} \mathrel{N} z$ for each $z\in (Y^*)^{\rhd}$. So, by ($*$L), $a^* \mathrel{N} z$ for each $z\in (Y^*)^{\rhd}$, yielding $a^* \in (Y^{*})^{\rhd\lhd} = Y^{*_\gamma}  =X$. On the other hand, by ($*$R$_0$),  $\epsilon \mathrel{N} a^*$ and for each $x \in Y$, $x \mathrel{N} a$, so, by ($*$R$_1$), $x = x \circ \epsilon \mathrel{N} a^*$. Now inductively applying ($*$R$_1$) yields $x_0\circ x_1 \circ \dots \circ x_n \mathrel{N} a^*$ for all $x_0,x_1,\dots, x_n \in Y$, i.e, $Y^* \subseteq (a^*)^{\lhd}$ and thus $X = Y^{*_\gamma} \subseteq (a^*)^\lhd$. Hence, $X \in F(a^*)$ and we conclude that $F(a)^{*_\gamma} \subseteq F(a^*)$.
 
 For 2) see \cite{GJ2013}.
\end{proof}

\subsection{Cut-free completeness}

\begin{definition}
For a set of analytic rules $\mathcal{R}$ we define the frame $\m{W}_\mathcal{R} = (W,W',N_\mathcal{R}, \circ, \epsilon)$, where $\epsilon$ is the empty sequence,
\begin{enumerate}
\item \(W = {\term(\mathcal{L}_{\AL})}^*\), i.e., finite sequences of terms over the language of action lattices;
		\item \(W' = {\term(\mathcal{L}_{\AL})}^* \times {\term(\mathcal{L}_{\AL})}^* \times {\term(\mathcal{L}_{\AL})}\);
		\item \(\circ\) is composition of sequences;
		\item we define \(N_\mathcal{R} \subseteq W \times W'\) as
		\begin{align*}
      &\Gamma \mathrel{N_\mathcal{R}^\omega} (\Sigma_l, \Sigma_r, \alpha) \text{ iff } \\ 
      &\hspace{1cm}\gentzenContinuousActionLatticeOmega+\mathcal{R} \vdash \Sigma_l, \Gamma, \Sigma_r \Rightarrow \alpha. 
			\end{align*}
\end{enumerate}
\end{definition}

\begin{proposition}
For any set of analytic  rules $\mathcal{R}$, $(\m{W}_\mathcal{R},\termAlg(\mathcal{L}_{\AL}))$ is a cut-free $*$-Gentzen frame, where we identify $\alpha \in \term(\mathcal{L}_{\AL})$ with $(\epsilon,\epsilon,\alpha) \in W'$.
\end{proposition}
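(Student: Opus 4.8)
The plan is to verify, one clause at a time, that $(\m{W}_\mathcal{R}, \termAlg(\mathcal{L}_{\AL}))$ satisfies the definition of a cut-free $*$-Gentzen frame. The purely structural requirements are immediate: $W$ is the set of all finite sequences of terms over $\mathcal{L}_{\AL}$, so it is generated under $\circ$ (concatenation) by the one-element sequences, which we identify with $A = \term(\mathcal{L}_{\AL})$; and $\alpha \mapsto (\epsilon, \epsilon, \alpha)$ is an injection of $A$ into $W'$ (its third coordinate recovers $\alpha$), whose image we again identify with $A$. Under these identifications, a sequence $\Gamma \in W$ is $N_\mathcal{R}$-related to a triple $(\Sigma_l, \Sigma_r, \alpha) \in W'$ precisely when $\gentzenContinuousActionLatticeOmega + \mathcal{R} \vdash \Sigma_l, \Gamma, \Sigma_r \Rightarrow \alpha$; in particular $\Gamma \mathrel{N_\mathcal{R}} \alpha$ iff $\gentzenContinuousActionLatticeOmega + \mathcal{R} \vdash \Gamma \Rightarrow \alpha$.

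First I would check that $N_\mathcal{R}$ is nuclear, so that $\m{W}_\mathcal{R}$ is a residuated frame. For $x = \Delta \in W$, $y = \Gamma \in W$ and $z = (\Sigma_l, \Sigma_r, \alpha) \in W'$, set
\[
x \Lresidual z := (\Sigma_l\Delta,\ \Sigma_r,\ \alpha), \qquad z \Rresidual y := (\Sigma_l,\ \Gamma\Sigma_r,\ \alpha),
\]
both of which lie in $W'$. Since $x \circ y = \Delta\Gamma$, each of the statements $x \circ y \mathrel{N_\mathcal{R}} z$, $y \mathrel{N_\mathcal{R}} x \Lresidual z$, and $x \mathrel{N_\mathcal{R}} z \Rresidual y$ merely asserts that $\gentzenContinuousActionLatticeOmega + \mathcal{R} \vdash \Sigma_l, \Delta, \Gamma, \Sigma_r \Rightarrow \alpha$, whence $N_\mathcal{R}$ is nuclear.

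It then remains to verify that $N_\mathcal{R}$ satisfies every rule of Figure~\ref{fig:gentzen} other than (Cut), together with every rule of Figure~\ref{fig:star-gentzen}. This I would do rule by rule, unfolding the residuals just computed and the identification of $A$ with the one-element sequences and with $\{(\epsilon,\epsilon,\alpha)\}$. Every logical rule becomes an instance of the homonymous rule of Figure~\ref{fig:common-rules}: for instance ($\under$L), from premises $\Delta \Rightarrow a$ and $\Sigma_l, b, \Sigma_r \Rightarrow \gamma$, demands the sequent $\Sigma_l, \Delta, a \under b, \Sigma_r \Rightarrow \gamma$, which is exactly an application of $\under$L; ($\under$R) is $\under$R, ($\ovr$L/R) are $\ovr$L/R, and ($\cdot$L/R), ($\meet$L$_i$/R), ($\join$L/R$_i$), ($1$L/R) are the correspondingly named rules; ($*$L) is the rule $*\mathrm{L}_\omega$, which belongs to $\gentzenContinuousActionLatticeOmega$ since only $*\mathrm{L}$ is excluded; and ($*$R$_0$), ($*$R$_1$) are $*\mathrm{R}_0$, $*\mathrm{R}_1$. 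The only two rules that are not literally rules of $\gentzenContinuousActionLatticeOmega + \mathcal{R}$ are (Id), which requires $\gentzenContinuousActionLatticeOmega + \mathcal{R} \vdash \alpha \Rightarrow \alpha$ for \emph{every} $\alpha \in \term(\mathcal{L}_{\AL})$ and not merely for propositional variables, and ($0$L), which requires that from $\Delta \Rightarrow 0$ one may infer $\Sigma_l, \Delta, \Sigma_r \Rightarrow \beta$; both are supplied by Lemma~\ref{l:adm-of-frame-rules}.

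I expect no serious obstacle here: the genuinely non-routine content is precisely the admissibility of the generalized identity axiom and of the $0\mathrm{R}$ rule in $\gentzenContinuousActionLatticeOmega + \mathcal{R}$, which is already established in Lemma~\ref{l:adm-of-frame-rules}, and everything else is bookkeeping over the rule list. The one point that deserves care is exactly the reason that lemma is needed: the Gentzen-frame rules quantify over arbitrary $a, b \in A = \term(\mathcal{L}_{\AL})$, whereas the identity axiom of $\gentzenContinuousActionLatticeOmega$ is postulated only for propositional variables. Finally, note that since we are constructing a \emph{cut-free} Gentzen frame we neither verify nor need (Cut); checking it would amount to cut-admissibility for $\gentzenContinuousActionLatticeOmega + \mathcal{R}$, which is not among our hypotheses.
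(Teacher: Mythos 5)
Your proof is correct and follows essentially the same route as the paper's: reduce each (cut-free) $*$-Gentzen-frame condition to a rule or admissible rule of $\gentzenContinuousActionLatticeOmega+\mathcal{R}$, with Lemma~\ref{l:adm-of-frame-rules} supplying the two non-immediate cases. You are somewhat more explicit than the paper (which defers the $0,{}^*$-free part to \cite{GJ2013}), and your observation that (Id) for arbitrary formulas also rests on that lemma is a correct and worthwhile point of care.
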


\begin{proof}
It is straightforward to check that the frame is a cut-free Gentzen frame when restricted to the signature without $0$ and ${^*}$, since the proof system $\gentzenContinuousActionLatticeOmega$ is  `essentially' an extension of the Full Lambek Calculus without falsum (see \cite{GJ2013}). Also, by Lemma~\ref{l:adm-of-frame-rules}, it follows that the frame satisfies ($0$L).  Moreover it is clear, by definition, that the frame satisfies the other  $*$-Gentzen rules, noting that they correspond to the obvious rules of the calculus.
\end{proof}

For a class of algebras $\mathsf{K}$ and a set of quasiequations $Q$ we denote by $\mathsf{K} + Q$ the class axiomatized relative to $\mathsf{K}$ by $Q$.  Let $q = (\alpha_1\leq \beta \mathbin{\&} \dots \mathbin{\&} \alpha_n\leq \beta \implies \alpha_0 \leq \beta)$ be an analytic quasiequation. So in particular $\alpha_i$ are products of variables (or $1$) and $\beta$ is a variable.   We say that a residuated frame $\m{W}$ satisfies $q$ if for each valuation $v$ into $(W,\circ, \epsilon)$,
\[
v(\alpha_1) \mathrel{N} v(\beta)\text{ and } \dots \text{ and } v(\alpha_n) \mathrel{N} v(\beta) \implies v(\alpha_0) \mathrel{N} v(\beta).
\]

\begin{theorem}[\cite{CGT2012}]\label{theorem:frame-to-dual-equation}
Let $\m{W}$ be a residuated frame. Then for each analytic quasiequation $q$, $\m{W}$ satisfies $q$ if and only if $\m{W}^+$ satisfies $q$.
\end{theorem}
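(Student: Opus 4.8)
The plan is to first invoke the remark recorded just before the statement: since \(\m{W}^+\) is a residuated lattice (Lemma~\ref{l:dual-alg-reslat}), it satisfies the analytic quasiequation \(q\) if and only if it satisfies the equation \(\alpha_0 \leq \alpha_1 \vee \dots \vee \alpha_n\). Hence it suffices to show that \(\m{W}\) satisfies \(q\) if and only if \(\m{W}^+ \models \alpha_0 \leq \alpha_1 \vee \dots \vee \alpha_n\). I would set up two bookkeeping facts. Because \(\gamma\) is a nucleus, \(\gamma(\set{a}) \circ_\gamma \gamma(\set{b}) = \gamma(\set{a \circ b})\) for all \(a,b \in W\); consequently, for any product of variables \(\delta = z_1 \cdots z_k\) and any valuation \(v\) of the variables into the monoid \((W,\circ,\epsilon)\), the valuation \(\overline{v}\) into \(\m{W}^+\) defined by \(\overline{v}(z) := \gamma(\set{v(z)})\) satisfies \(\overline{v}(\delta) = \gamma(\set{v(\delta)})\), and in particular \(v(\delta) \in \overline{v}(\delta)\); the restricted shape of analytic quasiequations---every \(\alpha_i\) a product of variables, \(\beta\) a single variable---is exactly what makes \(\overline{v}\) interact well with the syntax. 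Also, in \(\m{W}^+\) the join of \(Y_1, \dots, Y_n\) equals \(\gamma(Y_1 \cup \dots \cup Y_n) = (Y_1 \cup \dots \cup Y_n)^{\rhd\lhd}\), and for Galois-closed sets the Galois connection gives \((Y_1 \cup \dots \cup Y_n)^{\rhd} = Y_1^{\rhd} \cap \dots \cap Y_n^{\rhd}\) and \(Y_i^{\rhd} = \set{v(\alpha_i)}^{\rhd}\) when \(Y_i = \set{v(\alpha_i)}^{\rhd\lhd}\).

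For the implication ``\(\m{W}\) satisfies \(q\) implies \(\m{W}^+ \models \alpha_0 \leq \alpha_1 \vee \dots \vee \alpha_n\)'', I would fix an arbitrary valuation \(\overline{v}\) of \(\m{W}^+\) sending \(x_j\) to \(X_j \in W^+\) (not necessarily of singleton-generated form), and write \(\overline{v}(\alpha_i)\) for the resulting value of \(\alpha_i\). Directly from the definition of \(\circ_\gamma\) and the nucleus property, the equation unfolds at \(\overline{v}\) to \(\gamma(X_1 \circ \dots \circ X_m) = \overline{v}(\alpha_0) \subseteq \gamma\big(\bigcup_{i=1}^n \overline{v}(\alpha_i)\big)\). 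Since \(\gamma(X_1 \circ \dots \circ X_m) = (X_1 \circ \dots \circ X_m)^{\rhd\lhd}\), it is enough to verify that every \(w \in \big(\bigcup_i \overline{v}(\alpha_i)\big)^{\rhd}\) lies in \((X_1 \circ \dots \circ X_m)^{\rhd}\), i.e.\ that \(u_1 \circ \dots \circ u_m \mathrel{N} w\) whenever \(u_j \in X_j\). Given such \(u_1, \dots, u_m\), let \(v\) be the monoid valuation with \(v(x_j) := u_j\); by the first bookkeeping fact \(v(\alpha_i) \in \overline{v}(\alpha_i)\), so \(v(\alpha_i) \mathrel{N} w\) for \(i = 1, \dots, n\), and the hypothesis that \(\m{W}\) satisfies \(q\) (applied to \(v\), with \(\beta\) interpreted as \(w \in W'\)) delivers \(v(\alpha_0) = u_1 \circ \dots \circ u_m \mathrel{N} w\), as required.

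For the converse, I would assume \(\m{W}^+ \models \alpha_0 \leq \alpha_1 \vee \dots \vee \alpha_n\), and take a monoid valuation \(v\) into \(W\) and \(w \in W'\) with \(v(\alpha_i) \mathrel{N} w\) for \(i = 1, \dots, n\). Setting \(\overline{v}(x_j) := \gamma(\set{v(x_j)})\) and instantiating the equation at \(\overline{v}\), the first bookkeeping fact gives \(\gamma(\set{v(\alpha_0)}) = \overline{v}(\alpha_0) \subseteq \gamma\big(\bigcup_i \overline{v}(\alpha_i)\big) = \big(\bigcup_i \overline{v}(\alpha_i)\big)^{\rhd\lhd}\). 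Each \(\overline{v}(\alpha_i) = \gamma(\set{v(\alpha_i)}) = \set{v(\alpha_i)}^{\rhd\lhd}\) is Galois-closed, so by the second bookkeeping fact \(\big(\bigcup_i \overline{v}(\alpha_i)\big)^{\rhd} = \bigcap_i \set{v(\alpha_i)}^{\rhd\lhd\rhd} = \bigcap_i \set{v(\alpha_i)}^{\rhd} = \set{v(\alpha_1), \dots, v(\alpha_n)}^{\rhd}\), which contains \(w\) by hypothesis. Since \(v(\alpha_0) \in \gamma(\set{v(\alpha_0)}) \subseteq \big(\bigcup_i \overline{v}(\alpha_i)\big)^{\rhd\lhd}\), it follows that \(v(\alpha_0) \mathrel{N} w\); as \(v\) and \(w\) were arbitrary, \(\m{W}\) satisfies \(q\).

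I do not expect a deep obstacle here: this is in essence the residuated-lattice argument of \cite{CGT2012}, and the Kleene star plays no part since \(q\) contains neither \({}^*\) nor \(0\). The care required is twofold. First, one must keep the two kinds of valuation apart---an arbitrary valuation of \(\m{W}^+\) need not be of the ``singleton-generated'' form \(z \mapsto \gamma(\set{v(z)})\), which is why the first implication is run through an arbitrary \(w \in W'\) and arbitrary representatives \(u_j \in X_j\) rather than through a direct substitution. Second, one must check that the nucleus identity genuinely propagates \(\overline{v}(\delta) = \gamma(\set{v(\delta)})\) along every product term, and it is precisely there---together with \(\beta\) being a single variable, so that it may be freely set to any \(w \in W'\)---that analyticity of the quasiequation is used.
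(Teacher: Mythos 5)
The paper does not prove this statement---it is imported wholesale from \cite{CGT2012}---so there is no internal argument to compare against; your reconstruction is correct and follows exactly the standard Galois-connection/nucleus argument of that reference (reduce to the equation $\alpha_0 \leq \alpha_1 \vee \dots \vee \alpha_n$ over the residuated lattice $\m{W}^+$, pass to $^{\rhd}$-sets for the forward direction with arbitrary closed values $X_j$ and representatives $u_j \in X_j$, and use singleton-generated valuations for the converse). The only cosmetic wrinkle is that in the forward direction the membership $v(\alpha_i) \in \overline{v}(\alpha_i)$ for an arbitrary, non-singleton-generated valuation $\overline{v}$ does not literally follow from your first bookkeeping fact as stated, but rather from the same computation $\overline{v}(\alpha_i) = \gamma(X_{j_1} \circ \cdots \circ X_{j_k}) \supseteq X_{j_1} \circ \cdots \circ X_{j_k} \ni v(\alpha_i)$; with that reading the argument is complete.
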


\begin{corollary}\label{cor:dual-of-proof-system-satisfies-qe}
For any set of analytic rules $\mathcal{R}$, $(\m{W}_\mathcal{R})^+$ satisfies  $Q_a(\mathcal{R})$.
\end{corollary}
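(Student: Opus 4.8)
The plan is to deduce the statement from Theorem~\ref{theorem:frame-to-dual-equation}, which reduces it to a proof-theoretic property of the frame $\m{W}_\mathcal{R}$. Since $\m{W}_\mathcal{R}$ is a residuated frame (being the residuated-frame reduct of the cut-free $*$-Gentzen frame $(\m{W}_\mathcal{R},\termAlg(\mathcal{L}_{\AL}))$) and since $q_a(R)$ is an analytic quasiequation for every analytic rule $R$, it suffices to prove that $\m{W}_\mathcal{R}$ satisfies $q_a(R)$ for each $R \in \mathcal{R}$; Theorem~\ref{theorem:frame-to-dual-equation} then transfers this to $(\m{W}_\mathcal{R})^+$.

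Fix an analytic rule $R \in \mathcal{R}$, written in the analytic form of Definition~\ref{def:struct-rules}: premises $\Gamma, \Upsilon_i, \Delta \Rightarrow \beta$ for $0 \le i \le n$ and conclusion $\Gamma, \Upsilon, \Delta \Rightarrow \beta$, where the sequence metavariables occurring in $\Upsilon$ are pairwise distinct and distinct from $\Gamma, \Delta$, and each $\Upsilon_i$ is a (possibly repeating) sequence of sequence metavariables occurring in $\Upsilon$. Under the fixed bijection between metavariables and variables, $q_a(R)$ reads $t(\Upsilon_0) \le t(\beta) \mathbin{\&} \cdots \mathbin{\&} t(\Upsilon_n) \le t(\beta) \implies t(\Upsilon) \le t(\beta)$. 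To check satisfaction I would take an arbitrary valuation $v$ for this quasiequation, so that $v$ assigns a finite sequence of formulas to each sequence metavariable of $\Upsilon$ and interprets $t(\beta)$ as an element $(\Sigma_l, \Sigma_r, \gamma) \in W'$. This data determines a metavariable instantiation of $R$ — namely $\Gamma \mapsto \Sigma_l$, $\Delta \mapsto \Sigma_r$, $\beta \mapsto \gamma$, together with the chosen values on the sequence metavariables of $\Upsilon$ — and because each $\Upsilon_i$ is assembled from metavariables of $\Upsilon$ alone, the $i$-th premise of this instance of $R$ is exactly $\Sigma_l, v(t(\Upsilon_i)), \Sigma_r \Rightarrow \gamma$ and its conclusion is $\Sigma_l, v(t(\Upsilon)), \Sigma_r \Rightarrow \gamma$.

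It remains to unfold $N_\mathcal{R}$. The hypotheses $v(t(\Upsilon_i)) \mathrel{N_\mathcal{R}} v(t(\beta))$ for $0 \le i \le n$ mean precisely that $\gentzenContinuousActionLatticeOmega + \mathcal{R} \vdash \Sigma_l, v(t(\Upsilon_i)), \Sigma_r \Rightarrow \gamma$ for each $i$; these are the premises of the instance of $R$ described above, and $R$ is one of the rules of $\gentzenContinuousActionLatticeOmega + \mathcal{R}$, so applying it gives $\gentzenContinuousActionLatticeOmega + \mathcal{R} \vdash \Sigma_l, v(t(\Upsilon)), \Sigma_r \Rightarrow \gamma$, i.e.\ $v(t(\Upsilon)) \mathrel{N_\mathcal{R}} v(t(\beta))$. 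Hence $\m{W}_\mathcal{R}$ satisfies $q_a(R)$, and the corollary follows from the first paragraph. All steps amount to unwinding definitions, so I do not expect a genuine obstacle; the only point requiring care is the translation of a frame valuation into a metavariable instantiation of $R$, which uses that the metavariables of an analytic rule are pairwise distinct, together with the bookkeeping observation that the context $\Gamma, \Delta$ suppressed in forming $q_a(R)$ is restored as the first two coordinates of the $W'$-element interpreting $t(\beta)$.
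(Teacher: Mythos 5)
Your proposal is correct and follows the same route as the paper: the paper's proof likewise observes that $\m{W}_\mathcal{R}$ satisfies $Q_a(\mathcal{R})$ essentially by definition of $N_\mathcal{R}$ (validity of the quasiequation in the frame is exactly closure of provability under the corresponding rule instance, with the context $\Gamma,\Delta$ absorbed into the $W'$-component interpreting $\beta$) and then invokes Theorem~\ref{theorem:frame-to-dual-equation}. Your write-up merely makes explicit the translation between frame valuations and metavariable instantiations that the paper leaves implicit.
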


\begin{proof}
By definition, the frame $\m{W}_\mathcal{R}$ satisfies the analytic quasiequations in $Q_a(\mathcal{R})$, noting that containment of the rules in the proof system corresponds exactly to validity of the corresponding analytic quasiequation in the frame.  Now the claim follows from Theorem~\ref{theorem:frame-to-dual-equation}.
\end{proof}

For a sequent $S = (\alpha_0,\dots, \alpha_n \Rightarrow \beta)$, we define the equation $\varepsilon(S) := \alpha_0 \cdots \alpha_n \leq \beta$. 

\begin{corollary}
For any set of analytic rules $\mathcal{R}$, the system $\gentzenContinuousActionLatticeOmega+\mathcal{R}$ is complete with respect to the class of $^*$-continuous action lattices axiomatized by $Q_a(\mathcal{R})$, i.e., for each sequent $S$
\begin{multline*}
\AL^* + Q_a(\mathcal{R}) \models \varepsilon(S) \implies 
\gentzenContinuousActionLatticeOmega+\mathcal{R} \vdash S.
\end{multline*}
\end{corollary}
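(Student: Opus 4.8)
The plan is to run the standard ``evaluate in the canonical frame'' completeness argument, with the quasimorphism of Lemma~\ref{lemma:quasi}(1) in place of a homomorphism (since we only have a \emph{cut-free} $*$-Gentzen frame). By the two preceding propositions, $(\m{W}_\mathcal{R},\termAlg(\mathcal{L}_{\AL}))$ is a cut-free $*$-Gentzen frame and $(\m{W}_\mathcal{R})^+$ is a $*$-continuous action lattice; by Corollary~\ref{cor:dual-of-proof-system-satisfies-qe} it moreover satisfies $Q_a(\mathcal{R})$, so $(\m{W}_\mathcal{R})^+ \in \ALstar + Q_a(\mathcal{R})$. The idea is to push the sequent $S$ through this particular algebra via a valuation built from $F$, conclude that the antecedent of $S$ sits inside the closed set assigned to its succedent, and then unwind the definition of $N_\mathcal{R}$ to extract a proof.

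Concretely, first I would fix the homomorphism $v\colon \termAlg(\mathcal{L}_{\AL}) \to (\m{W}_\mathcal{R})^+$ determined on propositional variables by $v(p) := \gamma(\{p\}) = p^{\rhd\lhd}$ (this extends uniquely since $\termAlg(\mathcal{L}_{\AL})$ is absolutely free). By (Id) we have $p \mathrel{N_\mathcal{R}} p$, hence $p \in p^{\rhd}$ and so $\gamma(\{p\}) \subseteq p^{\lhd}$; together with $p \in \gamma(\{p\})$ this gives $v(p) \in F(p)$. A routine induction on the structure of a term $t$, using the three defining inclusions of a quasimorphism from Lemma~\ref{lemma:quasi}(1) (namely $c_{(\m{W}_\mathcal{R})^+} \in F(c)$ for $c \in \{0,1\}$, $F(a)^{*} \subseteq F(a^{*})$, and $F(a) \bullet F(b) \subseteq F(a \bullet b)$ for $\bullet \in \{\meet,\join,\cdot,\under,\ovr\}$), then yields $v(t) \in F(t)$ for every term $t$; by the description $F(t) = \{X \in W^+ \mid t \in X \subseteq t^{\lhd}\}$ this means $t \in v(t) \subseteq t^{\lhd}$.

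Now suppose $\ALstar + Q_a(\mathcal{R}) \models \varepsilon(S)$, where $S = (\gamma_1,\ldots,\gamma_m \Rightarrow \beta)$ and $\varepsilon(S) = \gamma_1 \cdots \gamma_m \leq \beta$. Since $(\m{W}_\mathcal{R})^+$ lies in $\ALstar + Q_a(\mathcal{R})$, the inequality $\varepsilon(S)$ holds in $(\m{W}_\mathcal{R})^+$ under $v$, i.e., $v(\gamma_1) \circ_\gamma \cdots \circ_\gamma v(\gamma_m) \subseteq v(\beta)$ (recalling that the lattice order of the dual algebra is $\subseteq$, and that for $m = 0$ the empty product is $\gamma(\{\epsilon\})$). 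From $\gamma_i \in v(\gamma_i)$, together with $\gamma$ being a closure operator and $\circ_\gamma$ being $\gamma$ applied to $\circ$, one obtains $(\gamma_1,\ldots,\gamma_m) = \gamma_1 \circ \cdots \circ \gamma_m \in v(\gamma_1) \circ \cdots \circ v(\gamma_m) \subseteq v(\gamma_1) \circ_\gamma \cdots \circ_\gamma v(\gamma_m)$ (and $\epsilon \in \gamma(\{\epsilon\})$ when $m = 0$). Hence $(\gamma_1,\ldots,\gamma_m) \in v(\beta) \subseteq \beta^{\lhd}$, which, identifying $\beta$ with $(\epsilon,\epsilon,\beta) \in W'$, says precisely $\gentzenContinuousActionLatticeOmega + \mathcal{R} \vdash \gamma_1,\ldots,\gamma_m \Rightarrow \beta$, i.e.\ $\gentzenContinuousActionLatticeOmega + \mathcal{R} \vdash S$.

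There is no genuine obstacle here: all the substantive work has already been carried out (that $(\m{W}_\mathcal{R})^+$ is a $*$-continuous action lattice, that it satisfies $Q_a(\mathcal{R})$, and that $F$ is a quasimorphism covering the constant $0$ and the star). The only points demanding a little care are the quasimorphism induction through the Kleene star — where one uses that the $*$-operation of $(\m{W}_\mathcal{R})^+$ on a single element $X$ is $X^{*_\gamma}$, so that $\{v(s)\}^{*} = \{v(s)^{*_\gamma}\}$ — and the degenerate case $m = 0$, where the antecedent product collapses to $1$ and $v(1) = \gamma(\{\epsilon\}) \ni \epsilon$.
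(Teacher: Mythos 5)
Your proposal is correct and follows essentially the same route as the paper: evaluate in the dual algebra $(\m{W}_\mathcal{R})^+$ of the canonical cut-free $*$-Gentzen frame via the valuation $v(x)=\gamma(\{x\})$, use Lemma~\ref{lemma:quasi}(1) inductively to get $\alpha\in v(\alpha)\subseteq\alpha^{\lhd}$, apply Corollary~\ref{cor:dual-of-proof-system-satisfies-qe} to place the algebra in $\AL^*+Q_a(\mathcal{R})$, and read off provability from membership in $\beta^{\lhd}$. The additional details you supply (the (Id) step for variables, the empty-antecedent case) are consistent with, and slightly more explicit than, the paper's argument.
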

\begin{proof}
Let  $v \colon \term(\mathcal{L}_{\AL}) \to (\m{W}_\mathcal{R})^+$ be the homomorphism induced by $v(x) = \gamma(\{x\})$ and homomorphically extended to all terms. Then we have $x \in v(x) \subseteq x^{\lhd}$ for variables and applying Lemma~\ref{lemma:quasi}~1 inducitvely we get that $\alpha \in v(\alpha) \subseteq \alpha^{\lhd}$ for each term $\alpha \in  \term(\mathcal{L}_{\AL})$. Now let $\alpha_0,\dots, \alpha_n \Rightarrow \beta$  be a sequent such that
\(
\AL^* + Q_a(\mathcal{R}) \models \alpha_0\cdots \alpha_n \leq \beta
\).
Then, by Corollary~\ref{cor:dual-of-proof-system-satisfies-qe}, we have $(\m{W}_\mathcal{R})^+ \models  \alpha_0\cdots \alpha_n \leq \beta$, yielding 
\[
\alpha_0 \circ \dots \circ \alpha_n  \in v(\alpha_0)\circ_\gamma \cdots \circ_\gamma v(\alpha_n) \subseteq v(\beta) \subseteq \beta^{\lhd}.
\]
Hence $\gentzenContinuousActionLatticeOmega + \mathcal{R} \vdash \alpha_0,\dots, \alpha_n \Rightarrow \beta$.
\end{proof}
Soundness and Theorem~\ref{t:equiv-of-systems} yields:
\begin{theorem}\label{thm:main}
Let $\mathcal{R}$ be a set of analytic rules. Then  $\gentzenContinuousActionLatticeOmega+\mathcal{R}$  and $\gentzenContinuousActionLattice+\mathcal{R}$  are sound and complete with respect to the class of $*$-continuous action lattices axiomatized by $Q_a(\mathcal{R})$.
\end{theorem}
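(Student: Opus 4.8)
The plan is to obtain the theorem by combining three ingredients, two of which are already in hand. \emph{Completeness} of $\gentzenContinuousActionLatticeOmega + \mathcal{R}$ with respect to $\AL^* + Q_a(\mathcal{R})$ is exactly the content of the preceding corollary: a sequent $S$ valid in $\AL^* + Q_a(\mathcal{R})$ is interpreted in the dual algebra $(\m{W}_\mathcal{R})^+$ of the cut-free $*$-Gentzen frame $\m{W}_\mathcal{R}$, Corollary~\ref{cor:dual-of-proof-system-satisfies-qe} places that algebra in $\AL^* + Q_a(\mathcal{R})$, and a derivation of $S$ is read off the defining relation $N_\mathcal{R}$. \emph{Equivalence of the two calculi} is Theorem~\ref{t:equiv-of-systems}: $\gentzenContinuousActionLatticeOmega + \mathcal{R}$ and $\gentzenContinuousActionLattice + \mathcal{R}$ prove exactly the same sequents. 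So the only genuinely remaining step is \emph{soundness} of $\gentzenContinuousActionLatticeOmega + \mathcal{R}$, after which the equivalence transports both soundness and completeness to the non-wellfounded system.

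To prove soundness I would fix an algebra $\m{A} \in \AL^* + Q_a(\mathcal{R})$ and a valuation $v$ of variables into $\m{A}$, extended homomorphically to $\term(\mathcal{L}_{\AL})$, and show by transfinite induction on the ordinal height of a wellfounded proof that every derivable sequent $\Gamma \Rightarrow \beta$ satisfies $v(\Gamma) \leq v(\beta)$, where $v(\Gamma)$ abbreviates the $\cdot$-product of the values of the formulas in $\Gamma$ (the empty product being $1$). For the identity axiom and for every logical rule of Figure~\ref{fig:common-rules} other than $*L_\omega$ this is the routine verification for an extension of the Full Lambek calculus by the Kleene-algebra axioms, carried out via residuation in $\m{A}$. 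The rule $*L_\omega$ is the sole place where $*$-continuity is used: from $v(\Gamma)\,v(\alpha)^n\,v(\Delta) \leq v(\beta)$ for all $n \in \NN$ one gets $v(\Gamma)\,\bigl(\bigvee_n v(\alpha)^n\bigr)\,v(\Delta) \leq v(\beta)$ since $\cdot$ distributes over existing suprema on both sides, and $\bigvee_n v(\alpha)^n = v(\alpha)^* = v(\alpha^*)$ precisely because $\m{A} \in \AL^*$. Finally, for an (analytic) rule $R \in \mathcal{R}$, an instance has conclusion $\Gamma, \Upsilon, \Delta \Rightarrow \beta$ and premises $\Gamma, \Upsilon_i, \Delta \Rightarrow \beta$ with each $\Upsilon_i$ a word in the sequence-metavariables of $\Upsilon$; setting $\beta' := v(\Gamma) \lresidual v(\beta) \rresidual v(\Delta)$, the induction hypothesis gives $v(\Upsilon_i) \leq \beta'$ for all $i$, so the analytic quasiequation $q_a(R) \in Q_a(\mathcal{R})$ (which holds in $\m{A}$) yields $v(\Upsilon) \leq \beta'$, i.e.\ the conclusion. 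Here one uses, as recorded in Section~\ref{sec:proof systems}, that over $\AL^*$ the quasiequations $Q(\mathcal{R})$ and $Q_a(\mathcal{R})$ cut out the same class when $\mathcal{R}$ is analytic, so it is harmless to forget the context.

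Assembling the pieces: for any sequent $S$ we have $\gentzenContinuousActionLattice + \mathcal{R} \vdash S$ iff $\gentzenContinuousActionLatticeOmega + \mathcal{R} \vdash S$ by Theorem~\ref{t:equiv-of-systems}, and the latter holds iff $\AL^* + Q_a(\mathcal{R}) \models \varepsilon(S)$ (soundness for the forward direction, the preceding corollary for the converse), which is the assertion of the theorem. Since the constituent results are already available, the theorem itself is short; the conceptual point worth flagging — and the reason the argument routes through the wellfounded calculus — is that soundness is proved only for $\gentzenContinuousActionLatticeOmega + \mathcal{R}$, where ordinal induction is available, and then transported to $\gentzenContinuousActionLattice + \mathcal{R}$ via the back-and-forth translation $\Om$/$\Projection$ of Section~\ref{sec:proof systems}. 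A direct soundness proof for the non-wellfounded calculus, which would have to turn the progressing-thread branch condition into an algebraic inequality without passing through the translation, is not attempted here and, as the authors note, remains open.
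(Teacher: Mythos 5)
Your proposal is correct and follows essentially the same route as the paper: the preceding corollary gives completeness of $\gentzenContinuousActionLatticeOmega+\mathcal{R}$, soundness of the wellfounded system is established by transfinite induction on ordinal height (with $*$-continuity used exactly at $*\mathrm{L}_\omega$ and the analytic quasiequations $Q_a(\mathcal{R})$ handling the rules in $\mathcal{R}$ via residuation into the context), and Theorem~\ref{t:equiv-of-systems} transports both properties to $\gentzenContinuousActionLattice+\mathcal{R}$. The only difference is that you spell out the soundness induction in detail, which the paper leaves as a remark.
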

In particular, since Cut is sound for any $*$-continuous action lattice, we obtain analyticity:

\begin{corollary}
Let $\mathcal{R}$ be a set of analytic rules. Then the systems  $\gentzenContinuousActionLatticeOmega+\mathcal{R}$  and $\gentzenContinuousActionLattice+\mathcal{R}$ have cut-elimination.
\end{corollary}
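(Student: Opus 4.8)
The plan is to obtain cut-elimination as a direct consequence of soundness, completeness, and the system equivalence already established, exploiting two features of the \(\cut\) rule: it is a \emph{linear} structural rule, so it may legitimately be adjoined to any of our proof systems; and it preserves validity of sequents over every ordered monoid, since \(v(\Delta)\leq v(\alpha)\) and \(v(\Gamma)\,v(\alpha)\,v(\Pi)\leq v(\beta)\) together yield \(v(\Gamma)\,v(\Delta)\,v(\Pi)\leq v(\Gamma)\,v(\alpha)\,v(\Pi)\leq v(\beta)\), hence in particular over every \(*\)-continuous action lattice. Recall that ``cut-elimination'' for a system \(\mathcal{G}\) means that \(\mathcal{G}+\cut\) and \(\mathcal{G}\) prove the same sequents; one inclusion is trivial, so it suffices to show, for \(\mathcal{G}\in\{\gentzenContinuousActionLatticeOmega+\mathcal{R},\,\gentzenContinuousActionLattice+\mathcal{R}\}\), that \(\mathcal{G}+\cut\vdash S\) implies \(\mathcal{G}\vdash S\).

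First I treat the wellfounded system. Since \(\cut\) is linear, \(\mathcal{R}\cup\{\cut\}\) is a set of linear rules, so \(\gentzenContinuousActionLatticeOmega+\mathcal{R}+\cut\) is a bona fide wellfounded proof system, and the same ordinal-height induction that underlies the soundness part of Theorem~\ref{thm:main} shows it sound with respect to \(\AL^*+Q_a(\mathcal{R})\): all rules of \(\gentzenContinuousActionLatticeOmega+\mathcal{R}\) preserve validity over this class (with \(*\mathrm{L}_\omega\) handled, as before, via \(*\)-continuity of \(\cdot\)), and \(\cut\) does too by the inequality above, its premises having strictly smaller ordinal height. Consequently, if \(\gentzenContinuousActionLatticeOmega+\mathcal{R}+\cut\vdash S\), then \(\AL^*+Q_a(\mathcal{R})\models\varepsilon(S)\), and the completeness part of Theorem~\ref{thm:main} yields \(\gentzenContinuousActionLatticeOmega+\mathcal{R}\vdash S\).

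For the non-wellfounded system I transfer the statement along Theorem~\ref{t:equiv-of-systems}, which applies to \emph{any} set of linear rules. Suppose \(\gentzenContinuousActionLattice+\mathcal{R}+\cut\vdash S\). Applying Theorem~\ref{t:equiv-of-systems} to the linear rule set \(\mathcal{R}\cup\{\cut\}\) gives \(\gentzenContinuousActionLatticeOmega+\mathcal{R}+\cut\vdash S\); by the previous paragraph \(\gentzenContinuousActionLatticeOmega+\mathcal{R}\vdash S\); and a final application of Theorem~\ref{t:equiv-of-systems} to \(\mathcal{R}\) (analytic rules are linear) gives \(\gentzenContinuousActionLattice+\mathcal{R}\vdash S\), as desired.

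I do not anticipate a genuine obstacle: the substance is entirely contained in Theorems~\ref{thm:main} and~\ref{t:equiv-of-systems}, and what remains is bookkeeping. The only point that wants a moment's attention is that soundness of the wellfounded system really does survive the addition of \(\cut\) — which is immediate, since a proof ending in \(\cut\) has immediate subproofs of strictly smaller ordinal height and \(\cut\) preserves validity in every ordered monoid, so the existing induction goes through verbatim. As an alternative to re-running the soundness induction, one may observe that \(q(\cut) = \bigl((x\leq y)\mathbin{\&}(zyw\leq u)\implies zxw\leq u\bigr)\) holds in every residuated lattice, whence \(\AL^*+Q(\mathcal{R}\cup\{\cut\})=\AL^*+Q_a(\mathcal{R})\), and then invoke the soundness statement for linear-rule extensions of \(\gentzenContinuousActionLatticeOmega\) recorded in Section~\ref{sec:proof systems} directly.
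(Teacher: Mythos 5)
Your proposal is correct and follows essentially the same route as the paper, which derives the corollary in one line from the observation that Cut is sound for every $*$-continuous action lattice together with the soundness and completeness of Theorem~\ref{thm:main} (and, implicitly, Theorem~\ref{t:equiv-of-systems} applied to the linear rule set $\mathcal{R}\cup\{\mathrm{Cut}\}$ for the non-wellfounded system). You merely make explicit the bookkeeping the paper leaves implicit.
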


\subsection{MacNeille completions}
There is an emerging literature on the connection between the analyticity of proof systems and the closure of classes of corresponding algebraic models under various kinds of completions of ordered algebraic structures (see \cite{CGPT2022}), such as canonical completions (see \cite{FGGM2021}). The techniques utilized here---using residuated frames---connect analyticity to closure under MacNeille completion, and as an application of the previous results we obtain closure under MacNeille completions for classes of   $*$-continuous action lattices axiomatized relative to the class $\AL^*$ by  analytic quasiequations.

\begin{theorem}\label{cor:MacNeille}
Let $\mathsf{K}$ be the class of $*$-continuous action lattices axiomatized relative to the class $\AL^*$ by a set of analytic quasiequations $Q$. Then $\mathsf{K}$ is closed under MacNeille completions.
\end{theorem}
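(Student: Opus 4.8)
The plan is to realize the MacNeille completion of any $\m{A} \in \mathsf{K}$ as the dual algebra of a $*$-Gentzen frame built directly from $\m{A}$, and then to pull the quasiequations $Q$ back through that frame. Given $\m{A} = (A,\meet,\join,\cdot,\under,\ovr,{}^*,0,1) \in \mathsf{K}$, I would take the residuated frame $\m{W}_{\m{A}} := (A, A, {\leq}, \cdot, 1)$, where $\leq$ and $(\cdot,1)$ are the lattice order and monoid reduct of $\m{A}$; this is the frame of \cite{GJ2013}. Here $N := {\leq}$ is nuclear, with $x \Lresidual z = x \under z$ and $z \Rresidual y = z \ovr y$, and $(W,\circ,\epsilon) = (A,\cdot,1)$ is trivially generated by $A$. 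Taking the identity injections $A \hookrightarrow W$ and $A \hookrightarrow W'$, one checks that $(\m{W}_{\m{A}},\m{A})$ is a $*$-Gentzen frame: the residuated-lattice Gentzen rules together with (Id) and (Cut) are immediate from reflexivity, transitivity and residuation of $\leq$; (1L), (1R), (0L), ($*$R$_0$) and ($*$R$_1$) are routine (e.g.\ ($*$R$_1$) uses $a a^* \leq a^*$); and the one substantive point is ($*$L): from $a^{(n)} \mathrel{N} z$ for all $n \in \NN$ infer $a^* \mathrel{N} z$. Since $a^{(n)} = a^n$ in $\m{A}$, this says exactly that $a^n \leq z$ for all $n$ forces $a^* \leq z$, which is the $*$-continuity $a^* = \bigvee_n a^n$ of $\m{A}$. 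This is the sole place the hypothesis $\m{A} \in \ALstar$ enters.

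By the Proposition that the dual algebra of a $*$-Gentzen frame is a $*$-continuous action lattice, $(\m{W}_{\m{A}})^+ \in \ALstar$; and by Lemma~\ref{lemma:quasi}(2) -- using that $N = {\leq}$ is antisymmetric -- the map $f(a) = a^{\lhd}$ is an embedding of $\m{A}$ into $(\m{W}_{\m{A}})^+$ which is a homomorphism for the \emph{full} signature, in particular for ${}^*$ and $0$. A short computation yields $f(a) = {\downarrow}a$ and shows that every Galois-closed $X \in W^+$ (that is, every cut) equals both $\bigvee\{f(a) \mid a \in X\}$ and $\bigcap\{f(a) \mid X \subseteq f(a)\}$; hence $f[A]$ is join-dense and meet-dense in $(\m{W}_{\m{A}})^+$, so $(\m{W}_{\m{A}})^+$ is, canonically over $\m{A}$, the MacNeille completion of $\m{A}$. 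Moreover, since $f$ is a homomorphism and the star of the dual algebra is $X \mapsto \bigvee_n X^n$, this star restricts correctly along $f$ and coincides with the star forced on any complete expansion of $\m{A}$ by Pratt's normality theorem, so no ambiguity about ``the'' MacNeille completion of $\m{A}$ as an action lattice arises.

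It remains to check $(\m{W}_{\m{A}})^+ \in \ALstar + Q = \mathsf{K}$; as membership in $\ALstar$ is already in hand, it suffices to fix an analytic quasiequation $q = (\alpha_1 \leq \beta \mathbin{\&} \dots \mathbin{\&} \alpha_n \leq \beta \implies \alpha_0 \leq \beta) \in Q$ and show $(\m{W}_{\m{A}})^+ \models q$. Valuations into the monoid reduct $(W,\circ,\epsilon) = (A,\cdot,1)$ are exactly valuations into $\m{A}$, and since $N = {\leq}$, the frame $\m{W}_{\m{A}}$ satisfies $q$ (in the frame sense of the previous section) iff $\m{A} \models q$; the latter holds because $\m{A} \in \mathsf{K}$. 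Theorem~\ref{theorem:frame-to-dual-equation} then delivers $(\m{W}_{\m{A}})^+ \models q$. As $q \in Q$ was arbitrary, the MacNeille completion $(\m{W}_{\m{A}})^+$ of $\m{A}$ lies in $\mathsf{K}$, establishing the claim.

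I expect the main obstacle to be not the transport of $Q$ -- which is exactly Theorem~\ref{theorem:frame-to-dual-equation} -- but the identification of $(\m{W}_{\m{A}})^+$ with the MacNeille completion of $\m{A}$ \emph{as an action lattice}: beyond the density computations, one must confirm that the Kleene star of the completion is simultaneously an extension of $\m{A}$'s star and the canonical $*$-continuous star dictated by Pratt normality. It is there, together with the rule ($*$L) above, that $*$-continuity is indispensable; granting these points, the remainder is a routine instance of the residuated-frame machinery developed earlier in the paper.
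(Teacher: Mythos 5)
Your proposal is correct and follows essentially the same route as the paper: the same frame $\m{W}_{\m{A}} = (A,A,\leq,\cdot,1)$, verification of the $*$-Gentzen rules with ($*$L) supplied by $*$-continuity, the embedding $a \mapsto a^{\lhd}$ from Lemma~\ref{lemma:quasi}(2), identification of $(\m{W}_{\m{A}})^+$ with the MacNeille completion, and transport of $Q$ via Theorem~\ref{theorem:frame-to-dual-equation}. The only difference is that you spell out the join/meet-density argument and the compatibility of the star on the completion, which the paper simply delegates to \cite{GJ2013}.
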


\begin{proof}
For each continuous action algebra $\m{A} \in \mathsf{K}$ consider the frame $\m{W}_\m{A} = (A,A,\leq, \cdot, 1,0 )$, then the  $\{{}^*,0\}$-free reduct of $(\m{W}_{A},\m{A})$ is a Gentzen frame, noting that $W = W' = A$ (see \cite{GJ2013}). It is also a $*$-Gentzen frame, since for all $a,b,c\in A$,
\begin{align*}
a\leq 0\implies a = 0 \leq b,\\
a^n \leq b \text{ for all } n\in \NN \implies a^* = \bigvee_{n\in \NN} a^n \leq b, \\
1 \leq a^*, \\
b\leq a \text{ and } c \leq a^* \implies bc \leq aa^* = a^*.
\end{align*}
Thus, by Lemma~\ref{lemma:quasi},  the map $f\colon \m{A} \to (\m{W}_\m{A})^+$, $f(a) = a^{\lhd}$ is an embedding. Moreover, $(\m{W}_\m{A})^+$ is the MacNeille completion of $\m{A}$ (see \cite{GJ2013}) and clearly  $\m{A}$ satisfies an analytic quasiequation if and only if  $\m{W}_\m{A}$ satisfies it. So $\m{W}_{A}$ satisfies every quasiequation in $Q$ and, by Theorem~\ref{theorem:frame-to-dual-equation}, $(\m{W}_\m{A})^+$ satisfies every quasiequation in $Q$ and thus $(\m{W}_\m{A})^+ \in \mathsf{K}$.
\end{proof}

\section{Conclusion}

In this paper, we have illustrated a uniform method for the producing cut-free proof systems for classes of $*$-continuous action lattices that are defined relative to all $*$-continuous action lattices by analytic quasiequations. The production of such cut-free proof systems is a necessary precondition for realizing the benefits typical of proof-theoretic methods, such as the effective computation of interpolants. The latter has recently been carried out in the context of certain cyclic proof systems (see e.g. \cite{ALM2021}), and the present work could form the basis for extending methods for interpolation to non-wellfounded proof systems.

An important aspect of the latter is that we have not, in the present work, carried out our investigation solely in the environment of non-wellfounded proof systems. While the algebraic methods employing residuated frames apply without modification to the non-wellfounded environment, the soundness of the non-wellfounded proof systems with respect to the algebraic semantics discussed in this paper have relied on a translation to wellfounded systems. We do not know whether it is possible to address soundness without resorting to such a translation.

The natural next step in extending the present work is to drop the requirement of $*$-continuity. This is not an easy matter: Semantic methods based on lattice completions, as the residuated frame method, apparently demand some delicate modification to apply beyond the $*$-continuous case, if such methods may be applied at all. Successfully reproducing these methods in beyond the $*$-continuous context would represent a significant milestone both for proof-theoretic treatments of Kleene algebra as well as in algebraic proof theory.

\bibliographystyle{plain}

\end{document}